\documentclass[conference]{IEEEtran}
\IEEEoverridecommandlockouts

\newcommand{\vlong}[1]{#1}
\newcommand{\vshort}[1]{}

\usepackage[sort,compress]{cite}

\usepackage{amsmath,amssymb,amsfonts}
\usepackage{algorithmic}
\usepackage{graphicx}
\usepackage{textcomp}
\usepackage{xcolor}

\usepackage{soul}
\usepackage{tikz-cd}
\usepackage{xspace}
\usepackage[normalem]{ulem}
\usepackage{xcolor}
\usepackage{perfectcut}
\let\cut\perfectcut
\usepackage[final,inline,nomargin]{fixme}
\fxusetheme{color}
\fxuseenvlayout{color}
\FXRegisterAuthor{lc}{alc}{\color{purple}[L]}
\FXRegisterAuthor{ag}{aag}{\color{teal}[A]}
\FXRegisterAuthor{dk}{adk}{\color{magenta}[D]}
\FXRegisterAuthor{em}{aem}{\color{blue}[E]}
\FXRegisterAuthor{rt}{art}{\color{orange}[R]}
\FXRegisterAuthor{rev}{arev}{\color{red}[Reviewer]}

 \usepackage[switch]{lineno}

\RequirePackage{xparse}

\usepackage[colorinlistoftodos,prependcaption,textsize=scriptsize,textwidth=2cm]{todonotes}
\setlength {\marginparwidth }{2cm}

\usepackage[inline]{enumitem}
\usepackage{multicol}
\usepackage{mathpartir}
\usepackage{marvosym}
\usepackage{float}
\usepackage{mathbbol}
\usepackage{mathtools}
\usepackage{mathrsfs}
\usepackage{amsmath}
\usepackage{amsthm}
\usepackage{thmtools}
\usepackage{thm-restate}
\usepackage[notext,nomath]{stix}
\usepackage{stmaryrd}
\usepackage{arydshln} 
\usepackage{proof}
\usepackage{hyperref}
\usepackage{cleveref}
\usepackage{wrapfig}
\usepackage{stix}
\usepackage{array}
\usepackage{subcaption}
\usepackage{booktabs}
\usepackage{upgreek}
\usepackage{empheq}
\usepackage{perfectcut}
\let\cut\perfectcut
\usepackage{flafter} 
\usetikzlibrary{matrix, positioning, fit, arrows, chains,shapes}



\usepackage{mathtools}
\usepackage[bb=boondox]{mathalfa}
\usepackage{proof}
\hypersetup{colorlinks=true}

\newtheorem{theorem}{Theorem}
\newtheorem{proposition}[theorem]{Proposition}
\newtheorem{lemma}[theorem]{Lemma}
\newtheorem{corollary}[theorem]{Corollary}

\newtheorem{property}[theorem]{Property}
\theoremstyle{definition}
\newtheorem{definition}[theorem]{Definition}
\newtheorem{remark}[theorem]{Remark}
\newtheorem{example}[theorem]{Example}

\Crefname{equation}{Eq.}{Eqs.}
\Crefname{figure}{Fig.}{Figs.}
\Crefname{tabular}{Tab.}{Tabs.}
\Crefname{section}{Sec.}{Secs.}
\Crefname{definition}{Def.}{Defs.}
\Crefname{defi}{Def.}{Defs.}
\Crefname{lemma}{Lem.}{Lems.}
\Crefname{lem}{Lem.}{Lems.}
\Crefname{theorem}{Thm.}{Thms.}
\Crefname{thm}{Thm.}{Thms.}
\Crefname{paragraph}{Sec.}{Secs.}
\Crefname{appendix}{Appx.}{Appxs.}
\Crefname{corollary}{Cor.}{Cors.}
\Crefname{example}{Ex.}{Exs.}
\Crefname{proposition}{Prop.}{Props.}
\Crefname{remark}{Rem.}{Rems.}

\let\varmathbb\mathbb

\newcommand{\coqdoc}[1]{\href{\BaseUrl#1}{\raisebox{-.9mm}{\includegraphics[height=1em]{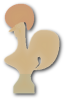}}}}

\DeclareMathAlphabet{\sys}{OT1}{cmss}{sbc}{n}

\newcommand{\hopl}[0]{\sys{EffHOL}}
\newcommand{\hoplinst}{\hopl^{\!\mathbf{-}}}
\newcommand{\hoplk}[0]{\hopl^{\!\neg\neg}}
\newcommand{\effhol}[0]{\hopl}
\newcommand{\hol}[0]{\sys{HOL}}
\newcommand{\ihol}[0]{\hol}
\newcommand{\fomega}{F_\omega}
\newcommand{\feffhol}[0]{Effectful Higher-Order Logic}

\newcommand{\trkind }[1]{\left\llbracket #1 \right\rrbracket^{\mathrm{K}}}

\newcommand{\trtype }[2]{\left\llbracket #2 \right\rrbracket_{#1}^{\mathrm{T}}}

\newcommand{\trspec }[2]{\left\llbracket #2 \right\rrbracket_{#1}^{\mathrm{S}}}

\newcommand{\trind  }[2]{\left\llbracket #2 \right\rrbracket_{#1}^{\mathrm{I}}}

\newcommand{\trform }[1]{\left\llbracket #1 \right\rrbracket}


\newcommand{\ttrkind }[1]{\left\llbracket #1 \right\rrbracket^{\mathrm{K}}}

\newcommand{\ttrtype }[2]{\left\llbracket #2 \right\rrbracket^{\mathrm{T}}}

\newcommand{\ttretype}[3]{\left\llbracket #2 \right\rrbracket^{\mathrm{t}}}
\newcommand{\ttrtrm  }[3]{\left\llbracket #2 \right\rrbracket^{\mathrm{e}}}
\newcommand{\ttrspec }[3]{\left\llbracket #2 \right\rrbracket^{\mathrm{S}}_{#3}}
\newcommand{\ttrind  }[2]{\left\llbracket #2 \right\rrbracket^{\mathrm{I}}_{#1}}

\newcommand{\pred}[1]{\mathsf{P}\left( #1 \right)}

\newcommand{\refpred}[2]{\mathsf{R}_{#1}\left( #2 \right)}
\newcommand{\refpredN}[1]{\mathsf{R}_{#1}}

\newcommand{\eltype}[2]{#1  \vdash  #2}

\newcommand{\eltrm}[3]{#1  \vdash  #2 : #3}

\newcommand{\elprop}[2]{#1  \vdash  #2 }

\newcommand{\sequent}[3]{#1  \vdash  #2 \, \Rightarrow \, #3 }
\newcommand{\bisequent}[3]{#1  \mid  #2 \, \dashv \vdash \, #3 }
\newcommand{\elsequent}[3]{#1  \vdash \! #2  \Rightarrow  #3 }
\newcommand{\selsequent}[3]{#1  \vdash \! #2  \Rrightarrow  #3 }

\newcommand{\etriple}[3]{  \left\{ #1 \right\} #2 \left\{ #3 \right\} }
\newcommand{\ehtriple}[4]{ \etriple{#1}{#2 \longleftarrow #3}{#4} }

\newcommand{\triple}[4]{ #1 \vdash \etriple{#2}{#3}{#4} }
\newcommand{\htriple}[5]{ \triple{#1}{#2}{#3 \longleftarrow #4}{#5} }

\newcommand{\after}[3]{\left\langle #2 \leftarrow #1 \right\rangle \, #3 }

\newcommand{\subst}[1]{\left[ #1 \right]}


\newcommand{\comprehend}[3]{ \left\{ #1 : #2 \; \middle| \; #3 \right\} }

\newcommand{\comp}[2]{ \left\{ #1 \; \middle| \; #2 \right\} }

\newcommand{\tmem}[3]{ #1 ; {#3} \inplus {#2} }
\newcommand{\tmembase}[2]{ #1 \inplus #2}
\newcommand{\smem}[2]{ #1 \!\in\! #2 }
\newcommand{\smembase}[1]{\overline{#1}}
\newcommand{\compbase}[1]{ \left\{ #1 \right\} }

\newcommand{\jprop}[2]{#1 \mid #2 \;\; \mathbf{Prop}}
\newcommand{\jtrm}[3]{#1 \mid #2 : #3}

\newcommand{\myskip}{1.2em}
\newcommand{\sskip}{0.8em}

\newcommand\STAR{\raisebox{-.1em}{\tikz{\node[scale=0.25,draw,star,star point height=.7em,minimum size=0.1em]{};} }}

\newcommand{\eqdef}{\triangleq}

\newcommand{\typecon}[1]{#1 \rightarrowtail \star}
\newcommand{\predcon}[1]{#1 \rightarrowtail \STAR}
\newcommand{\typefun}[2]{#1 \! \rightarrow \! #2}
\newcommand{\typeapp}[2]{#1 \,#2}
\newcommand{\typevar}{X}
\newcommand{\typecomp}[1]{M\left(#1\right)}
\newcommand{\typeabs}[3]{\bar{\Lambda}_{ #1 : #2 }. #3}

\newcommand{\indice}{\sigma}
\newcommand{\spred}{u}
\newcommand{\tpred}{y}
\newcommand{\epred}{y}
\newcommand{\sprop}{\psi}
\newcommand{\tprop}{\varphi}
\newcommand{\sprops}{\Psi}
\newcommand{\tprops}{\Phi}
\newcommand{\sort}{s}
\newcommand{\kind}{\kappa}
\newcommand{\term}{p}
\newcommand{\exprs}{e}
\newcommand{\exprsvar}{y}
\newcommand{\eforall}[2]{\Uplambda_{#1}.#2}
\newcommand{\eapp}[2]{{#1}\,#2}

\newcommand{\sterm}{t}
\newcommand{\termvar}{x}
\newcommand{\termtypeabs}[3]{\Lambda #1 \!:\!#2 . #3}
\newcommand{\termabs}[3]{\lambda #1\!:\!#2 . #3}
\newcommand{\termtypeapp}[2]{#1 \, #2 }
\newcommand{\termapp}[2]{#1 \, #2}
\newcommand{\termret}[1]{\left[#1\right]}
\newcommand{\termbind}[3]{\mathsf{let} \: #1 \leftarrow #2 \: \mathsf{in} \: #3}

\newcommand{\timplies}[2]{#1 \supset #2}

\newcommand{\simplies}[2]{#1 \sqsupset #2}
\newcommand{\sconj}[2]{#1 \sqcap #2}
\newcommand{\tindprod}[2]{\bigwedge_{#1}.#2}
\newcommand{\ttypprod}[2]{\prod_{#1}.#2}
\newcommand{\sforall}[2]{\forall_{#1}.{#2}}
\newcommand{\sexists}[2]{\exists_{#1}.{#2}}
\newcommand{\tspecindprod}[2]{\forall_{#1}.{#2}}
\newcommand{\tspectypprod}[2]{\sqcap_{#1}.{#2}}
\newcommand{\tspeckinprod}[2]{\cap_{#1}.{#2}}
\newcommand{\type}{\tau}
\newcommand{\scontext}{\mathbb{S}}
\newcommand{\kcontext}{\mathbb{K}}
\newcommand{\tcontext}{\mathbb{T}}
\newcommand{\icontext}{\mathbb{I}}

\newcommand{\context}{\kcontext \!\mid \!\icontext \!\mid\! \tcontext}





\global\long\def\tcomp#1#2#3#4#5{{\left\{  \right.\negthickspace\negthickspace\negthickspace\circ}\,
#1 : #2 \; ; #3 : #4 \mid \; #5
\,{\circ\negthickspace\negthickspace\negthickspace\left.\right\}  }}

\global\long\def\tcompbase#1#2#3{{\left\{  \right.\negthickspace\negthickspace\negthickspace\circ}\,
#1 : #2 \mid \; #3
\,{\circ\negthickspace\negthickspace\negthickspace\left.\right\}  }}

\newcommand{\ecomp}[5]{{\left\{  \right.\negthickspace\negthickspace\negthickspace\circ}\,
#1 : #2 \; ; #3 : #4 \! \mid  #5
\,{\circ\negthickspace\negthickspace\negthickspace\left.\right\}  }}

\newcommand{\ecompbase}[3]{{\left\{  
\right.\negthickspace\negthickspace\negthickspace\circ}\,
#1 : #2 \!\mid  #3
\,{\circ\negthickspace\negthickspace\negthickspace\left.\right\}  }}

\newcommand{\scompbase}[3]{{\left\{  #1 \in #2 \mid \; #3\right\}  }}

\newcommand{\textrule}[1]{\texttt{#1}}
\newcommand{\elimrule }[1]{{#1}\textrule{-E}}
\newcommand{\introrule}[1]{{#1}\textrule{-I}}
\newcommand{\autorule}[1]{\relax\ifmmode{\scriptstyle({#1})}\else$(#1)$\fi}

\newcommand{\modelimrule }{\autorule{\elimrule{\textrule{Mod}}}}
\newcommand{\modintrorule}{\autorule{\introrule{\textrule{Mod}}}}
\newcommand{\impelimrule }{\autorule{\elimrule{\textrule{Imp}}}}
\newcommand{\impintrorule}{\autorule{\introrule{\textrule{Imp}}}}
\newcommand{\progunielimrule }{\autorule{\elimrule{\textrule{UniProg.}}}}
\newcommand{\proguniintrorule}{\autorule{\introrule{\textrule{UniProg.}}}}
\newcommand{\expunielimrule }{\autorule{\elimrule{\textrule{UniExp.}}}}
\newcommand{\expuniintrorule}{\autorule{\introrule{\textrule{UniExp.}}}}
\newcommand{\expshortunielimrule }{\autorule{\elimrule{\textrule{UniExp.}}}}

\newcommand{\typeunielimrule }{\autorule{\elimrule{\textrule{UniType}}}}
\newcommand{\typeuniintrorule}{\autorule{\introrule{\textrule{UniType}}}}
\newcommand{\memelimrule }{\autorule{\elimrule{\textrule{Mem}}}}
\newcommand{\memintrorule}{\autorule{\introrule{\textrule{Mem}}}}
\newcommand{\memunelimrule }{\autorule{\elimrule{\textrule{Mem}_0}}}
\newcommand{\memunintrorule}{\autorule{\introrule{\textrule{Mem}_0}}}
\newcommand{\idrule}{\autorule{\textrule{Id}}}
\newcommand{\monrule}{\autorule{\textrule{Mon}}}
\newcommand{\monshortrule}{\autorule{\textrule{Mon.}}}
\newcommand{\antiredtermrule}{\autorule{\textrule{$\rightsquigarrow$}}}
\newcommand{\convrule}{\autorule{\textrule{$\conv$}}}

\newcommand{\holunielimrule }{\autorule{\elimrule{\textrule{Uni}}}}
\newcommand{\holuniintrorule}{\autorule{\introrule{\textrule{Uni}}}}

\newcommand{\reduction}{\rightsquigarrow}
\newcommand{\betared}{\rightsquigarrow_\beta}

\newcommand{\conv}{\equiv}
\newcommand{\progcontext}{\mathcal{C}}

\newcommand{\typecontext}{\mathcal{T}}
\newcommand{\exprcontext}{\mathcal{E}}
\newcommand{\speccontext}{\mathcal{S}}
\newcommand{\indcontext}{\mathcal{I}}

\newcommand{\exprecontext}{\exprcontext_\mathrm{e}}
\newcommand{\spececontext}{\speccontext_\mathrm{e}}

\newcommand{\exprtcontext}{\exprcontext_\mathrm{T}}
\newcommand{\spectcontext}{\speccontext_\mathrm{T}}

\newcommand{\indtcontext }{ \indcontext_\mathrm{T}}

\newcommand{\exprscontext}{\exprcontext_\mathrm{s}}
\newcommand{\specscontext}{\speccontext_\mathrm{s}}

\newcommand{\expricontext}{\exprcontext_\mathrm{i}}
\newcommand{\specicontext}{\speccontext_\mathrm{i}}

\newcommand{\indicontext }{ \indcontext_\mathrm{i}}

\newcommand{\specpcontext}{\speccontext_\mathrm{p}}

\newcommand{\hole}{\square}
\newcommand{\termvalue}{\mathcal{V}}

\newcommand{\callcc}{\ensuremath{\mathtt{call/cc}}}
\newcommand{\throw}[2]{\ensuremath{\mathtt{throw}_{#2}^{#1}}}

\let\rq\relax
\let\lq\relax
\DeclareMathSymbol{\lq}{\mathord}{operators}{``}
\DeclareMathSymbol{\rq}{\mathord}{operators}{`'}

\newcommand{\pole}{\bot\!\!\!\bot}

\newcommand{\holhyp}{\vec{\psi}}
\newcommand{\trtypecontext}{\sprops^{\mathrm{T}}}
\newcommand{\trspeccontext}{\sprops}
\newcommand{\trcontext}{\Gamma}

\DeclareMathSymbol{\mlq}{\mathord}{operators}{``}
\DeclareMathSymbol{\mrq}{\mathord}{operators}{`'}
\newcommand{\qu}[1]{\mlq #1 \mrq}

\newcommand{\Total}{\mathsf{Tot}}
\newcommand{\Det}{\mathsf{Det}}

\newcommand{\N}{\mathbb{N}}
\newcommand{\isnat}{\mathbb{n}_{n}}

\newcommand{\cc}{\mathsf{CC}}

\newcommand{\lookup}{\mathsf{lookup}}
\newcommand{\update}{\mathsf{update}}
\newcommand{\alloc}{\mathsf{alloc}}

\newcommand{\emem}{e_{\cc}}
\newcommand{\Phimem}{\Phi_{\mathsf{mem}}}

\newcommand{\ie}{i.e.\xspace}
\newcommand{\eg}{e.g.\xspace}

\begin{document}

\title{\scalebox{0.95}{Syntactic Effectful Realizability in Higher-Order Logic}
\thanks{Cohen, Grunfeld and Kirst were partially supported by Grant No. 2020145 from the United States-Israel Binational Science Foundation (BSF). Kirst also received funding from the European Union’s Horizon research and innovation programme under the Marie Skłodowska-Curie grant agreement No.101152583 and a Minerva
Fellowship of the Minerva Stiftung Gesellschaft für die Forschung mbH.}
}

\author{
\IEEEauthorblockN{Liron Cohen}
\IEEEauthorblockA{
\textit{Ben-Gurion University}\\
Be'er Sheva, Israel \\
cliron@bgu.ac.il}
\and
\IEEEauthorblockN{Ariel Grunfeld}
\IEEEauthorblockA{
\textit{Ben-Gurion University}\\
Be'er Sheva, Israel \\
arielgru@post.bgu.ac.il}
\and
\IEEEauthorblockN{Dominik Kirst}
\IEEEauthorblockA{
\textit{Ben-Gurion University, Israel}\\
\textit{Inria Paris, France} \\
dominik.kirst@inria.fr}
\and
\IEEEauthorblockN{\'{E}tienne Miquey}
\IEEEauthorblockA{
\textit{Aix Marseille Univ., CNRS, I2M}\\
Marseille, France \\
etienne.miquey@univ-amu.fr	}
}


\maketitle
\thispagestyle{plain}
\pagestyle{plain}

\begin{abstract}
Realizability interprets propositions as specifications for computational entities in programming languages. 
Specifically, syntactic realizability is a powerful machinery that handles realizability as a syntactic translation of propositions into new propositions that describe what it means to realize the input proposition.
%
%
This paper introduces $\hopl$ (Effectful Higher-Order Logic), a novel framework that expands syntactic realizability to uniformly support modern programming paradigms with side effects. 
$\hopl$ combines higher-kinded polymorphism, enabling typing of realizers for higher-order propositions, with a computational term language that uses monads to represent and reason about effectful computations. 
We craft a syntactic realizability translation from  (intuitionistic) higher-order logic ($\hol$) to $\hopl$, ensuring the extraction of computable realizers through a constructive soundness proof.
%
$\hopl$’s parameterization by monads allows for the synthesis of effectful realizers for propositions unprovable in pure HOL, bridging the gap between traditional and effectful computational paradigms. 
Examples, including continuations and memoization, showcase $\hopl$'s capability to unify diverse computational models, with traditional ones as special cases.
For a semantic connection, we show that any instance of $\hopl$ induces an evidenced frame, which, in turn, yields a tripos and a realizability topos.

\end{abstract}
%

\section{Introduction}

\lcnote{maybe, given our late reviewer, somewhere mention the importance of HOL/why we chose it? so we don't get the "I only care about dependent TT" review}

Realizability, rooted in the works of Kleene~\cite{kleene1945interpretation},  aims to concretize the principle of constructivity by interpreting propositions as specifications for computational entities within a programming language.
A key feature of realizability is its assurance that the evidence for every proposition is computable. 
Originally, realizability used natural numbers to interpret formulas, but to get a more general notion of realizability, its modern notion generally uses some complete code language for realizing formulas. In the constructive case, it is standardly based on the notion of Partial Combinatory Algebras (PCAs)~\cite{hofstra2004partial}, while  Krivine classical realizability uses an abstract notion of stack machines~\cite{Krivine09}.
%

%

However, both the constructive and classical approaches are based on semantic constructions to yield models defined as a tripos or a topos~\cite{hyland1980tripos}. 
%
On the other hand, the syntactic approach to realizability,  pioneered by Gödel~\cite{godel1958bisher} and further developed in Kreisel's modified realizability~\cite{kreisel1959interpretation}, can be seen as abstracting away many of the complex semantic machinery otherwise required for constructing realizability models for rich languages. 
By restricting to the syntax and abstracting away all the nifty details of any particular semantic structure, this also allows for a broader spectrum of possible interpretations, each yielding its own realizability interpretation by virtue of being a model of the target language, without having to tailor the realizability construction to some particular structure.
Roughly speaking, it is based on  handling  realizability as a syntactic translation of formulas into new formulas describing what it means to realize the input formula.
This can be seen as an internalization of the notion of realizability of the source language into the target language.
Recent works adopting the syntactic approach include, e.g.,~\cite{van1997modified,jaber2016definitional,Pedrot+Tabareau:esop:2018,letouzey2002new,forster2024verified,bernardy2011realizability}.

Yet, works on \emph{syntactic} realizability tend to focus on the traditional notion of realizability or single computational effects.
%
However, features of modern programming languages include multiple effects like non-determinism, stateful computation, probabilistic computation, etc. 

To this end, much work has been devoted to the extension of the notion of realizability to allow for side effects, e.g.,~\cite{Lepigre16,miquey18,geoffroy18,ahman2017dijkstra,maillard2019dijkstra,CohMiqTat21,cohen2019effects,Pedrot+Tabareau:lics:2017,Boulier+Pedrot+Tabareau:cpp:2017,Pedrot+Tabareau:esop:2018,Pedrot+Tabareau+Fehrmann+Tanter:icfp:2019,Pedrot+al:lics:2020,Pedrot+Tabareau:popl:2020}. 
Other than providing support for richer programming languages, it was also shown that effectful realizability provides new models, which, in turn, can implement new features. 
For instance, with the development of classical realizability continuing work of Griffin~~\cite{griffin90}, Krivine evidences the fact that
extending the $\lambda$-calculus with new programming instructions may result in  new
reasoning principles: \callcc~to get classical logic~\cite{Krivine09}, \texttt{quote} for dependent choice~\cite{krivine03}, etc. 

Accordingly, this work provides a framework for syntactic effectful realizability.
That is, we extend the syntactic realizability approach by considering a target language that supports effectful programs as realizers for  a  higher-order source language.
Concretely,  we present a framework, dubbed $\effhol$ for Effectful Higher-Order Logic, that combines two key features:  higher-kinded polymorphism and a computational term language.
The higher-kinded polymorphic type system, inspired by Girard's System $\fomega$~\cite{Fomega}, allows for typed realizers of higher-order propositions. 
The computational term language, which can be seen as a simplification of Pitts' Evaluation Logic~\cite{pitts1991evaluation}, 
 enables reasoning about effectful programs in the sense of a general program logic~(cf.~\cite{VSJ25carte}).
%
%
In particular, 
to support effectful realizability, we adhere to the  standard approach for reasoning about effectful programs via \emph{monads}~\cite{moggi1991notions}.
%
\lcnote{fix}This provides a uniform language parameterized intuitively by a monad that carries the computational behavior of the language. 
By providing internal support for standard program language features, for example by having typed realizers and effectful programs, $\hopl$ is applicable to a broad range of languages, and reasoning about realizers is done in a natural manner, similar to the way one would reason about programs.
Traditional systems like System $F_\omega$ and computational $\lambda$-calculus are subsystems of $\effhol$, illustrating the versatility and robustness of the system. 

%

Concretely, we show how to model (intuitionistic) higher-order logic~\cite{Jacobs99CLTT}  ($\ihol$) using $\effhol$ by providing a syntactic realizability translation from $\ihol$ into $\effhol$. 
The realizability translation assigns to an $\hol$ proposition the type of its realizers in $\hopl$, along with a specification describing which programs of the corresponding type are realizers of said proposition. 
%
A key feature of our syntactic realizability translation is that when translating a provable $\ihol$ sequent, it provides an algorithm for constructing an $\effhol$ proof of its translation, which, in turn, contains a realizer. Thus, our translation can be seen as synthesizing $\hol$ realizers within $\effhol$.
Furthermore, as $\hopl$ is parameterized by a monad, the synthesized realizer obtained from the soundness proof is agnostic to the specifics of the monad. 
That is, the $\hol$ realizers stemming from the soundness proof do not make actual use of the specific behavior of the monad. However, as we show via a few illustrative examples, we can take advantage of the monad to get effectful realizers for propositions that are not provable in pure $\hol$.
%

To obtain a semantic connection, we link our syntactic realizability translation to the framework of evidenced frames~\cite{CohMiqTat21}, which are abstract algebraic structures from which one can construct effectful realizability models (\ie, triposes) via a uniform construction. 
Here we show that for each instance of $\hopl$ with a concrete choice of monad, the realizability translation induces an evidenced frame, thereby associating our realizability translation with the tripos models.

\vspace{0.1cm}
\paragraph*{Outline and Main Contributions}
\begin{itemize}[leftmargin=*]
\item \Cref{sec:overview} provides a high-level, intuitive overview of the construction of $\hopl$ and the realizability translation, emphasizing the reasoning behind each component and the interplay between them.
\item \Cref{sec:HOL} fixes the formalism of higher-order logic used in this paper.
\item \Cref{sec:HOPL} presents $\hopl$, a higher-order 
logic that combines higher-kinded polymorphism and computational types, thus  allowing higher-order reasoning about effectful programs. 
\item \Cref{sec:translations} provides a realizability translation of $\ihol$ to $\hopl$, yielding a realizability model of $\ihol$ from any instance of $\hopl$, and gives a constructive soundness proof for the translation, extracting $\effhol$ programs from $\hol$ proofs.
\item \Cref{sec:app} demonstrates the utility and generality of $\hopl$ 
via illustrative examples, including continuations and memoization.
\item \Cref{sec:EF} relates our syntactic approach to realizability with the semantic ones by showing how the realizability translation induces a structure of an evidenced frame.
\item \Cref{sec:related}  discusses related works and~\Cref{sec:conc} concludes.
\end{itemize}
We supplement our development with a Rocq mechanization (\url{https://github.com/dominik-kirst/effhol}), 
and mechanized results are indicated with clickable \coqdoc{EffHOL.html} icons\vlong{ and full proofs and details can be found in the appendix}.
\section{Overview of the Realizability Framework}\label{sec:overview}

The realizability approach to semantics relates logic with computation by interpreting propositions as denoting specifications of computer programs, or equivalently, using computer programs to demonstrate the validity of propositions.
While the approach is very general, with many variations, at its core it usually relates a specific logic with a specific programming language by translating each proposition in the logic to a specification for programs of this programming language.
 The wide literature on realizability interpretation introduces numerous very different such interpretations with different purposes. 
A recent work~\cite{CohMiqTat21} aiming to pinpoint the common structure of these interpretations identified a structure, dubbed evidenced frame, mathematically defined as a triple $(E,\Phi,\Vdash)$ precisely capturing these three components : $E$ being the set of so-called evidences (the programs that serve as realizers), $\Phi$ the set of formulas (\ie the logic that is being interpreted), while $\Vdash$ is the realizability relation connecting the former components. 
Many realizability interpretations usually define this realizability relation $\cdot\Vdash\cdot $ \emph{externally}, in the meta-theory. 

Following the idea that this relation can be understood as defining specifications for programs to be adequate realizers of the corresponding formulas, 
we make this slogan more concrete by relying on a program logic to define such specifications.
In line with Kreisel's modified realizability~\cite{kreisel1959interpretation}, we formally consider a realizability interpretation as a translation: 
it then translates propositions, \ie statements about mathematical structures, as  specifications in our target logic, \ie statements about computer programs.
\[\begin{tikzcd}
	{\textit{\small{\color{gray}Logical} Proposition}} &&& {\textit{\small{\color{gray}Program} Specification}}
	\arrow["{\text{\scriptsize {realizability translation}}}", from=1-1, to=1-4]
\end{tikzcd}\]

By carefully defining a general enough target language, which we call \emph{Effectful Higher-Order Logic ($\hopl$)} (introduced in \Cref{sec:HOPL}), it turns out that not only does our approach encompass usual realizability interpretations, but it also provides us with an even more general framework, compatible with typed realizers, effectful programs, etc.
To better grasp intuitions on how these features fit into the picture, let us first recall through a simple example how realizability works. Consider for instance a simple tautology $\Phi$ expressing that the conjunction is, from a logical point of view, commutative:
$$ \Phi\eqdef \left(\Phi_{1} \wedge \Phi_{2}\right) \supset \left(\Phi_{2} \wedge \Phi_{1}\right) $$
Following the core intuition of realizability as depicted by the BHK interpretation~\cite{bates1985proofs}, a realizer of a conjunction is intended to be a program providing a pair of realizers, one for $\Phi_1$ and one for $\Phi_2$, while a realizer of an implication is meant to compute out of a realizer of the premise a realizer of the conclusion. 
Formally, this takes the form of a translation of any proposition $\Phi$ into a specification $\trspec{}{\Phi}$ mapping programs to propositions, $\trspec{}{\Phi}{}(r)$ expressing that $r$ is a realizer of $\Phi$.
A realizer of this proposition is expected, out of a program computing a pair of realizers for $\Phi_1$ and $\Phi_2$, to provide a computation that produces a pair of realizers for $\Phi_1$ and $\Phi_2$:
\begin{align*}  
\trspec{}{\varphi\supset \psi}(t)&\eqdef \forall c.\left(\trspec{}{\varphi}(c) \supset \trspec{}{\varphi}(t\,c)\right)
\\
\trspec{}{\varphi_1 \land \varphi_2}{}(c)&\eqdef \trspec{}{\varphi_1}{}(\pi_1\, c) \land \trspec{}{\varphi_2}{}(\pi_2\,c)
\end{align*}
For instance, using a $\lambda$-calculus with primitive pairs and projections as the language of realizers, the term $r\eqdef \lambda x.\langle \pi_2\, x,\pi_1\,x\rangle$
would do the job: formally, with the definitions above, $\trspec{}{\Phi}{}(r)$ holds. 
In fact, since such a term does not rely on a specific choice of $\Phi_1$ and $\Phi_2$ (or of their interpretations), this term would be a realizer for the same proposition if $\Phi_1$ and $\Phi_2$ were universally quantified.
Moreover, depending on the choice of the set of programs that may serve as realizers, many other programs may realize the same proposition. In particular, realizers need not be purely functional (they may use, \eg, printing instructions, increase some references, etc) and there are no rules as to how that ordered pair has to be computed (right-to-left, left-to-right, in parallel?), what representation it should have (are pairs primitive, encoded?), or even whether an actual value of an ordered pair is ever given (or only a program that may compute such a pair).
Before introducing in detail our target logic $\hopl$, we first give an overview of some of its features and how they allow our framework to account for a large spectrum of realizability interpretations.

\subsection{Typed realizers}
Realizability is commonly based on an untyped notion of computer programs, where realizers of a given proposition $\Phi$ are characterized as being the computationally well-behaved programs (w.r.t. the specification $\Phi$ provides). Notably, when the languages for propositions and types coincide, typed programs are shown to be realizers of their types (but not the only ones) in what is usually called the \emph{adequacy lemma} of realizability interpretation. 
However, certain settings necessitate a clear distinction between the language of propositions that are interpreted and the language of types. This is for instance the case in Blot's interpretation of second-order arithmetic using programs in an extension of PCF
~\cite{blot22}, or in Paulin-Mohring's realizability model used to prove the soundness of Rocq's original extraction mechanism, where formulas in the Calculus of Constructions are realized by programs in $\fomega$~\cite{paulin_mohring89}. 
More generally, to enable our framework to provide specifications for actual programming languages, we need to (i) distinguish between the language used for logical propositions and that used for types, and (ii) allow the language of realizers to be typed.  The latter is not a restrictive requirement since considering a unique type assigned to all programs reduces to an untyped setting.
To further refine the separation, we consider an \emph{a priori} distinct language for the propositions in our logic, which we call \emph{specifications}.
To address this challenge, we thus enhance the translation to further include an assignment of a type for each proposition:
\[\begin{tikzcd}
	{\text{Proposition}} &&&& {\text{Type} \times \text{Specification}}
	\arrow["{\text{realizability translation}}", from=1-1, to=1-5]
\end{tikzcd}\]
Technically, if $\Phi$ is a proposition of the source logic, we define both its interpretations as a type $\trtype{}{\Phi}$ and as a specification $\trspec{}{\Phi}$. Going back to our example, assuming that $\Phi_1$ and $\Phi_2$ are respectively interpreted by some types $\tau_1$ and $\tau_2$, a realizer of $\Phi=\left(\Phi_{1} \wedge \Phi_{2}\right) \supset \left(\Phi_{2} \wedge \Phi_{1}\right)$ is required to be a program of type $\trtype{}{\Phi}=\tau_1 \times \tau_2 \to \tau_2\times \tau_1$, while $\trspec{}{\Phi}$  now defines a propositions on programs of that type. 





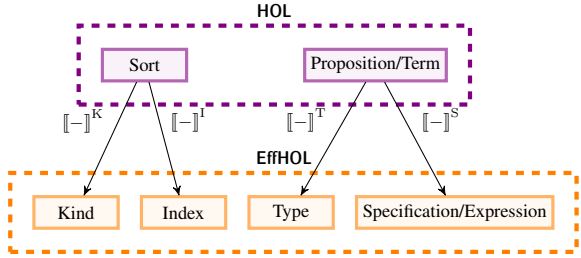
\begin{figure}[t]
    \centering
 \begin{tikzpicture}[scale=0.7, transform shape,
      0node/.style={rectangle, draw=pink!60, fill=pink!5, very thick, minimum size=6mm, minimum width=16mm}, 
      1node/.style={rectangle, draw=violet!60, fill=violet!5, very thick, minimum size=6mm, minimum width=16mm}, 
      2node/.style={rectangle, draw=orange!60, fill=orange!5, very thick, minimum size=6mm, minimum width=16mm}, 
      3node/.style={rectangle, draw=teal!60, fill=teal!5, very thick, minimum size=6mm, minimum width=16mm},
      4node/.style={rectangle, draw=pink!60, fill=pink!5, very thick, minimum size=6mm, minimum width=16mm},
      ,>=stealth']
      
    \node[1node]   at ( 2.3, 2.8) (1A)      {Sort}; 
    \node[1node, right=2.2  cm of 1A]  (1B)    {Proposition/Term}; 
 
 \node (1) [draw=violet, fit= (1A) (1B) , inner sep=0.3cm, 
                dashed, ultra thick,label={[align=center]above:{$\hol$} }] {};
                
    \node[2node]   at ( 1, 0) (2A)      {Kind}; 
    \node[2node, right=0.4cm of 2A]  (2B)    {Index}; 
       \node[2node, right=0.4cm of 2B]  (2C)    {Type}; 
          \node[2node, right=0.4cm of 2C]  (2D)    {Specification/Expression}; 
    
 \node (2) [draw=orange, fit= (2A) (2B) (2C) (2D) , inner sep=0.3cm, 
                dashed, ultra thick,label={[align=center]above:{$\hopl$\quad}}] {};
    
    \draw ( 1A) edge[->,"$\trkind{-}$" '] ( 2A);
    \draw ( 1A) edge[->,"$\trind{}{-}$"] ( 2B);
    \draw ( 1B) edge[->,"$\trtype{}{-}$" '] ( 2C);
    \draw ( 1B) edge[->,"$\trspec{}{-} $"] ( 2D);

\end{tikzpicture}
\vspace{-0.1cm}
    \caption{The Realizability Translation}
    \label{fig:translation}
\end{figure}

\subsection{Polymorphism}
Our candidate realizer $r$ for the proposition $\Phi$ does not rely on a particular choice for the propositions $\Phi_1$ and $\Phi_2$. At the typing level, this can be reflected using polymorphic types, in particular, in an expressive enough type-system we could type $r:\forall \tau_1,\tau_2.\tau_1\times \tau_2 \to \tau_2 \times \tau_1$. The logical counterpart of this amounts to the universal quantification on propositions provided by second-order logic, allowing for a refinement of the proposition compatible with any choice of $\Phi_1$ and $\Phi_2$:
\[ \Phi'\eqdef\forall \Phi_1,\Phi_2. \left(\Phi_{1} \wedge \Phi_{2}\right) \supset \left(\Phi_{2} \wedge \Phi_{1}\right) \]
While there exist numerous logical systems featuring such a quantification, \eg Girard-Reynold's System F~\cite{ProofsAndTypes}
, it is well-known that this quantification introduces nuanced challenges when interpreting it within a model~\cite{reynolds1984polymorphism}. 
This work is no exception: since we want propositions in the source language to be interpreted both as a specification on their realizers and as the type of these realizers, it means that via a sound translation, the image of a universal quantification over any possible proposition should range over any possible specification over several possible types, while the corresponding realizers may be assigned polymorphic types. Hence, the type system for terms includes a quantification over types, while the specification language features both
a quantification over types and over specifications. 
With these, we define the translations to types and specifications of a proposition $\forall X.\Phi(X)$ as follows
\begin{align*}
\trtype{}{\forall X.\Phi(X)} &\eqdef \forall \tau:\star.\trtype{}{\Phi(\tau)}
\\
\trspec{}{\forall X.\Phi(X)} &\eqdef \forall \tau:\star.\forall p:\tau\to\mathsf{Prop}.\trspec{}{\Phi(p)}
\end{align*}
where $\tau:\star$ (resp. $p:\mathsf{Prop}$) denotes that $\tau$ is a type (resp. $p$ a proposition). In particular, as specifications, propositional variables are interpreted as predicates on their realizer's type.

\subsection{Higher-Order Logic}
The realizability interpretation we provide is, in fact, of stronger logical expressiveness in that it interprets higher-order logic ($\hol$), which adds a few more technicalities. 
In (multi-sorted) $\hol$, besides mere propositions, formulas also include predicates ranging over terms of a given \textit{sort} $s$ (acting like sets of such terms), predicates ranging over such predicates (acting like sets of sets), etc.  
In the general case, $\hol$ includes predicates of sorts $\pred{\sort_1,\ldots,\sort_n}$, ranging over terms of sorts $s_1$, ... $s_n$.
This makes the underlying theory expressive enough to internalize any (open) formula $\Phi(x_1,....,x_n)$ ranging over variables $x_1,...x_n$ respectively of sorts $s_1,...,s_n$ through a comprehension predicate $\{x_1:s_1,...,x_n:s_n|\Phi(x_1,....,x_n)\}$ of sort $\pred{\sort_1,\ldots,\sort_n}$. 
The fact that this predicate holds for certain terms $t_1,...,t_n$ can then be expressed using a membership proposition $t_1,...,t_n\in \{x_1:s_1,...,x_n:s_n|\Phi(x_1,....,x_n)\}$
that is logically equivalent to the formula $\Phi(t_1,...,t_n)$.

The realizability interpretation requires the target language $\hopl$ to also encompass higher-order logic. Specifically, since any proposition of $\hol$ is translated both as a type and as a specification, both components need to be equipped with a counterpart of the sort system: the translation to types uses a \textit{kind system} for types while the translation to specifications uses something analogous to kinds for specifications which we call \textit{indices}.
To summarize, our target language $\hopl$ is a language of \textit{specifications}, whose structures are reflected by \textit{indices}, expressing properties of \textit{terms}. In turn, these terms come with a \textit{type}, whose structure is expressed by means of a \textit{kind} system. 
The realizability interpretation, as illustrated by \Cref{fig:translation}, consists of four translations, two translations $\trkind{\cdot}$ and $\trind{}{\cdot}$ from sorts to kinds and indices, and two translations $\trtype{}{\cdot}$ and $\trspec{}{\cdot}$ to types and specifications respectively.

In particular, for $\hol$ formulas that quantify over predicates of a given sort, \eg $\forall X:s.\Phi(X)$, through the realizability interpretation, predicates (here the predicate variable $X:s$) are turned into specifications expressing which terms define adequate realizers. Such specifications therefore range over one extra variable for terms of the corresponding type, and as such their index should be refined to reflect this. We write $\refpred{\type}{\indice_1,...,\indice_n}$ to denote the index of a specification (over specifications of indices $\indice_1,...,\indice_n$) whose realizers are of type $\type$. Hence, the translation of a sort $\sort$ to an index takes the corresponding type (say $\tau$) of the intended realizers as a parameter, written as $\trind{\type}{\sort}$. Overall, the translations of a formula $\forall X:s.\Phi$ to types and specifications is given by:
\begin{align*}
\trtype{}{\forall X:s.\Phi(X)} &\eqdef \forall \tau:\trkind{s}.\trtype{}{\Phi(\tau)}
\\
\trspec{}{\forall X:s.\Phi(X)} &\eqdef \forall \tau:\trkind{s}.\forall p:\trind{\tau}{s}.\trspec{}{\Phi(p)} 
\end{align*}
The full translation, with extra technicalities for handling open propositions and various contexts, is given in \Cref{sec:HOLtoHOPL}.

\subsection{Spectrum}
Following the modern tradition of realizability interpretations, we aim at a framework
that allows for effectful realizers. 
In this paper, we choose to approach effects via monads, as is done in Pitts' Evaluation Logic~\cite{pitts1991evaluation} and Moggi's Computational $\lambda$-Calculus~\cite{moggi1989computationallamdba}. Specifically, our language features a type $\typecomp{\tau}$ denoting computations of type $\tau$, while terms account for the usual \textit{return} and \textit{bind} constructions of monads.
This choice has the strong benefit that the resulting system is compatible with \textit{any} choice of a monad
, endorsing the subsequent language of realizers with the corresponding effects. 
However, this is only a choice made for the purpose of providing a ready-to-use generic framework. Indeed, we could also consider variants of $\hopl$ where terms, instead of monadic constructs, allow for effectful instructions in direct-style, as is done for instance in Krivine classical realizability with the control operator \texttt{call/cc}~\cite{Krivine09}.\emnote{any idea for an extra ref using states?} Nonetheless, such an approach would have the significant drawback that any choice of a particular effect would require to adapt the operational semantics (\eg with stacks for control-operators, heaps for states, etc.), and therefore the target language $\hopl$ of our realizability translation and its semantics. 
While a monad-free approach to effects would be more general, ensuring soundness and usefulness for a specific instance would require significantly more effort. 
Thus, to provide a robust generic framework, we here focus on the monadic approach.
%
This already allows us to cover a wide spectrum of realizability interpretations, as we shall demonstrate.


\begin{figure}[!t]
\begin{small}
\begin{center}

$\begin{array}{l@{\hspace{0.1in}}l@{\hspace{0.1in}}l@{\hspace{0.1in}}l@{\hspace{0.3in}}l}
\textbf{Sorts} & \sort 
            & ::=  & \STAR \mid \predcon{\sort} & 
            \\[\sskip]

\textbf{Terms} & \sterm 
            & ::=  & \spred \mid \comprehend{\spred}{\sort}{\sprop} \mid \compbase{\sprop}& 
            \\[\sskip]

\textbf{Propositions} & \sprop 
            & ::=  & \smembase{\sterm}\mid \smem{\sterm}{\sterm}\mid 
            \simplies{\sprop}{\sprop} 
            \mid \sforall{\spred:\sort}{\sprop} 
            \\[\sskip]
            
\textbf{Sort context}& \scontext& ::= &\emptyset \mid \scontext,\spred:\sort
    \\[\myskip]
\end{array}$
\scalebox{0.9}{
    \begin{tabular}{@{}c@{\hspace{-0.05in}}c}
        $\infer[\!\!\!\impintrorule]{\selsequent{\scontext}{\sprops}{\simplies{ \sprop_{1} }{ \sprop_{2} }}}{\selsequent{\scontext}{\sprops,\sprop_{1}}{\sprop_{2}} }$ &
        $\infer[\!\!\!\impelimrule]{\selsequent{\scontext}{\sprops}{\sprop_{2}}}{\selsequent{\scontext}{\sprops}{\simplies{\sprop_{1}}{\sprop_{2}}} \quad \selsequent{\scontext}{\sprops}{\sprop_{1}}}$\\[\sskip]

        $\infer[\!\!\!\holuniintrorule]
        {\selsequent{\scontext}{\sprops}{\tspecindprod{\spred:\sort} \sprop}}
        {\selsequent{ \scontext, \spred : \sort}{\sprops}{\sprop}}$ 
        &
        $\infer[\!\!\!\holunielimrule]
        {\selsequent{ \scontext}{\sprops}\sprop\subst{\spred:=\sterm}}
        {\selsequent{\scontext}{\sprops}{\tspecindprod{\spred:\sort} \sprop}
        }$
        \\[\sskip]

        $\infer[\!\!\!\memintrorule]
        {\selsequent{\scontext}{\sprops}{ \smem{\sterm}{\comprehend{\spred}{\sort}{\sprop}}}}
        {\selsequent{\scontext}{\sprops}{\sprop\subst{\spred := \sterm}}& 
        }$
        & \quad
        $\infer[\!\!\!\memelimrule]
        {\selsequent{\scontext}{\sprops}{\sprop\subst{\spred := \sterm}}}
        {\selsequent{\scontext}{\sprops}{ \smem{\sterm}{\comprehend{\spred}{\sort}{\sprop}}}}
        $\\[\sskip]
\end{tabular}}
\\
\scalebox{0.95}{
    \begin{tabular}{c@{\hspace{0.2in}}c@{\hspace{0.2in}}c}
        $\infer[\!\!\!\memunintrorule]
        {\selsequent{\scontext}{\sprops}{ \smembase{\compbase{\sprop}}}}
        {\selsequent{\scontext}{\sprops}{\sprop}}$
        &
        $\infer[\!\!\!\memunelimrule]
        {\selsequent{\scontext}{\sprops}{\sprop}}
        {\selsequent{\scontext}{\sprops}{ \smembase{\compbase{\sprop}}}}
        $
        &
        $\infer[\!\!\!\idrule]{ \selsequent{\scontext}{\sprops}{\sprop} }{\sprop\in \sprops}$
    \end{tabular}
}
\end{center}
\end{small}
\vspace{-0.2cm}
   \caption{$\ihol$ Syntax and Theory \coqdoc{HOL.html}}
    \label{tab:hol-syntax}
\end{figure}

\label{sec:ihol}
\section{Higher-Order Logic}\label{sec:HOL}

To begin, we fix the specific version of higher-order logic employed in this paper. 
To generate realizability models, we focus on a constructive variant, namely, intuitionistic, many-sorted, monadic, higher-order logic,  denoted by $\ihol$.
To keep our system as general as possible, we opt for a very minimalistic formalization of $\ihol$. 
For one, we only consider a core of logical constructs, namely, implication and universal quantification.
Furthermore, our language emulates propositional application and abstraction through comprehension terms and membership propositions.
We do this instead of the alternative formalization using  $\lambda$-terms to ensure we do not commit to any specific language construct that goes beyond the bare minimum required for $\ihol$.

\vshort{\Cref{tab:hol-syntax} presents the $\ihol$ framework, where all inference rules assume well-formedness.}
\vlong{\Cref{tab:hol-syntax} presents the $\ihol$ framework.
The typing rules, along with their explicit mentions in the inference rules are detailed in the appendix, while here all inference rules assume well-formedness.}
As a many-sorted logic, the syntax of $\ihol$ consists of \emph{propositions} and \emph{terms}, where each term has a \emph{sort}.
Terms are either variables or (base) comprehension terms. 
Propositions can make statements about properties of terms by having terms appear inside propositions.
The comprehension terms provide a syntactic machinery for terms to refer to propositions, which, in turn, allows propositions to  make statements about properties of other propositions.
This is the core source of the higher-order structure of the logic.
`Extracting' the inner proposition from a comprehension term, is then done via the membership proposition.
Concretely, for a term $\sterm_{1}$ of sort $\sort$ and a comprehension term $\sterm_{2}=\comprehend{\spred}{\sort}{\sprop}$ of  sort $\predcon{\sort}$, 
the proposition $\smem{\sterm_{1}}{\sterm_{2}}$ states that  $\sterm_{1}$ is a member of $\sterm_{2}$, which amounts to saying that $\sprop$ with $\sterm_1$ for $\spred$ holds .
This mechanism may be seen as a representation of standard function abstraction and application in programming languages, which allows 
higher-order functions. 
Comprehension binds a  variable within a proposition and turns it into a comprehension term, just as abstraction binds a variable within an expression to turn it into a function.
Membership applies a comprehension term to an argument, yielding a proposition, just like a function applied to an argument yields an expression.

Propositions include only implication, universal quantification, and membership, from which the other logical constructs are standardly  derivable. 
But, while comprehension terms $\comprehend{\spred}{\sort}{\sprop}$ of composite sort $\predcon{\sort}$ are in principle enough to encode the other logical connectives and construct the higher-order hierarchy, our language also requires base terms and propositions. 
For this, we include comprehension terms $\compbase{\sprop}$ of base sort $\STAR$ with corresponding embedding of terms $t:\STAR$ as formulas $\smembase{t}$ to effectively simulate quantification over propositional variables.
Thus, for instance, a falsity constant $\bot$ can be defined naturally by $\tspecindprod{\spred : \STAR}{\smembase{\spred}}$.

We use $\scontext$ to denote a sort context, i.e., a list of unique sorted variables $ \spred_{1} : \sort_{1} , \ldots , \spred_{n} : \sort_{n} $.
%
We have standard typing rules for the judgments that, under a sort context $\scontext$ a term is of a specific sort, written $\eltype{\scontext}{\sterm : \sort}$, and that $\sprop$ is a well-formed proposition, written $\eltype{\scontext}{\sprop}$.
We opt for a representation where well-formed propositions are a separate construct,  equivalently, a unique sort could have been used to denote well-formed propositions.

For simplicity, we use a monadic presentation of $\ihol$, i.e.\ using only unary sorts $\predcon{\sort}$ instead of $n$-ary  sorts of the form $\predcon{(\sort_1,\dots\sort_n)}$.
While in second-order logic, the monadic fragment is less expressive, with arbitrary orders, expressivity is equivalent since $n$-ary sorts can be encoded at higher orders~\cite{scott2008pairs}.
Thus, our sort structure is, in fact, equivalent to the natural numbers but is presented syntactically to allow for potential extensions to more complex sort structures.

The main judgments of the theory of  $\ihol$ are of the form $\scontext \vdash \sprops\Rrightarrow\sprop$,
where $\sprops$ is a finite set of propositions and $\scontext$ is an adequate  sort context.
%
A sequent represents \emph{entailment}, that is, the judgment that within the specified context, assuming all the propositions in the left-hand-side hold entails the right-hand-side holds.
The theory of  $\ihol$ is inductively generated by the inference rules in~\Cref{tab:hol-syntax}. Standardly, in rule \holuniintrorule, $\spred$ is a fresh variable.
The membership rules simply identify membership in comprehension terms $\smem{\sterm}{\comprehend{\spred}{\sort}{\sprop}}$ with substitutions $\sprop\subst{\spred := \sterm}$ of the inner formula as expected.
($\sprop\subst{\spred := \sterm}$ and $\sterm'\subst{\spred := \sterm}$ denote  standard substitutions of $\sterm$ for $\spred$ in $\sprop$ and $\sterm'$, resp.)
As noted, we omit the typing premises, but, for example, rule \memintrorule~has the additional premise $\eltype{\scontext}{\sterm:\sort}$.
Standard structural rules like Exchange, Weakening, Contraction, and Cut are derivable from those in~\Cref{tab:hol-syntax}, given that $\sprops$ is treated as a set in the \idrule\ rule.

\section{\feffhol}
\label{sec:HOPL}





To provide an abstract, general framework for handling a wide range of program languages, this section describes~\feffhol, $\hopl$.
$\hopl$ is a form of higher-order logic based on two key features:  higher-kinded polymorphism and computational term language.
The higher-kinded polymorphic type system, inspired by Girard's System $\fomega$~\cite{Fomega}, is used to type higher-order specifications. 
The computational term language 
 enables the treatment of effectful programs.
It can be seen as a simplification of Pitts' Evaluation Logic~\cite{pitts1991evaluation} that only has a single monotonic modality with the values and composition laws
holding only in the left-to-right direction.
In particular, the effectful aspect of the language is captured through monads.
%
The design of $\hopl$ invokes similar design choices as  $\ihol$, for example, opting for a monadic presentation and using only unary kinds and indices.
%

\begin{figure*}[!t]
\hspace{-0.38in}
\begin{small}
\begin{center}
\input{section/figures/HOPL_syntax2}
\end{center}
\end{small}
\vspace{-0.4cm}
   \caption{$\hopl$ Syntax \coqdoc{EffHOL.html}}
    \label{tab:effhol-syntax}
\end{figure*}

\subsection{The Language of $\hopl$}
The syntax of $\hopl$, formally given in~\Cref{tab:effhol-syntax}, consists of the following components:  \emph{kinds}, \emph{types}, \emph{programs}, \emph{indices}, \emph{expressions} and \emph{specifications}.  
Kinds and types are used to provide the types of realizers associated with typed programs, while indices, expressions and specifications hold the logical counterpart of our realizability interpretation by describing the properties of these programs.


As is standard, \emph{kinds} are used as types for \emph{type constructors}. 
Well-formed type constructors are of kind 
$\typecon{\kind}$, which denotes the kind of type constructors that take type constructors of kind $\kind$ and return a type.
Concrete types take no arguments and have kind $\star$. 
As for sorts in $\ihol$, we take a minimalistic kind system, equivalent to the natural numbers. 

The syntax of \emph{types} is also minimal, and includes variables, application, abstraction, functions, universal, and computation types.
As is standard, the type abstraction and universal 
bind a free variable in a type.
%
The former returns a type constructor, while the latter returns a polymorphic type. 
\revnote{can you explain a bit more about type abstraction versus universal types? the program syntax looks the same. how are these different?}
%
Intuitively, the $M$ stands for the underlying monad of $\effhol$.
This approach ensures that the computational behavior of programs is made explicit in the type system, and allows for easy extensions with constructs that exhibit various computational effects. 
%
%
%
Since the type system has higher-order polymorphism it does not include 
 base types.
%
Each type variable has a kind.
To keep track of the kinds of the free type variables in a type, all judgments in $\hopl$ depend on a \emph{kind context}, denoted by $\kcontext$,  containing declarations of the form $\typevar : \kind$. 
%
%

The syntax of programs combines constructs for pure and effectful computations. 
It contains the standard constructs of pure programs, as in System $\fomega$, such as variables, term abstraction and application, and type abstraction and application.
In addition, it contains standard constructs for handling effectful computations through monads via return and bind, as in Pitts' Evaluation Logic~\cite{pitts1991evaluation} (or rather, Moggi's Computational $\lambda$-Calculus~\cite{moggi1989computationallamdba}).
The variable $\typevar$ is bound in type abstraction  and the variable $\termvar$ is bound in term abstraction and bind program.
Programs depend on types, which, in turn, depend on kinds. Therefore, they further require a \emph{type context}.
A type context, denoted by $\tcontext$, contains declarations of the form $\termvar : \type$. That is, it is a list of typed program variables, where each type may have free variables denoting type constructors, hence a type context depends on a kind context.

To reason about programs, the logical part of $\hopl$ relies on 
\emph{specifications}, which in turn are defined using \emph{expressions}
which are typed with \emph{indices}. 
Indices mark the logical order of statements. 
First-order statements about a program of type $\type$ are indexed with the base refinement  $\refpredN{\type}$. 
Higher-order statements can also refer to some other statements of different types, in which case they are of the form $\refpred{\type}{\indice}$.
The syntax of indices also includes indices of the form $\tindprod{\typevar:\kind} {\indice}$, which bind $\typevar$, that capture that the index is polymorphic over the type variable $\typevar$ of kind $\kind$, together with the corresponding expressions to abstract over types. 
\revnote{some examples here could help}

Expressions include variables, (base) comprehension, and type abstraction and application. 
The type abstraction (in which $\typevar$ is bound) and type application reflect type polymorphism.
To enable stating higher-order properties of programs, the comprehension expression binds a program variable along with an optional expression variable.
The program variable specifies the program satisfying the property, while the optional expression variable describes a property that is related to the program in some manner via the inner specification.\revnote{some examples here could help}
A comprehension expression that takes an expression argument is denoted by  $\ecomp{\termvar}{\type}{\exprsvar}{\indice}{\tprop}$, whereas, like in  $\ihol$, base comprehension expression with no arguments is denoted by $\ecompbase{\termvar}{\type}{\tprop}$.
Each expression is typed with an index, and we use an \emph{index context} $\icontext$,  to keep track of such declarations $\epred : \indice$, where $\epred$ is an expression variable. 



The membership specification,  $\tmem{\term}{\exprs_1}{\exprs_2 }$,  intuitively states that $\exprs_1$ (with argument $\exprs_2$) holds on $\term$. When $\exprs_1$ takes no arguments, we use the base expression $\tmembase{\term}{\exprs_1 }$.
As in $\ihol$,  a membership specification is a form of application in that stating that an expression is a member of a comprehension expression amounts to the inner comprehension specification applied to that expression.  
The language of specifications includes standard implication, and has three different forms of universal quantification:
%
 type quantification quantifies over programs, index quantification quantifies over properties thereof, and kind quantification allows the specifications to be polymorphic over arbitrary types. 
%
Last, to describe specifications of effectful programs, we use a modality $\after{\term}{\termvar}{\tprop}$, which  
intuitively states that the property $\tprop$ holds when $\termvar$ is the result of running the (effectful) program $\term$.
This is the core source of effectful computations in $\hopl$. Specifically, if only considering pure programs, this would collapse into standard substitution.

\begin{figure*}[t] 
\begin{center}
\begin{small}
\input{section/figures/HOPL_rules}
\end{small}
\end{center}
\caption{Typing Rules for $\hopl$ \coqdoc{EffHOL.html\#has_kind}}
    \label{fig:typing-full}
\end{figure*}

\begin{example}\label{ex:norm_pres}
To illustrate the role of each of the logical components,
consider, e.g., a specification $p\downarrow^\type := \timplies{\left(\after{p}{\termvar}{\bot}\right)}{\bot}$ stating that a program $p$ of type $\type$ is not non-terminating.
Using a comprehension term, one can build the first-order expression $\mathtt{norm}_\type\eqdef \ecompbase{\termvar}{\type}{\termvar\downarrow^\type}$ which corresponds to the set of programs of type $\type$ which are not non-terminating, or, in a more type-theoretic fashion, as a function that associates to each such program $p:\tau$ the specification $p\downarrow^\type$. 
Such an expression is of index $\refpredN{\type}$, while in the higher-order case, an index $\refpred{\type}{\indice}$ denotes expressions that define a specification for programs of type $\type$ using an expression of index $\indice$.

For a program of type $\typefun{\type}{\type}$, a natural example of such a higher-order statement is to state that some property that holds for the input also holds for the output. Using comprehension, such property can be formalized using
the expression 
${\mathtt{pres}_\type\eqdef \ecomp{\termvar}{\typefun{\type}{\type}}{\exprs}{\refpredN{\type}}{\tspectypprod{\termvar':\type}{\left(\timplies{\tmembase{\termvar'}{\exprs}}{\tmembase{\termapp \termvar{\termvar'}}{\exprs}}\right)}}}$ of index $\refpred{\typefun{\type}{\type}}{\refpredN{\type}}$.
Then, ${\tmem{\left(\termabs{\termvar}{\type}\termvar\right)}{\mathtt{pres}_\type}{\mathtt{norm}_\type}}$ expresses that the identity function preserves normalization.
\end{example}

$\hopl$ supports the following standardly defined capture-avoiding substitutions:~
$\!A\subst{\typevar := \type}, B\subst{\termvar := \term}, C\subst{\epred := \exprs}$ for $A\!\in\!\{\qu{\type},\qu{\indice},\qu{\term},\qu{\exprs}, \qu{\tprop} \}, B\!\in\!\{\qu{\term},\qu{\exprs},\qu{\tprop} \}$, and $C\!\in\!\{\qu{\exprs}, \qu{\tprop} \}$.

To summarize, the computational components of $\hopl$ are kinds, types and programs. Each type has a kind and each program has a type. 
The logical components of $\hopl$ are indices, expressions and specifications. 
Each expression has an index.
Specifications and expressions depend on both programs and indices, which, in turn, both depend on types, which, in turn, depend on kinds. 
%



\subsection{Operational Semantics and Type System}\label{sec:operational}

Since $\effhol$ is agnostic to the details of the computational effects, the operational semantics also has to be as generic and modular as possible, allowing for different computational effects to be plugged in.
Therefore, we only invoke a minimal  (one-step)
 $\beta$-reduction relation $\reduction$ on programs:
$$\begin{array}{c}
\termbind{\termvar}{\termret{\term_1}}{\term_2} \reduction \term_2 \subst{\termvar:=\term_1}
\\[\sskip]
\termtypeapp{(\termtypeabs{\typevar}{\kind}{\term})}{\type} \reduction \term\subst{\typevar:=\type}
\qquad
\termapp{(\termabs{\termvar}{\type}{\term})}{\termvalue} \reduction   \term\subst{\termvar:=\termvalue}
\end{array}$$
%
where $\termvalue ::= \termvar \!\mid \!\termtypeabs{\typevar}{\kind}{\term} \!\mid \!
  \termabs{\termvar}{\type}{\term}$.
%
An abstraction applied to a type is reduced to type substitution (in types and expressions).
We define a conversion relation $\conv$  as the (reflexive, symmetric and transitive) contextual closure of $\conv_\type\cup\conv_\exprs$ for 
$$\typeapp{(\typeabs{\typevar}{\kind}{\type_1})}{\type_2}\conv_\type\type_1\subst{\typevar:=\type_2}
\qquad
\typeapp{(\eforall{\typevar:\kind}{\exprs})}{\type}\conv_\exprs \exprs\subst{\typevar:=\type}$$
in all syntactic categories of the equivalence relation induced by these reductions.

The typing rules of $\hopl$  use four different judgments:
$\eltype{\kcontext}{\type:\kind}~$ for $\type$ being a well-formed type of kind $\kind$ in the context, $\eltrm{\kcontext \! \mid \! \tcontext}{\term}{\type}~$ for $\term$ being a well-formed program of type $\type$, 
$\eltrm{\context}{\exprs}{\indice}~$ for  $\exprs$ being a well-formed expression of index $\indice$, and $\eltype{\context}{\tprop}~$ for $\tprop$ being a well-formed specification.
The type system of $\hopl$ is inductively defined by the rules in~\Cref{fig:typing-full}. 
For readability, we elide the typing for $\indice$ being a well-formed index, as those are easily obtained by requiring the constituent types to be of base kind (for full details, see the Coq formalization\vlong{ or the appendix}).
The typing rules are closed under context extension and under (well-formed) substitution.
Moreover, they are closed under conversion and $\beta$-reduction, and type preservation for programs holds.
%

\subsection{Deductive Apparatus}

\begin{figure*}[t]
\hspace{-0.5em}
\input{section/figures/HOPL_theory-notyping}
\caption{The Theory of $\hopl$ \coqdoc{EffHOL.html\#HOPL_prv}
}
\label{fig:theory}
\end{figure*}

The theory of $\hopl$ is obtained through  inference rules that manipulate sequents of specifications. 
Judgments are of the form $\elsequent{ \context }{\tprops}{\tprop}$, where $\tprops$ is a finite set of specifications.
%
%
%
%
\Cref{fig:theory} presents the inference rules that inductively define the theory of $\hopl$ (omitting the typing premises). 
%
The rules closely resemble those of $\ihol$, except for those handling modalities and polymorphism.
Standardly, the theory includes an identity axiom, implication introduction and elimination, and universal introduction and elimination for expressions.  
Additionally, it includes introduction and elimination rules for program and type universal quantification. 
As is standard, in \proguniintrorule~and \expuniintrorule, $\termvar$ and $\epred$, resp., are fresh. 

In addition, $\hopl$ includes rules for modalities. The modality introduction rule, \modintrorule, states that whenever a specification holds for a value, it holds for the computation that does nothing except return that value. The modality elimination rule, \modelimrule, states that a nesting of two modalities, where the specification depends only on the last computation, can be collapsed into a single modality with the same specification over the nesting of the computations.
The Monotonicity rule, \monrule, states that the modality respects deductions, so whenever $\tprop_{1}$ entails $\tprop_{2}$, if $\tprop_{1}$ holds for the value of some computation, so does $\tprop_{2}$. 
%
The (base) membership introduction and elimination rules are completely dual and acount for the mutual dependency of expressions and specifications.
%

Lastly, $\hopl$  includes two computational rules that enable the use of the computational reductions and conversions in the logical side of $\hopl$.
The rule \convrule~allows for replacing specifications that are equivalent by conversions.
The rule \antiredtermrule~enforces a key property of the realizability interpretation, namely the closure of specifications under anti-reduction, by stating that if a program $\term_1$ reduces to a program $\term_2$, then any valid specification for $\term_2$ is also valid for $\term_1$. 
By the assumption that all judgments are well-typed, rule \antiredtermrule~can only be applied  when the reduction preserves typing. 
\emnote{This is a strong requirement in that any user-added specifications will have to satisfy this, so normalization is ok, equality of programs is not}\lcnote{indeed}
Standard structural rules are admissible in $\effhol$, with the exception that the substitution rule for programs is only valid for values.

Since $\ihol$ is a mere fragment of $\effhol$, there is a trivial forgetful translation function $\trform{-}$ from the latter to the former, erasing all program, type, and kind structure. 
Thus, all $\effhol$ deductions can be replayed in $\ihol$. 
%
In particular, as any specification $\tprop$ expressing a contradiction in $\effhol$ is mapped to a formula $\trform{\tprop}$ expressing a contradiction in $\ihol$. This reduces the consistency of $\effhol$ to that of $\ihol$.

\begin{proposition}[\coqdoc{EffHOL_to_HOL.html\#Consistency}]\label{lem:con}
         $\effhol$ is consistent.
\end{proposition}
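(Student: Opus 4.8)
\emph{Proof plan.} This amounts to making precise the forgetful translation $\trform{-}$ and the ``replay in $\ihol$'' remark preceding the statement. The strategy is: define $\trform{-}$ as a syntactic map from $\effhol$ to $\ihol$; show it carries $\effhol$ derivations to $\ihol$ derivations; exhibit a closed $\effhol$ specification whose image is the $\ihol$ falsity $\bot$; and conclude that underivability of $\bot$ in $\ihol$ forces underivability of that specification in $\effhol$.

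First I would fix $\trform{-}$ by structural recursion, deleting all computational content. Kinds and programs are erased outright. A type contributes only the ``order'' information recorded by indices, so indices become $\ihol$ sorts via $\trform{\refpredN{\type}} \eqdef \STAR$, $\trform{\refpred{\type}{\indice}} \eqdef \predcon{\trform{\indice}}$, and $\trform{\tindprod{\typevar:\kind}{\indice}} \eqdef \trform{\indice}$. Expressions become $\ihol$ terms: $\trform{\epred} \eqdef \epred$, $\trform{\ecomp{\termvar}{\type}{\exprsvar}{\indice}{\tprop}} \eqdef \comprehend{\exprsvar}{\trform{\indice}}{\trform{\tprop}}$ (the program binder vanishing), $\trform{\ecompbase{\termvar}{\type}{\tprop}} \eqdef \compbase{\trform{\tprop}}$, with type abstraction and application on expressions erased. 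Specifications become $\ihol$ propositions: $\trform{\tmem{\term}{\exprs_{2}}{\exprs_{1}}} \eqdef \smem{\trform{\exprs_{1}}}{\trform{\exprs_{2}}}$, $\trform{\tmembase{\term}{\exprs}} \eqdef \smembase{\trform{\exprs}}$, $\trform{\timplies{\tprop_{1}}{\tprop_{2}}} \eqdef \simplies{\trform{\tprop_{1}}}{\trform{\tprop_{2}}}$, the modality collapses to $\trform{\after{\term}{\termvar}{\tprop}} \eqdef \trform{\tprop}$ (sound because $\termvar$ is a program variable, hence has no residual occurrence in $\trform{\tprop}$), the kind and program quantifiers are erased, $\trform{\tspeckinprod{\typevar:\kind}{\tprop}} = \trform{\tspectypprod{\termvar:\type}{\tprop}} = \trform{\tprop}$, and the expression quantifier becomes a genuine $\ihol$ quantifier, $\trform{\tspecindprod{\epred:\indice}{\tprop}} \eqdef \tspecindprod{\epred:\trform{\indice}}{\trform{\tprop}}$. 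Contexts translate componentwise; only $\icontext$ has nontrivial image, an $\ihol$ sort context $\trform{\icontext}$.

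Then I would prove two routine lemmas by induction on syntax. (i) $\trform{-}$ commutes with substitution: $\trform{A\subst{\typevar:=\type}} = \trform{A}$ and $\trform{A\subst{\termvar:=\term}} = \trform{A}$, since type and program variables disappear, whereas $\trform{\tprop\subst{\epred:=\exprs}} = \trform{\tprop}\subst{\epred:=\trform{\exprs}}$. (ii) $\trform{-}$ is invariant under the program reduction $\reduction$ (trivially, programs being erased) and under the conversion $\conv$ (since by (i) the two sides of $\conv_{\type}$ and $\conv_{\exprs}$ collapse to the same $\ihol$ object). With these, one shows by induction on derivations that $\elsequent{\context}{\tprops}{\tprop}$ entails $\selsequent{\trform{\icontext}}{\trform{\tprops}}{\trform{\tprop}}$. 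Almost every rule of \Cref{fig:theory} maps onto its $\ihol$ counterpart in \Cref{tab:hol-syntax}: \idrule, \impintrorule, \impelimrule, the expression-universal pair and the (base) membership pairs go through verbatim modulo (i); \typeuniintrorule, \typeunielimrule, \proguniintrorule and \progunielimrule become trivial since premise and conclusion share the same translation; \modintrorule and \modelimrule are identities on translations; \monrule translates to an instance of the admissible Cut of $\ihol$; and \convrule, \antiredtermrule are identities by (ii).

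It remains to exhibit a closed, well-formed $\effhol$ specification witnessing non-triviality. Take $\bot_{\effhol} \eqdef \tspeckinprod{\typevar:\star}{\tspectypprod{\termvar:\typevar}{\tspecindprod{\epred:\refpredN{\typevar}}{\tmembase{\termvar}{\epred}}}}$, whose well-formedness is a short typing derivation. Unfolding the clauses above, $\trform{\bot_{\effhol}} = \tspecindprod{\epred:\STAR}{\smembase{\epred}}$, which is exactly the $\ihol$ falsity $\bot$. Hence if $\bot_{\effhol}$ were $\effhol$-derivable from the empty context, the translation lemma would yield a derivation of $\bot$ in $\ihol$, contradicting the consistency of $\ihol$ (standard, e.g.\ by a two-valued model, and formalized alongside this result). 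So $\bot_{\effhol}$ is underivable and $\effhol$ is consistent. The one delicate point is the soundness induction above: the context translation and lemma (i) must be set up so that the specifically effectful rules --- the modality rules \modintrorule, \modelimrule, \monrule and the computational rules \convrule, \antiredtermrule --- reduce to vacuous steps or admissible $\ihol$ rules rather than requiring new logical principles, which is precisely what the ``erase all computational structure'' design of $\trform{-}$ buys.
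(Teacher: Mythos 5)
Your proposal is correct and matches the paper's own argument: the appendix defines exactly this forgetful translation $\trform{-}$ (erasing kinds, types, programs, and the modality, sending indices to sorts and expression quantifiers to $\ihol$ quantifiers), proves judgement and derivability preservation by induction on derivations, and reduces consistency of $\effhol$ to that of $\ihol$. Your additional details (substitution/conversion invariance, \monrule{} becoming an admissible Cut, and the explicit $\bot_{\effhol}$ translating to $\tspecindprod{\spred:\STAR}{\smembase{\spred}}$) are just a spelled-out version of the same route.
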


\paragraph*{Hoare Triples}
The theory of $\hopl$  allows expressing properties of programs in the familiar style of Hoare logic~\cite{Hoare69}.
The known \emph{triple form} is defined as an abbreviation of a sequent with a modality on the right-hand-side:
$$
\ehtriple{\tprops}{\termvar}{\term}{\tprop} \quad \eqdef \quad 
\tprops \Rightarrow \after{\term}{\termvar}{\tprop} 
$$ 
Intuitively, this states that, assuming all the formulas in $\tprops$ hold, then after binding $x$ with the result obtained by computing $\term$, the formula $\tprop$ holds. 
The triple form will be used later when discussing the realizability translation. 
That is, we show that for every provable theorem of $\ihol$ there is a corresponding program, called the \emph{evidence} of the theorem, for which the translation of said theorem forms a provable \emph{triple} in $\hopl$.

Importantly, the consistency of~\Cref{lem:con} only concerns the core logical system of $\effhol$ and not the derived triples.
That is, while $\bot$ is underivable,  $\after{\term}{\termvar}\bot$ may be derivable for some $\term$s.
Indeed, we allow for instantiations of the computational system, including ones realizing no meaningful logic. \lcnote{we just need to have a fail p}

\vspace{-0.1cm}
\subsection{Instances of\ $~\hopl$}
\label{sec:instances}

$\hopl$ can be seen as a generic framework relying on several parameters, that can be instantiated 
in different ways to capture a wide range of computational behaviors.
The most direct family of instantiations is given by syntactic translations of $\hopl$ into itself, using its effect-free fragment $\hoplinst$ in the target, i.e., the subsystem of $\hopl$ without the constructs $\typecomp{\type}$, $\termret{\term}$, $\termbind{\termvar}{\term_{1}}{\term_{2}}$, and $\after{\term}{\termvar}{\tprop}$. 
%
In the target language, we allow any choice of a 
reduction relation $\reduction$ on programs that extends the $\beta$-reduction of $\hopl$. 
That is, different operational semantics for programs can be invoked by, e.g., extending the notion of values or adding congruence rules describing a particular evaluation strategy.  
%
\Cref{sec:examples} provides an example that invokes a call-by-name evaluation strategy.
As standard, meta-properties of $\hopl$, such as type preservation for programs,  might fail in such extensions. 

\begin{definition}[Pure instance \coqdoc{EffHOL_to_Fw.html}]
\label{def:instance}
   A pure instance of $\effhol$ is an interpretation of $\hopl$ in $\hoplinst$ which interprets the non-pure constructs in $\hopl$ as pure. 
   That is, a pure instance of $\effhol$: (1) assigns to each $\type, \term, \term_1, \term_2, \termvar, \tprop$, an interpretation in $\hoplinst$ of $\typecomp{\type}$ as a type, $\termret{\term}$ as a program, $\termbind{\termvar}{\term_{1}}{\term_{2}}$ as a program, and $\after{\term}{\termvar}{\tprop}$ as a specification, such that the typing and inference rules of $\typecomp{\type}$, $\termret{\term}$, $\termbind{\termvar}{\term_{1}}{\term_{2}}$, and $\after{\term}{\termvar}{\tprop}$ are satisfied by their respective interpretations, 
and (2) picks a (potentially extended) evaluation strategy, such that the reduction and conversion rules are preserved.  
    
\end{definition}

\begin{lemma}[\coqdoc{EffHOL_to_Fw.html\#HOPL_Fw}]
    \label{lem_instance}
    If  $\elsequent{\context}{\tprops}{\tprop}$ is derivable in $\effhol$,   then for every pure instance, the interpreted judgment is derivable in $\hoplinst$.
\end{lemma}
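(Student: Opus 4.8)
The plan is to argue by induction on the derivation of $\elsequent{\context}{\tprops}{\tprop}$ in $\effhol$. First I would fix the interpretation induced by a pure instance: it determines a map $\llbracket-\rrbracket$ on each syntactic category of $\hopl$, defined by structural recursion so that it acts homomorphically on every pure former — in particular as the identity on kinds, and structurally on the $\fomega$-style type and program formers, on the index formers, and on the pure expression and specification formers, recursing into subterms — while on the four non-pure formers $\typecomp{\type}$, $\termret{\term}$, $\termbind{\termvar}{\term_1}{\term_2}$, and $\after{\term}{\termvar}{\tprop}$ it uses the chosen interpretations. Extending $\llbracket-\rrbracket$ pointwise to contexts and to finite sets of specifications, the goal becomes to derive $\llbracket\context\rrbracket \vdash \llbracket\tprops\rrbracket \Rightarrow \llbracket\tprop\rrbracket$ in $\hoplinst$.

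Before the induction I would record three auxiliary facts. (i) \emph{Compatibility with substitution}: $\llbracket A\subst{\typevar:=\type}\rrbracket = \llbracket A\rrbracket\subst{\typevar:=\llbracket\type\rrbracket}$, and likewise for the program- and expression-substitutions occurring in \Cref{fig:typing-full} and \Cref{fig:theory}; since the non-pure formers are interpreted by a fixed schema in the metavariables $\type,\term,\term_1,\term_2,\termvar,\tprop$, this is a routine induction, provided the schema commutes with renaming and substitution on those metavariables — which is exactly what it means for the interpretation of the binders $\termbind{\termvar}{\term_1}{\term_2}$ and $\after{\term}{\termvar}{\tprop}$ to be compositional. (ii) \emph{Preservation of well-typedness}: the image under $\llbracket-\rrbracket$ of a derivable $\hopl$ typing judgment ($\eltype{\kcontext}{\type:\kind}$, $\eltrm{\kcontext\mid\tcontext}{\term}{\type}$, $\eltrm{\context}{\exprs}{\indice}$, or $\eltype{\context}{\tprop}$) is derivable in $\hoplinst$; this is an induction on typing derivations whose pure cases are structural and whose four non-pure cases are precisely the clause in \Cref{def:instance}(1) that the typing rules of the non-pure constructs are satisfied by their respective interpretations. (iii) \emph{Preservation of reduction and conversion}: $\term_1\reduction\term_2$ implies $\llbracket\term_1\rrbracket$ reduces to $\llbracket\term_2\rrbracket$ in the chosen (possibly extended) evaluation strategy, and $\tprop\conv\tprop'$ implies $\llbracket\tprop\rrbracket\conv\llbracket\tprop'\rrbracket$; this is \Cref{def:instance}(2).

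With these in place the induction is immediate for the purely logical rules — \idrule, implication introduction and elimination, the three universal introduction/elimination pairs \proguniintrorule/\progunielimrule, \expuniintrorule/\expunielimrule, \typeuniintrorule/\typeunielimrule, and the (base) membership rules — because the image of such a rule instance is literally an instance of the same rule of $\hoplinst$, using (i) for the substitutions in the elimination and membership rules and the fact that the freshness side-conditions are preserved by $\llbracket-\rrbracket$; so the induction hypotheses for the premises furnish exactly the required premises. The computational rules \convrule\ and \antiredtermrule\ are discharged by (iii) together with (ii) (so that the standing well-typedness requirement on \antiredtermrule\ still holds after interpretation). The remaining cases are the modality rules \modintrorule, \modelimrule\ and \monrule: the induction hypotheses turn their premises into $\hoplinst$-derivations of sequents mentioning $\llbracket\after{\term}{\termvar}{\tprop}\rrbracket$ and its variants, and the required conclusion is precisely the assertion that the $\effhol$ modality rule is satisfied by the chosen interpretation, which is stipulated in \Cref{def:instance}(1). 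Thus the genuinely hard modality cases are delegated to the definition of a pure instance rather than proved, and the main technical obstacle that remains is fact (i) — careful bookkeeping of how the interpretations of the binding constructs $\termbind{\termvar}{\term_1}{\term_2}$ and $\after{\term}{\termvar}{\tprop}$ interact with capture-avoiding substitution.
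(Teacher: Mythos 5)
Your proof is correct and follows the same route as the paper's (mechanized) argument: a structural induction on the $\effhol$ derivation in which the pure rules carry over homomorphically, the modality, conversion, and anti-reduction cases are discharged by the closure conditions (1) and (2) built into \Cref{def:instance}, and the only real work lies in the auxiliary substitution- and typing-preservation lemmas you identify. Nothing essential is missing.
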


After introducing the realizability interpretation (\Cref{thm:soundness}), \Cref{sec:app} provides illustrative examples of pure instances and showcases how the effectful programs provided by the monad can be used to realize additional logical principles.
In the context of \Cref{thm:soundness} we will also discuss a way in which the computational behavior can be axiomatized within $\hopl$ itself, without an explicit syntactic instantiation.

Interestingly,  the above restricted notion of pure instances is already sufficient for capturing all those examples. 
Nonetheless, one might invoke a more general notion of an instance, allowing for different target languages outside of $\hopl$ itself. 
To define such an instance, one has to choose a monad and an evaluation strategy for programs, and then to define the interpretation of the modality 
$\after{\term}{\termvar}{\tprop}$.
Any choice for these parameters can be taken, provided that substitution, reduction and conversion are respected, 
and the corresponding typing judgments and logical rules remain valid. 
%
%

\emnote{Do we want to say here that in any instance the theory can be extended with user-defined specifications and deduction rules, as long as they are valid in this instance. It is a bit tautological, but might be worth being observed, for instance one could do as in \Cref{ex:norm_pres} and add a specification $p \downarrow $, which is true for any typed term, since they normalize (if I'm correct, even to represent non-termination using a partiality monad $1+A$, in the monadic approach we still only have terminating computations since a non-terminating one is represented as a term in $1$, no?)  }
\dknote{Added a little remark above hinting at that, maybe a good compromise?}
\lcnote{I think it is good to explicitly mention the notion of extension here as Etienne suggested (especially given that I did not get the hint :) ), but I am not sure about space}



\section{The Realizability Translation}\label{sec:translations}

%

\label{sec:HOLtoHOPL}


This section established how the $\effhol$ framework can be used to model $\ihol$. To this end, we provide a syntactic realizability translation of $\ihol$ judgments into $\effhol$ judgments, which  
 ensures that the programming language described by $\effhol$ is a realizability model of $\ihol$. 
As mentioned in the overview, the general translation mainly consists of the following four syntactic translations. 
\begin{align*}
   \trkind{-} &\;:\; \mathtt{sort} \to \mathtt{kind}
   &\quad
       \trind{-}{-} &\;:\; \mathtt{sort} \to \mathtt{type} \to \mathtt{index}
    \\
    \trtype{}{-} &\;:\; \mathtt{prop} \to \mathtt{type}
    &\quad
    \ttrspec{\scontext}{-}{-} &\;:\, \mathtt{prop} \to \mathtt{prog}\to \mathtt{spec}
\end{align*}
To accommodate the term language with comprehension terms, there are two additional (canonical) translations: 
\begin{align*}
    \ttretype{}{-} &\;:\; \mathtt{term} \to \mathtt{type}
    \hspace{1.5cm}
   & 
   \ttrtrm{}{-} &\;:\; \mathtt{term} \to \mathtt{expr}
\end{align*}
The recursive translations are given in~\Cref{fig:trans2}.
The proposition-to-specification translation, $\trspec{p}{\sprop}$, is the main realizability translation, converting any $\ihol$ proposition $\sprop$ to a specification describing what it means for a program to realize the proposition.
The realizer is internalized as an extra input $p$ of the appropriate type $\trtype{}{\sprop}$ of realizers.
The sort-to-index translation, $\trind{\type}{\sort}$ reflects the proposition-to-specification translation at the sort level, and the sort-to-kind translation, $\trkind{\sort}$, reflects the proposition-to-types translation at the sort level.
In fact, as the types of sorts and kinds are isomorphic, the last translation is essentially the identity mapping.

The translations of terms are required to translate variables. Since we translate higher-order propositions to higher-order specifications, a variable $\spred$ that appears in a proposition gets translated into a type variable, $\typevar_{\spred}$, that fills the same role variables serve in higher-order logic, namely, to fill in for arbitrary specifications. Similarly, when translating $\spred$ to an expression we use an expression variable, $\epred_{\spred}$.
%
%

The translation of implication states that a realizer of implication takes a realizer of the assumption and computes a realizer of the conclusion, and the type of the realizer reflects this fact as a type of functions from the type of realizers of the assumption to the type of computations of realizers of the conclusion.
The translation of universal quantification says that a realizer of a universally quantified proposition computes a realizer of the proposition for any possible instantiation of the quantified variable, reinterpreted as an assertion variable ranging over type variables of the appropriate kind.
The translation of membership is simply membership in $\effhol$, with the realizer bundled with the element translation and the predicate translation instantiated to the element's type.


\begin{figure*}[t]
  \hspace*{-0.5cm}
\begin{footnotesize}
   \begin{tabular*}{\textwidth}{@{}l@{\hspace{-0.03in}}|@{\hspace{-0.15in}}l@{\hspace{-0.03in}}|@{\hspace{-0.15in}}l@{\hspace{0.5in}}}
   $ \begin{array}[t]{@{\hspace{-0.1in}}l@{\hspace{0.2in}}l@{\hspace{0.07in}}l@{\hspace{0.07in}}l}
       \multicolumn{4}{c}{ \boldsymbol{\ttrkind{-} : \mathtt{sort} \to \mathtt{kind}}}
        \\[\sskip]
        & \ttrkind{\STAR} &\eqdef& \star
        \\[0.1cm]
        &\ttrkind{\predcon{\sort}} &\eqdef& \typecon{\ttrkind{\sort}}  
    \end{array}$
    &
    $\begin{array}[t]{@{}l@{\hspace{0.2in}}l@{\hspace{0.07in}}l@{\hspace{0.07in}}l}
      \multicolumn{4}{c}{\boldsymbol{\ttretype{\scontext}{-}{\sort} : \mathtt{term} \to \mathtt{type}}}\\[\sskip]
        &\ttretype{\scontext}{\spred}{\sort} &\eqdef& \typevar_{\spred}\\[0.1cm]
        &\ttretype{\scontext}{\comprehend{\spred}{\sort}{\sprop}}{\predcon{\sort}} &\eqdef& \typeabs{\typevar_\spred}{\ttrkind{\sort}}{\ttrtype{\scontext , \spred : \sort}{\sprop}}
        \\[0.1cm]
        &\ttretype{\scontext}{\compbase{\sprop}}{\predcon{\sort}} &\eqdef& {\ttrtype{\scontext , \spred : \sort}{\sprop}}\\[0.5em]
    \end{array}$
&
   $\begin{array}[t]{@{}l@{\hspace{0.2in}}l@{\hspace{0.07in}}l@{\hspace{0.07in}}l}
        \multicolumn{4}{c}{\boldsymbol{\ttrtrm{\scontext}{-}{\sort}  : \mathtt{term} \to \mathtt{expr}}}\\[\sskip]
        &\ttrtrm{\scontext}{\spred}{\sort} &\eqdef& \tpred_{\spred}\\[0.1cm]
        &\ttrtrm{\scontext}{\comprehend{\spred}{\sort}{\sprop}}{\predcon{\sort}} &\eqdef& \eforall{\typevar_{\spred} : \ttrkind{\sort}}{
            \tcomp{\termvar}{\ttrtype{\scontext , \spred : \sort}{\sprop}}
               {\tpred_\spred}{\ttrind{\typevar_\spred}{\sort}}
               {\ttrspec{\scontext , \spred : \sort}{\sprop}{x} }
            }     \\[0.1cm]
          &\ttrtrm{\scontext}{\compbase{\sprop}}{\predcon{\sort}} &\eqdef& {
            \tcompbase{\termvar}{\ttrtype{\scontext , \spred : \sort}{\sprop}}
               {\ttrspec{\scontext , \spred : \sort}{\sprop}{x} }
            }
\end{array}$
\\
\midrule
$\begin{array}[t]{@{\hspace{-0.1in}}l@{\hspace{0.2in}}l@{\hspace{0.07in}}l@{\hspace{0.07in}}l}
        \multicolumn{4}{c}{ \boldsymbol{\ttrind{-}{-} : \mathtt{sort} \to \mathtt{type}\to \mathtt{index}}}
        \\[\sskip]
        & \ttrind{\type} {\STAR} &\eqdef& \refpredN{\type} 
        \\[0.1cm]
        &\ttrind{\type} {\predcon{\sort}} &\eqdef& \tindprod{\typevar_{0} : \ttrkind{\sort}}{
        \refpred{\typeapp{\type}{\typevar_{0}}}{\ttrind{\typevar_{0} }{\sort}}}
    \end{array}$
    &
    $\begin{array}[t]{@{}l@{\hspace{0.2in}}l@{\hspace{0.07in}}l@{\hspace{0.07in}}l}
    \multicolumn{4}{c}{   \boldsymbol{\ttrtype{\scontext}{-} : \mathtt{prop} \to \mathtt{type}}}
        \\[\sskip]
        &\ttrtype{\scontext}{\simplies{\sprop_{1}}{\sprop_{2}}} &\eqdef& \typefun{\ttrtype{\scontext}{\sprop_{1}} } {\typecomp{ \ttrtype{\scontext}{\sprop_{2}}}}\\[0.1cm]
        &\ttrtype{\scontext}{\forall_{\spred : \sort}  \sprop} &\eqdef&  \ttypprod{\typevar_\spred :\ttrkind{\sort}} {\typecomp{\ttrtype{\scontext , \spred : \sort}{\sprop}}}\\[0.1cm]
        &\ttrtype{\scontext}{\smem{ \sterm_{1} }{\sterm_{2}} } &\eqdef& \typeapp{\ttretype{\scontext}{\sterm_{2}}{\predcon{\sort}}}{\ttretype{\scontext}{\sterm_{1}}{\sort}}
        \\[0.1cm]
       &\ttrtype{\scontext}{\smembase{\sterm} }{} &\eqdef& {\ttretype{\scontext}{\sterm}{}}
        \\
    \end{array}$
    &
    $\begin{array}[t]{@{}l@{\hspace{0.2in}}l@{\hspace{0.07in}}l@{\hspace{0.07in}}l}
      \multicolumn{4}{c}{   \boldsymbol{\ttrspec{\scontext}{-}{-} : \mathtt{prop} \to \mathtt{prog}\to \mathtt{spec}}}
        \\[\sskip]
        &\ttrspec{\scontext}{\simplies{\sprop_{1}}{\sprop_{2}}}{\term} &\eqdef& \tspectypprod{\termvar_{1} : \ttrtype{\scontext}{\sprop_{1}} }{ \ttrspec{\scontext}{\sprop_{1}}{\termvar_{1}} \supset \after{\termapp{\term}{\termvar_{1}}}{\termvar_{2}}{\ttrspec{\scontext}{\sprop_{2}}{\termvar_{2}}}}\\[0.1cm]
        &\ttrspec{\scontext}{\forall_{\spred : \sort} \sprop}{\term} &\eqdef& \tspeckinprod{\typevar_\spred : \ttrkind{\sort}} \forall_{\tpred_\spred : \ttrind{\typevar_\spred}{\sort} } .
        \after{\termtypeapp{\term}{\typevar_\spred}}{\termvar_{0}}{\ttrspec{\scontext , \spred : \sort}{\sprop}{\termvar_{0}}}\\[0.1cm]
        &\ttrspec{\scontext}{\smem{ \sterm_{1}}{\sterm_{2}} }{\term} & \eqdef& \tmem{ \term }{ \eapp{\ttrtrm{\scontext}{\sterm_{2}}{\predcon{\sort}}}{\ttretype{\scontext}{\sterm_{1}}{\sort}} }{ \ttrtrm{\scontext}{\sterm_{1}}{\sort} }
        \\[0.1cm]
        &\ttrspec{\scontext}{\smembase{ \sterm}}{\term} & \eqdef& \tmembase{ \term }{ \ttrtrm{\scontext}{\sterm}{\predcon{\sort}}}
        \\
    \end{array}$
\end{tabular*}
\end{footnotesize}  
    \caption{The Realizability Translation \coqdoc{HOL_to_EffHOL.html}}
    \label{fig:trans2}
\end{figure*}

%
%
We lift the sort translations $\trkind{-}$ and $\trind{-}{-}$ to  contexts: 
 $$ \centering
    \begin{tabular}{ccc}
         $\ttrkind{\emptyset} := \emptyset$ &  &$\ttrkind{\scontext , \spred : \sort} := \ttrkind{\scontext} , \typevar_{\spred} : \ttrkind{\sort}$ \\[0.1cm]
         $\ttrind{}{\emptyset} := \emptyset$ &  & $\ttrind{}{\scontext , \spred : \sort} := \ttrind{}{\scontext} , \tpred_{\spred} : \ttrind{\typevar_\spred }{\sort}$
    \end{tabular}
    $$    
This allows us to prove that our translation preserves well-formedness judgments. 
%
%
With this we next state and prove the soundness of our translation.
That is, we show that for every sequent $\sprops \Rrightarrow \sprop$ in $\ihol$, our translation produces a computation program 
$\term: M\ttrtype{\scontext}{\sprop}$, such that the triple 
 $\ehtriple{\ttrspec{\scontext}{\sprops}{}}{\termvar_{r}}{\term}{\ttrspec{\scontext}{\sprop}{\termvar_{r}}}$ is provable in $\effhol$.

\begin{theorem}[Soundness \coqdoc{HOL_to_EffHOL.html\#soundness}]\label{thm:soundness}
For each $\ihol$ theorem 
$$\scontext \vdash \sprop_{1} , \ldots , \sprop_{n} \Rrightarrow \sprop $$
there is a program $\term$ with
$$ \eltrm{\trkind{\scontext}\mid \ttrtype{}{\sprops}}{\term}{ M\trtype{}{\sprop} }$$
such that for any collection of specifications $\context\vdash\Phi$:
\[ \htriple{\ttrkind{\scontext},\kcontext
\mid \ttrind{}{\scontext}, \icontext
\mid \ttrtype{}{\sprops} ,\tcontext}
{ \ttrspec{}{\sprops}{},\Phi }
{\termvar_{r}} {\term  }
{\ttrspec{\scontext}{\sprop}{\termvar_{r}}}. \]
where we use $\sprops:=\sprop_1,...,\sprop_n$,  $\ttrspec{}{\sprops}{} := \trspec{\termvar_{1}}{\sprop_{1}} , \ldots , \trspec{\termvar_{n}}{\sprop_{n}}$, and $\ttrtype{}{\sprops} := \termvar_{1} :\ttrtype{\scontext}{\sprop_{1}} , \ldots , \termvar_{n} :\ttrtype{\scontext}{\sprop_{n}}$.
\end{theorem}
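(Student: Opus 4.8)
The plan is to prove the statement by induction on the derivation of the $\ihol$ sequent $\scontext \vdash \sprop_{1},\dots,\sprop_{n} \Rrightarrow \sprop$, extracting the evidence program $\term$ and the $\effhol$ derivation of the triple simultaneously, one $\ihol$ rule at a time. The extra generality in the statement --- the arbitrary ambient specifications $\Phi$ and the context components $\kcontext, \icontext, \tcontext$ beyond the images of $\scontext$ and $\sprops$ --- is exactly what lets this induction go through: the rules that extend the context, namely \impintrorule (which adds a program hypothesis) and \holuniintrorule (which adds a sorted variable, hence after translation a type variable $\typevar_\spred$ and an expression variable $\tpred_\spred$), force the inductive hypothesis to be read in an enlarged context, and plain weakening absorbs the surplus. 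Throughout I would maintain the invariant that every $\effhol$ judgment stays well-typed --- in particular every $\term$ is checked against $M\ttrtype{\scontext}{\sprop}$ --- since \antiredtermrule may only be applied when the reduction used preserves typing.

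Before the induction I would establish the central technical lemma: \emph{the realizability translation commutes with substitution, up to the conversion $\conv$ of $\effhol$}. Concretely, for a source term $\sterm$ of sort $\sort$ over $\scontext$, one shows by mutual induction on the syntax of propositions and terms that $\ttrtype{\scontext}{\sprop\subst{\spred := \sterm}} \conv \ttrtype{\scontext,\spred:\sort}{\sprop}\subst{\typevar_\spred := \ttretype{\scontext}{\sterm}{\sort}}$, that $\ttrspec{\scontext}{\sprop\subst{\spred := \sterm}}{\term} \conv \ttrspec{\scontext,\spred:\sort}{\sprop}{\term}\subst{\typevar_\spred := \ttretype{\scontext}{\sterm}{\sort}}\subst{\tpred_\spred := \ttrtrm{\scontext}{\sterm}{\sort}}$, and the analogous commutations for the term translations $\ttretype{\scontext}{-}{\sort}$ and $\ttrtrm{\scontext}{-}{\sort}$. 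Conversion rather than syntactic equality is unavoidable here, because substituting a comprehension term for a variable inside a membership proposition turns the $\ttrtype{}{-}$-image of a type-variable application into a genuine type-level $\beta$-redex, which $\conv_\type$ then contracts, and the same phenomenon occurs for expressions via $\conv_\exprs$. From this lemma together with the membership rules of $\effhol$ and \convrule one immediately gets the useful corollary that, for every $\term$, the specifications $\ttrspec{\scontext}{\smem{\sterm}{\comprehend{\spred}{\sort}{\sprop}}}{\term}$ and $\ttrspec{\scontext}{\sprop\subst{\spred:=\sterm}}{\term}$ are interderivable in $\effhol$ (and likewise base comprehension versus base membership), which dispatches the four $\ihol$ membership rules almost for free.

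The inductive cases then track the shape of each $\ihol$ rule. For \idrule take $\term := \termret{\termvar_{i}}$ for the hypothesis variable $\termvar_{i}$; the triple unfolds by \modintrorule to the assumption $\ttrspec{\scontext}{\sprop_{i}}{\termvar_{i}}$. For \impintrorule, given the evidence $\term'$ for $\sprop_{2}$ in the context extended by $\termvar : \ttrtype{\scontext}{\sprop_{1}}$, take $\term := \termret{\termabs{\termvar}{\ttrtype{\scontext}{\sprop_{1}}}{\term'}}$; unfold the goal with \modintrorule, then \proguniintrorule and \impintrorule, contract the resulting $\beta$-redex with \antiredtermrule, and apply the inductive hypothesis. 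For \impelimrule, with evidences $\term_{1}$ (for $\simplies{\sprop_{1}}{\sprop_{2}}$) and $\term_{2}$ (for $\sprop_{1}$), take $\term := \termbind{f}{\term_{1}}{\termbind{x}{\term_{2}}{\termapp{f}{x}}}$; peel the two binds with \modelimrule, push each inductive hypothesis inward with \monrule so that $\ttrspec{\scontext}{\simplies{\sprop_{1}}{\sprop_{2}}}{f}$ and $\ttrspec{\scontext}{\sprop_{1}}{x}$ become available, then instantiate the program-quantifier of the implication specification of $f$ at $x$ via \progunielimrule and discharge its premise by \impelimrule. For \holuniintrorule note that $\spred$ is fresh so $\ttrtype{\scontext,\spred:\sort}{\sprops} = \ttrtype{\scontext}{\sprops}$; take $\term := \termret{\termtypeabs{\typevar_\spred}{\ttrkind{\sort}}{\term'}}$ and argue as for \impintrorule but with \typeuniintrorule and \expuniintrorule, observing that the evidence program never mentions the expression variable $\tpred_\spred$. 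For \holunielimrule, with evidence $\term'$ for $\sforall{\spred:\sort}{\sprop}$ and source term $\sterm$, take $\term := \termbind{g}{\term'}{\termtypeapp{g}{\ttretype{\scontext}{\sterm}{\sort}}}$; peel with \modelimrule, push the inductive hypothesis inward with \monrule, instantiate the kind- and index-quantifiers of the specification of $g$ at $\ttretype{\scontext}{\sterm}{\sort}$ and $\ttrtrm{\scontext}{\sterm}{\sort}$ via \typeunielimrule and \expunielimrule, and rewrite along the substitution lemma using \convrule. Finally, for the membership rules \memintrorule, \memelimrule, \memunintrorule and \memunelimrule the evidence program is unchanged (its type is the same up to $\conv$), and the triple follows from the interderivability corollary above, threaded through \monrule.

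I expect the main obstacle to be the substitution lemma --- not conceptually, but in its bookkeeping: there are six mutually recursive translations, each threading a kind context and an index context and, in the expression cases, binding the auxiliary type variable $\typevar_\spred$ and expression variable $\tpred_\spred$, all of which must be substituted coherently and must interact correctly with the $\conv$-redexes created by the membership clauses. A secondary difficulty is keeping every intermediate $\effhol$ judgment well-typed despite the typing premises being elided from \Cref{fig:theory}, together with orchestrating \modelimrule and \monrule so that the computation types $M(\cdot)$ line up across the successive binds in \impelimrule and \holunielimrule. The accompanying Rocq mechanization is what ultimately makes the full case analysis trustworthy.
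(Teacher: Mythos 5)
Your proposal is correct and follows essentially the same route as the paper's own proof: induction on the $\ihol$ derivation with the same extracted realizers ($\termret{\termvar_i}$, $\termret{\termabs{\termvar}{\cdot}{\term'}}$, nested binds with application, $\termret{\termtypeabs{\typevar_\spred}{\cdot}{\term'}}$, bind-then-type-application, and an unchanged program for the membership rules), the same orchestration of \modintrorule/\modelimrule/\monrule/\antiredtermrule/\convrule, and the same substitution-preservation lemma at its core. The only cosmetic difference is that the paper's \Cref{lem:sub} holds as a syntactic equality, with conversion entering only in the membership cases where the translated comprehension produces a $\beta$-redex, whereas you build the conversion into the lemma itself --- a harmless weakening that still suffices.
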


In this setting, we can then view the translation as defining a realizability model for $\hol$ by saying that a theorem of $\hol$ is valid whenever there exists a program $p$ 
satisfying the statements of the soundness theorem.
Importantly, the proof is constructive, in the sense that it actually constructs the realizer $p$ from the derivation in $\hol$. In fact, the proof shows how each rule in $\hol$ corresponds to a program construct that is sound for the realizability translation, \ie which builds a realizer of the conclusion from realizers of the premises. As a consequence, the programs extracted from the soundness proof are modular with respect to $\hol$ derivations. In particular, if a specific instance of $\hopl$ allows us to provide an effectful realizer for some proposition of $\hol$ through the translation, then this proposition could be taken as an extra axiom in $\hol$. Indeed, since it has a realizer this proposition would be valid in the induced realizability model, as would any theorem derived using this theorem, by modularity of the extracted realizer. We shall illustrate this in ~\Cref{sec:examples}, using an instance of $\hopl$ corresponding to Krivine realizability, relying on control operators to validate classical reasoning principles.

Importantly,  the program $p$ obtained in the translation is uniform in that it does not rely on the specifics of the monad and the modality. Hence, in particular, it must also work for consistent modalities that do not permit evidence of falsity, i.e., where 
$\elsequent{\term : \typecomp{\type}}{\after{\term}{\termvar}{\bot}}{\bot}$ is provable.
Therefore, if the modality itself is consistent, the only way to give evidence for $\top \Rightarrow \bot$  is if it holds in the meta-theory, thus ensuring the object theory is consistent as long as the meta-theory is. 

By the general formulation of the soundness theorem, the realizability translation of $\ihol$ sequents is valid in the presence of arbitrary well-typed assumptions $\Phi$. 
This in itself is not surprising, due to weakening, but it offers additional applicativity in that the additional set of assumptions can be used, for example, to introduce new constants in computation types with axiomatized behavior.
This generality enables treating 
the translation of $\ihol$ as a generic realizability translation into particular computational settings, enabling consistency and independence proofs.
This can be done via, e.g., pure instances of $\hopl$, as the instance interpretation can be composed with the realizability translation to obtain a sound realizability translation into the pure instance of $\effhol$ (\coqdoc{EffHOL_to_Fw.html\#soundness_Fw}).



\lcnote{Can we have an ``extensibility'' result? allowing us to go from $\effhol_{id}$ to extensions with other programs not necessarily with monads? still guaranteeing soundness}

\lcnote{See in what sense we can claim compositionality : can we have a core type system for which we prove the soundness, and then allow for additional features in some of the components provided a simple check regarding the translation (compare with the rust situation !)}\emnote{tried to do this above}

\lcnote{If we have time and we think it is interesting, add an example/discussion of how we can do pairs, demonstrating the translation and resulting realize. Kindda like what's in the commented old version}


\section{Illustrative Applications}\label{sec:app}
This section illustrates the versatility of the $\hopl$ framework through various applications. First, we highlight its uniformity by seamlessly reproducing classical realizability results. Next, we demonstrate its use in a simple (relative) consistency proof of Markov's Principle. 
Finally, we showcase its generality, enabling robust computational interpretations across diverse effectful instances. 
Due to space limitations, we focus on the first application and briefly outline the other~two.

\subsection{Krivine Classical Realizability}
\label{sec:examples}

This section illustrates how Krivine classical realizability can
be seen as a particular pure instance of $\hopl$. 
Krivine realizability was introduced as a complete reformulation of standard
intuitionistic realizability,  which was inherently incompatible with classical logic, by building on Griffin's seminal observation that the control operator
\callcc~can be typed with Peirce's law~\cite{griffin90,Krivine09}.
Its original presentation uses control operators in a direct-style fashion, based on the $\lambda_c$-calculus, an extension of the $\lambda$-calculus with \callcc. To that end, the corresponding operational semantics is then expressed using processes $\cut{p}{\pi}$ of an abstract machine, where $p$ is a program and $\pi$ is a stack. 
The instruction \callcc~allows programs to backtrack, as shown by its operational semantics:
\[
\cut{\callcc}{p\cdot \pi}       \triangleright \cut{p}{\throw{}{\pi}\cdot\pi}\quad
\cut{\throw{}{\pi}}{p\cdot \pi'}\triangleright \cut{p}{\pi}
\]
When applied to a stack $\pi$, \callcc~provides its first argument with a program $\throw{}{\pi}$, which, at any future point, can drop the current stack to restore $\pi$.
Such instructions can be compiled to the pure $\lambda$-calculus using a continuation-passing style (CPS) translation.
At the type level, this translation corresponds to a negative translation embedding classical logic into intuitionistic logic~\cite{griffin90}.
Oliva and Streicher later demonstrated that Krivine realizability can be obtained by combining a standard  intuitionistic realizability interpretation with a CPS translation~\cite{OlivaStreicher08}. 
More generally, this suggests that effectful interpretations can  be defined via an indirect presentation of effects via a well-chosen monad, rather than  relying on a direct-style representation. 
To illustrate this, we show how Oliva and Streicher's formulation of Krivine realizability can be naturally expressed in our framework using the continuation monad.
We here only sketch certain salient features of Krivine realizability, and for more details, we refer the readers to papers focusing on Oliva and Streicher's approach~\cite{OlivaStreicher08,Miquel11,GardelleMiquey23}.

We define $\hoplk$ as a pure instance of $\hopl$ based on the continuation monad.
Namely, we first define the monad $\typecomp{\type}  \eqdef  \neg \neg \type$ at the level of types
where $\neg \tau$ denotes the type $\typefun{\type}{\bot}$ (using the standard encoding $\bot \eqdef\ttypprod{\typevar:\star}{\typevar}$).
Intuitively, one can think of a computation of type $\neg\neg \type$ as the CPS translation
of a $\lambda_c$-term, therefore waiting for a continuation of type $\neg \type$ which  accounts for the translation of a stack. As such, via the continuation monad a process $\cut{p}{\pi}$ formed by the interaction of a computation $p$ and a continuation $\pi$ simply corresponds to the application $\termapp{p}{\pi}$.

\let\oldtypecomp\typecomp
\renewcommand{\typecomp}[1]{\neg\neg#1}
We assume programs are evaluated in a call-by-name fashion\footnote{This will have the benefit of easing some definitions and proofs that rely on $\beta$-reductions or substitutions, which are easier to handle in call-by-name.}, taking advantage of the fact that $\effhol$ is parametric with respect to the choice of an evaluation strategy. 
We define the return and bind of the monad as expected by:
\[
\termret{p}       \eqdef  \termabs{k}{\neg \type}{\termapp{k}{p}}
\quad
\termbind{\,\termvar}{\term_1}{\,\term_2} \eqdef  \termabs{k}{\neg \type_2}{\termapp{\term_1}{(\termabs{x}{\type_1}{\termapp{\term_2}{k}})}} 
 \]
 where $p$ (resp. $p_1$, $p_2$) is of type $\type$ (resp. $\typecomp{\type_1}$, $\typecomp{\type_2}$). It is then easy to verify that they satisfy the associated typing rules. 

Lastly, we need to instantiate the modality $\after{\termvar}{\term}{\tprop}$. 
For this, we follow the intuitions of Krivine's classical realizability, where realizers are defined using an orthogonality relation with respect to a set of stacks acting as opponents~\cite{Krivine09}. 
In a call-by-value setting, the latter is itself defined by orthogonality to a set of values acting as realizers~\cite{munch09}. 
The orthogonality relation is typically parameterized by a set of processes $\pole$, which intuitively encompasses the set of intended correct computations in the chosen realizability model. Then, the orthogonal of a set of stacks $A$ is defined as the sets of programs that successfully compute in front of any stack in $A$, \ie~$A^\bot =\{p \mid \forall \pi\in A.  \cut p \pi \in \pole\}$.
The realizability translation of an $\hol$ proposition $\sprop$ can be seen
as the expression $\ecompbase{x}{\ttrtype{}{\sprop}}{\ttrspec{}{\sprop}{x}}$ of index $\refpredN{\ttrtype{}{\sprop}}$.
We thus define an orthogonality relation on expressions $e$ of index $\refpredN{\type}$
\[e^\bot \eqdef \ecompbase{x}{\neg \type}{
    \tspectypprod{x':\type}{
        \timplies
            {(\tmembase{x'}{e})}
            {(\tmembase{\termapp x {x'}}{\pole})}
    }
}
\] 
where, for simplicity, we take 
$\pole\eqdef \ecompbase{x}{\bot}{\bot}$.
Observe that if $e$ is an expression of index $\refpredN{\type}$, then $e^\bot$ is of 
index $\refpredN{\neg\type}$, and therefore $e^{\bot\bot}$ has index $\refpredN{\typecomp{\type}}$.
The biorthogonality relation thus allows us to lift expressions on a given type $\tau$ to expressions on the corresponding computational type $\typecomp{\tau}$.
Viewing a well-formed specification $\tprop$ for $\termvar:\type$ as a proposition defining a set of values, for a program $p$ of type $\typecomp{\type}$, the modality $\after{\term}{\termvar}{\tprop}$ can be viewed as expressing that $p$ is a valid computation with respect to $\tprop$, \ie a realizer. Following Krivine realizability, we can define it by biorthogonality:
\[
\after{\term}{\termvar}{\tprop}\eqdef \tmembase{\term}{\ecompbase{\termvar}{\type}{\tprop}^{\bot\bot}}
\] 
\emnote{one reviewer found the former version of the last paragraphe mysterious, I hope it's now clearer} 

\begin{proposition}\label{lm:hopl_classical}
$\hoplk$ is a valid pure instance of $\hopl$. 

        
\end{proposition}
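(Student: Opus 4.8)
The plan is to discharge the two conditions of \Cref{def:instance}: that the interpretations fixed above for $\typecomp{\type}\eqdef\neg\neg\type$, $\termret{\term}$, $\termbind{\termvar}{\term_{1}}{\term_{2}}$, and the modality $\after{\term}{\termvar}{\tprop}$ validate every typing rule of \Cref{fig:typing-full} and every inference rule of \Cref{fig:theory} that mentions these four constructs, and that the call-by-name reduction chosen on the target respects the reduction and conversion rules of $\hopl$.

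The typing obligations are routine inductions. Since $\bot$ is a closed type of kind $\star$ and function types preserve kind $\star$, $\typecomp{\type}$ has kind $\star$ whenever $\type$ does, while the typings of $\termret{\term}$ and $\termbind{\termvar}{\term_{1}}{\term_{2}}$ are the ones already recorded in the text; and writing $e\eqdef\ecompbase{\termvar}{\type}{\tprop}$, the expression typing rules give $e:\refpredN{\type}$, hence $e^{\bot}:\refpredN{\neg\type}$ and $e^{\bot\bot}:\refpredN{\typecomp{\type}}$, so $\after{\term}{\termvar}{\tprop}=\tmembase{\term}{e^{\bot\bot}}$ is well-formed whenever $\term:\typecomp{\type}$ and $\tprop$ is well-formed over $\termvar:\type$.

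The heart of the proof is to verify the three modality rules under the biorthogonal reading, which I would carry out syntactically inside $\hoplinst$ (identifying each of the four constructs with its instance definition) using the membership rules, the implication and universal rules, and $\antiredtermrule$; these are exactly the saturation arguments of Oliva--Streicher-style classical realizability. For $\modintrorule$, from $\tprops\Rightarrow\tprop\subst{\termvar:=\term}$ one obtains $\tmembase{\term}{e}$ by $\memunintrorule$, and then $\tmembase{\termret{\term}}{e^{\bot\bot}}$ follows: for any $k$ with $\tmembase{k}{e^{\bot}}$ one has $\termapp{(\termret{\term})}{k}\reduction\termapp{k}{\term}$, and instantiating the membership of $k$ at $\term$ yields $\tmembase{\termapp{k}{\term}}{\pole}$, so $\antiredtermrule$ closes the goal. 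For $\monrule$, $e\mapsto e^{\bot\bot}$ is monotone (a double application of the contravariant $e\mapsto e^{\bot}$), and the premise $\tprops,\tprop_{1}\Rightarrow\tprop_{2}$ over $\termvar:\type$ supplies the inclusion of the underlying comprehensions. The main obstacle is $\modelimrule$: writing $f\eqdef\ecompbase{\termvar_{2}}{\type_{2}}{\tprop}$ and $g$ for the comprehension of the (interpreted) $\after{\term_{2}}{\termvar_{2}}{\tprop}$ over $\termvar_{1}:\type_{1}$, one must derive $\tmembase{\termbind{\termvar_{1}}{\term_{1}}{\term_{2}}}{f^{\bot\bot}}$ from $\tmembase{\term_{1}}{g^{\bot\bot}}$. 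Unfolding the bind together with $f^{\bot\bot}$, it suffices to show that for every $k$ with $\tmembase{k}{f^{\bot}}$ the term $\termapp{\term_{1}}{(\termabs{\termvar_{1}}{\type_{1}}{\termapp{\term_{2}}{k}})}$ lies in $\pole$; instantiating the membership of $\term_{1}$ at the \emph{stack} $\termabs{\termvar_{1}}{\type_{1}}{\termapp{\term_{2}}{k}}$ reduces this to showing that stack is in $g^{\bot}$, which in turn follows by feeding back $\tmembase{k}{f^{\bot}}$ after one more unfolding. This chain requires carefully tracking the free occurrence of $\termvar_{1}$ in $\term_{2}$ and repeated appeals to $\antiredtermrule$ along $\beta$-steps that all act on variable (hence value) arguments.

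For the second condition, the type- and expression-level conversions and the $\beta$-rules for type and term abstraction are preserved verbatim, the instance interpretation being a homomorphism that does not touch them. The delicate case is the monad law $\termbind{\termvar}{\termret{\term_{1}}}{\term_{2}}\reduction\term_{2}\subst{\termvar:=\term_{1}}$: its image reduces, by one $\beta$-step firing the $\termret$-application and then one call-by-name $\beta$-step substituting the possibly non-value $\term_{1}$, to $\termabs{k}{\neg\type_{2}}{\termapp{(\term_{2}\subst{\termvar:=\term_{1}})}{k}}$, which is the $\eta$-expansion of $\interp{\term_{2}\subst{\termvar:=\term_{1}}}$ --- this is precisely why a call-by-name strategy is chosen, and adjoining the corresponding $\eta$-contraction to the reduction relation (legitimate, since \Cref{sec:instances} permits any extension of $\beta$-reduction) makes the law hold on the nose, while anti-reduction closure for the modality is immediate from $\antiredtermrule$. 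Collecting these verifications shows that $\hoplk$ meets \Cref{def:instance}.
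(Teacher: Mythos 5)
Your proposal is correct and follows essentially the same route as the paper's own proof: the typing obligations are dispatched routinely, and the three modality rules are derived inside $\hoplinst$ by exactly the standard Krivine-style orthogonality arguments the paper uses — \memunintrorule/\memunelimrule, instantiation of $\term_1$'s membership at the ``stack'' $\termabs{\termvar_1}{\type_1}{\termapp{\term_2}{k}}$, repeated \antiredtermrule\ steps, and contravariance of $(\cdot)^\bot$ for \monrule. The one point where you go beyond the paper is the bind-return reduction law: the paper simply asserts it is a consequence of the call-by-name definitions, whereas you note that the image only reduces to an $\eta$-expansion of the interpreted right-hand side and close the gap by extending the target reduction relation, which is a legitimate (and arguably more careful) move permitted by \Cref{def:instance}'s allowance of extended reduction relations.
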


Since $\hoplk$ is a valid pure instance of $\hopl$, we can take advantage of the continuation monad to obtain effectful realizers for classical reasoning principles.  
Recall that Peirce's law is defined by $\mathrm{Peirce}\eqdef\sforall{a:\STAR}{\sforall{b:\STAR}{\simplies{(\simplies{(\simplies{a}{b})}{a})}{a}}}$, abusing notations to write $a$ for the proposition $\smembase a$. Through the translations, we get:
\[\scalebox{0.93}{$\ttrtype{}{\mathrm{Peirce}} \! = \typeabs{X}{\star}{\typecomp{\big(\typeabs{Y}{\star}{\typecomp{
    \typefun{(\typefun{(\typefun{X}{\typecomp{Y}})}{\typecomp X})}{\typecomp{X}}
}}\big)}}$}\]
To obtain a realizer of Pierce's law via the realizability interpretation of $\hol$ in  $\hoplk$, we define the programs:\\
    $\begin{array}{>{\hspace{-2mm}}r@{~}c@{~}l}
      \callcc &\eqdef &
        \termtypeabs{X}{\star}{\termret{
            \termtypeabs{Y}{\star}{\termret{
                \callcc^{X,Y}           
            }}
        }}\\
    \callcc^{\tau,\tau'} &\eqdef &
        
               \termabs{z}{\typefun{(\typefun{\type}{\typecomp{\type'}})}{\typecomp \type}}
                    {\termabs{k}{\neg \type}{\termapp{\termapp{z}{\throw{\type,\type'}{k}}}k}}\\
    \throw{\type,\type'}{k}& \eqdef& \termabs{x}{\type}{\termabs{k'}{\neg \type'}{\termapp{k}{x}}}               
    \end{array}
  $
Observe that $\callcc^{\tau,\tau'}$ and $\throw{\type,\type'}{k}$
are essentially the CPS translations of the corresponding instructions in Krivine's $\lambda_c$-calculus presented earlier (in particular, they have the same computational behavior up to the CPS), while $\callcc$ handles the polymorphic aspect.
Thus, it is no surprise that $\callcc$ serves as a realizer for Peirce's law.
\begin{theorem}\label{thm:callcc} 
The following hold:
\begin{enumerate}[leftmargin=*]
    \item $\eltrm{}{\callcc}{\ttrtype{}{\mathrm{Peirce}}}$
    \item $\elsequent{}{\top}{\after{\termret{\callcc}}x{\ttrspec{}{\mathrm{Peirce}}x}}$
\end{enumerate}
\end{theorem}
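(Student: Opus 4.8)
The plan is to establish the two claims in order, since the modality on the right of~(2) presupposes claim~(1), and several steps of~(2) will invoke $\antiredtermrule$, which is only licensed when the contracted $\beta$-redex preserves typing. Everything takes place inside the pure instance $\hoplk$, whose typing and deduction rules coincide with those of $\hopl$ by \Cref{lm:hopl_classical}, with $\typecomp{\tau}$, $\termret{\term}$, $\termbind{\termvar}{\term_1}{\term_2}$ and $\after{\term}{\termvar}{\tprop}$ replaced by their continuation-monad definitions.

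Claim~(1) is a routine type derivation from \Cref{fig:typing-full}. Peeling the two type abstractions and the two returns in $\callcc = \termtypeabs{X}{\star}{\termret{\termtypeabs{Y}{\star}{\termret{\callcc^{X,Y}}}}}$ reduces the goal to typing $\callcc^{X,Y}$ at $\typefun{(\typefun{(\typefun{X}{\typecomp{Y}})}{\typecomp{X}})}{\typecomp{X}}$ under $X:\star,\,Y:\star$. Reading $\typecomp{\tau}$ as $\typefun{(\neg\tau)}{\bot}$, from $k:\neg X$ one obtains $\throw{X,Y}{k} : \typefun{X}{\typecomp{Y}}$ (since $\termapp{k}{x}:\bot$, so $\termabs{k'}{\neg Y}{\termapp{k}{x}}:\typecomp{Y}$), hence $\termapp{z}{(\throw{X,Y}{k})}:\typecomp{X}$, hence $\termapp{\termapp{z}{(\throw{X,Y}{k})}}{k}:\bot$, and the two outer abstractions yield the claimed function type. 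No conversion or effect rules are needed here.

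For claim~(2), apply $\modintrorule$ to reduce the goal to $\elsequent{}{\top}{\ttrspec{}{\mathrm{Peirce}}{\callcc}}$, then unfold the translations of \Cref{fig:trans2}: each source quantifier $\sforall{-:\STAR}{-}$ contributes a kind quantifier, a predicate quantifier over an index $\refpredN{-}$, and a $\termtypeapp{(-)}{-}$-modality; each source implication $\simplies{-}{-}$ contributes a program quantifier, a hypothesis, and a $\termapp{(-)}{-}$-modality; each atom $\smembase a$ becomes $\tmembase{-}{y_a}$. Applying $\typeuniintrorule$, $\expuniintrorule$, $\proguniintrorule$ and $\impintrorule$ introduces the type variables $X,Y$, the predicate variables $y_a:\refpredN{X}$ and $y_b:\refpredN{Y}$, the program $z$, and the hypothesis $H \eqdef \ttrspec{}{(a\to b)\to a}{z}$; in between, the head redexes $\termtypeapp{\callcc}{X}$, the ensuing inner type application, and $\termapp{\callcc^{X,Y}}{z}$ (all legal in $\hoplk$'s call-by-name reduction) are contracted with $\antiredtermrule$ and the exposed $\termret{-}$ discharged with $\modintrorule$. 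This leaves the core obligation $\after{\termabs{k}{\neg X}{\termapp{\termapp{z}{(\throw{X,Y}{k})}}{k}}}{x_2}{\tmembase{x_2}{y_a}}$.

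The rest is the classical Krivine argument carried through the biorthogonality definition $\after{\term}{\termvar}{\tprop} \eqdef \tmembase{\term}{\ecompbase{\termvar}{\type}{\tprop}^{\bot\bot}}$. Put $A \eqdef \ecompbase{x_2}{X}{\tmembase{x_2}{y_a}}$. Unfolding $A^{\bot\bot}$ via $\memunintrorule$ and $\memunelimrule$, it suffices to assume a continuation $\kappa:\neg X$ with $\tmembase{\kappa}{A^{\bot}}$ and, after contracting the body with $\antiredtermrule$, to derive $\tmembase{\termapp{\termapp{z}{(\throw{X,Y}{\kappa})}}{\kappa}}{\pole}$. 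Instantiating $H$ at the program $\throw{X,Y}{\kappa}$ (well typed by~(1)) produces $\after{\termapp{z}{(\throw{X,Y}{\kappa})}}{v}{\tmembase{v}{y_a}}$ once $\ttrspec{}{a\to b}{\throw{X,Y}{\kappa}}$ is supplied, and unfolding the $A^{\bot\bot}$ inside that conclusion and feeding it $\kappa$ closes the goal via $\tmembase{\kappa}{A^{\bot}}$. To prove $\ttrspec{}{a\to b}{\throw{X,Y}{\kappa}}$: introduce $u:X$ with $\tmembase{u}{y_a}$, contract $\termapp{(\throw{X,Y}{\kappa})}{u} \reduction \termabs{k'}{\neg Y}{\termapp{\kappa}{u}}$, unfold the $Y$-modality, abstract over an arbitrary continuation $k':\neg Y$ in the orthogonal of $y_b$ — which is then simply discarded — contract once more, and finish with $\tmembase{\termapp{\kappa}{u}}{\pole}$, obtained by instantiating $\tmembase{\kappa}{A^{\bot}}$ at $u$ together with $\tmembase{u}{y_a}$ (equivalently $\tmembase{u}{A}$, via $\memunintrorule$/$\memunelimrule$). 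The main obstacle is purely organizational bookkeeping: each source connective triggers a lift-to-computations step in which the modality must be expanded through $\bot\bot$, a fresh orthogonal continuation abstracted, the head $\beta$-redex of the CPS-translated control instruction exposed with $\antiredtermrule$, and the $a$-continuation $\kappa$ kept apart from the discarded $b$-continuation $k'$. Once this pattern is recognized the derivation is essentially forced, which is precisely why it is ``no surprise'' that $\callcc$ realizes Peirce's law.
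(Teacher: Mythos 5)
Your proposal is correct and follows essentially the same route as the paper: claim (1) by the routine $\fomega$-style typing derivation for $\callcc^{X,Y}$ wrapped in the two type abstractions and returns, and claim (2) by the Krivine adequacy argument through the biorthogonal modality, whose crux is the same key lemma — that $\throw{X,Y}{\kappa}$ realizes the translation of $a\supset b$ whenever $\kappa$ lies in the orthogonal of the interpretation of $a$, proved by contracting $\termapp{\termapp{(\throw{X,Y}{\kappa})}{u}}{k'}$ to $\termapp{\kappa}{u}\in\pole$ via \antiredtermrule\ — followed by instantiating the hypothesis on $z$ at $\throw{X,Y}{\kappa}$ and feeding $\kappa$ to the resulting biorthogonal. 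You merely spell out the quantifier/modality bookkeeping that the paper compresses into "straightforward, by again using anti-reduction and the previous fact."
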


\let\typecomp\oldtypecomp

\subsection{Consistency of Markov's Principle}\label{sec:MP}
\newcommand{\MP}{\mathsf{MP}}
In~\Cref{sec:examples}, since we defined a realizability model that validates Peirce's law, we obtain that $\hol$ is consistent with it and all its fragments, for instance Markov's principle ($\MP$) allowing double negation elimination of $\Sigma_1$ formulas.
In fact, using partiality as a computational effect in $\effhol$, this (relative) consistency result can be obtained in a more direct way.
Note that it is well-known that Kreisel's modified realizability provides a model for the negation of Markov's Principle~\cite{kreisel1958non}, so the principle is in fact independent.

Arguing informally, we stipulate a type variable $\N$ of base kind together with constants $O:\N$, $S:\N\to \N$ and $\mathsf{find}:(\N\to \N)\to  M(\N)$.
To specify their intended behavior, we  assume a set $\Phi$ expressing the usual axioms of (higher-order) arithmetic and including a specification of the form
\[ \ehtriple{\exists n:\N.\, f\,n=O}{\termvar}{\mathsf{find}\,f}{f\,x = O} \]
expressing that $\mathsf{find}$ implements some sort of linear search.
These few ingredients now allow one to quickly explore which consequences the presence of linear search has on the realized logic.
For instance, one could set out to verify that under certain circumstances (for instance assuming $\MP$ on the meta-level) a realizer for the translation of $\MP$, stating that $\neg\neg(\exists n:\N.\, f\,n=O)$ implies $\exists n:\N.\, f\,n=O$, can be constructed from $\mathsf{find}$.
So if indeed there is a $p$ such that
\[ \ehtriple{\Phi}{\termvar_r}{p}{\ttrspec{}{\MP}{\termvar_{r}}} \]
we immediately obtain that $\hol$ cannot derive $\neg\mathsf{ MP}$, as otherwise by~\Cref{thm:soundness} there would be evidence $p'$ with
\[ \ehtriple{\Phi}{\termvar_r}{p'}{\ttrspec{}{\neg\MP}{\termvar_{r}}} \]
yielding contradicting realizability triples. To obtain formal certainty that such a contradiction cannot follow from the axiomatic extension itsef, one can routinely verify that a consistent instance of $\effhol$ 
based on the monad $M(\tau):=\mathbb{N}\to\tau$ implemented by step-indexing allows to define a deterministic search function $\mathsf{find}$ as axiomatized~(cf.~\cite{Richman1983}), and that assuming $\MP$ on the meta-level is consistent, as common in Kleene realizability.


\subsection{Memoizing Countable Choice}
\label{sec:memcc}
 This section demonstrates an application of $\hopl$ that relies on its uniformity to identify core computational capabilities needed for realizing specifications. 
We illustrate this using the principle of Countable Choice ($\cc$), 
whose validity status drastically changes w.r.t. the underlying effectful capabilities~\cite{cohen2019effects}.
While deterministic computation guarantees $\cc$, non-deterministic computation can negate $\cc$, and stateful computation can validate $\cc$ via  memoization. 
In fact, different forms of memoization techniques have been used in various settings to prove $\cc$ (\eg,~\cite{Berardi+Bezem+Coquand:jsyml:1998,Herbelin:DC,Blot:barRecursion2013,MiqueyDC18,krivine:2016}), suggesting it is somehow a more robust model of $\cc$.
The generality of $\hopl$  enables the formalization of such a robustness property,  proving that memoization techniques indeed guarantee the validity of $\cc$ even in the presence of other effects.

To simplify the formalization of $\cc$ we again take $\N$ to be a standard encoding of the natural numbers in $\hol$, and use a predicate $\isnat$ verifying that $n$ is a natural number\footnote{This enables relativized quantifiers over $\N$, \ie formulas of the shape $A \equiv \forall n:\N.\isnat \supset A'$ where $\isnat$ acts as a formula realized by the encoding of $n$.}
%
For readability, we also use pairs, which can be standardly encoded in $\hol$.
$\cc$ over a type $\type$ can be axiomatized in $\hol$ as:
$$\begin{array}{l@{\hspace{-2.3cm}}c}
\cc:= \sforall{\spred_{1} : \predcon{\N \times \type}}&\\ 
&{\simplies{ \Total\left(\spred_{1}\right) }{ \sexists{\spred_{2} : \predcon{\N \times \type}}{\sconj{\spred_{2} \subseteq \spred_{1}}{\sconj{\Det\left(\spred_{2}\right)}{\Total\left(\spred_{2}\right)}}} }} 
\end{array}$$
where $\Total(\spred) = \forall n : \N.\; \exists i : \type.\; \isnat \supset (n,i)\in \spred $ expresses that $\spred$ is a total relation, $\Det(\spred)$ that $u$ is deterministic and $\spred_1 \mathop{\subseteq} \spred_2$ that $\spred_1$ a sub-relation of $\spred_2$. 

Crucially, in the statement of $\Total(\spred)$ the universal quantification over natural numbers is relativized via the $\isnat$ predicate. Through the translation, when provided with a value $n$, a realizer $e_{\spred_1}$ of $\Total(\spred_1)$ 
will compute a realizer, say $p^i_n$ of $(n,i) \in \spred_1$ for some index $i \in I$ (without specifying what $i$ is).
Yet, in the presence of effectful computations, $e_{\spred_1}$ may not behave as a function (\ie,  different applications of $e_\spred$ to $n$ may result in different $p^i_n$), and the relation may indeed need not be functional. 
To recover a functional sub-relation ${\spred_2}\subseteq {\spred_1}$ with a realizer of $\Total({\spred_2})$, one can use memoization techniques to store the $p^i_n$ at a given location, making sure that for each $n$ only one $p^i_n$ is ever computed.  Those serve as realizers for a subrelation ${\spred_2}\subseteq {\spred_1}$ which is now functional.


To define a realizer $\emem$ of $\cc$ along these lines, one essentially needs to be provided with computational features that are axiomatized by a monad with state and location, \eg,~\cite{bauer2019algebraiceffects,plotkin2002Notions}.
Again, assume that $\Phimem$ is a set of specifications which axiomatizes the structure necessary for the standard computational constructs of $\lookup$, $\alloc$ and $\update$. 
Then, using memoization as in~\cite{cohen2019effects} (and assuming $\cc$ on the meta-level as done with $\MP$ in the previous example) we can define a program $\emem:\ttrtype{}{\cc}$
such that:
\[\ehtriple{\Phimem}{\termvar}{\emem}{\ttrspec{}{\cc}{x}}\]

The interesting by-products of such an axiomatic presentation is that any instance $\hopl^{\mathsf{mem}}$ of $\hopl$ providing an instantiation of the additional computational features such that the axioms of $\Phimem$ hold will then have the corresponding $\emem$ also realizing of $\cc$. 
In other words, this captures the robustness of the interpretation of $\cc$ based on memoization techniques.
A concrete instance $\hopl^{\mathsf{mem}}$ can be obtained with the state monad and its standard demonic (\ie, necessity) modality, mimicking the mem-SCA framework in~\cite{cohen2019effects}.
However, this suggests that a realizer of $\cc$ can be obtained in a larger class of instantiations, and it remains to be seen if works realizing $\cc$ in different settings, such as~\cite{Berardi+Bezem+Coquand:jsyml:1998,Herbelin:DC,blot22update,MiqueyDC18,krivine:2016} can be obtained as such instances.

\newcommand{\llangle}{\langle\!|}
\newcommand{\rrangle}{|\!\rangle}
\newcommand{\alsu}{{\mathtt{as}}}
\newcommand{\xle}[1]{\xrightarrow{\!#1\!}}
\newcommand{\eid}{e_{\mathtt{id}}}
\newcommand{\asid}{e_{\mathtt{id}}^\alsu}
\newcommand{\ecompos}[2]{#1 \mathop{;} #2}
\newcommand{\acomp}[2]{#1 \mathop{;^\alsu} #2}
\newcommand{\etrue}{e_{\scriptscriptstyle\top}}
\newcommand{\atrue}{e_\top^\alsu}
\newcommand{\efalse}{e_\bot}
\newcommand{\afalse}{e_\bot^\alsu}
\newcommand{\epair}[2]{\llangle #1, #2 \rrangle}
\newcommand{\tpair}[2]{( #1, #2 )}
\newcommand{\efst}{e_{\mathtt{fst}}}
\newcommand{\afst}{e_{\mathtt{fst}}^\alsu}
\newcommand{\esnd}{e_{\mathtt{snd}}}
\newcommand{\asnd}{e_{\mathtt{snd}}^\alsu}
\newcommand{\elambda}[1]{\lambda #1}
\newcommand{\aslambda}[1]{\lambda^\alsu #1}
\newcommand{\eeval}{e_{\mathtt{eval}}}
\newcommand{\aseval}{e_{\mathtt{eval}}^\alsu}
\newcommand{\efa}[1]{\Pi #1}
\newcommand{\asforall}[1]{\bigsqcap^\alsu #1}
\newcommand{\eproj}{e_\Pi}
\newcommand{\asproj}{e_\sqcap ^\alsu}
\newcommand{\inter}{\prod}
\newcommand{\eexists}[1]{\bigsqcup #1}
\newcommand{\einj}{e_\sqcup}
\newcommand{\union}{\bigcup}
\newcommand{\imp}{\supset}
\newcommand{\power}{\mathscr{P}}
\newcommand{\instance}{\textbf{\textrm{H}}}
\newcommand{\Prog}{\mathcal{P}}
\newcommand{\typeval}{T_{\mathsf{val}}}
\newcommand{\erase}[1]{\lfloor #1 \rfloor}
\newcommand{\lift}[1]{{\overline{#1}}}
\newcommand{\progset}{\mathrm{P}}
\newcommand{\progtype}{\progset}
\newcommand{\progtypex}[1]{\progset}
\newcommand{\progcomp}{\progset}

\newcommand{\efprop}{\Phi_{\mathrm{ef}}}
\newcommand{\efevd}{\mathrm{E}_{\mathrm{ef}}}
\newcommand{\Vprop}{\Lambda_\type}
\newcommand{\sat}{\textsc{\textbf{Sat}}_\type}

\section{From syntax to semantics : the induced evidenced frame}\label{sec:EF}


An approach to unify several established forms of realizability, including effectful ones, was suggested by Cohen, Miquey, and Tate~\cite{CohMiqTat21}.
That work abstracts away the core of various constructions of realizability models through a single structure, called \emph{evidenced frame}, which focuses solely on the relationship between propositions and their evidence, leaving out the computational specifics of particular models.
This is done semantically, by considering any model as an evidenced frame and providing a uniform construction of a realizability tripos from an evidenced frame.
Similarly, $\hopl$ unifies several established forms of realizability, but through a syntactic translation rather than a semantic abstraction, while focusing on effectful programming languages.

Next, we show how evidenced frames can be constructed from our $\hopl$ framework. 
More concretely, we provide a method that takes a pure instance of $\hopl$ and defines an evidenced frame.
%
Roughly speaking, an evidenced frame is a triple $( \Phi, E,  \mbox{$\cdot \xle{\cdot} \cdot$} )$, where $\Phi$ is a set of propositions,~$E$ is a set of evidence, and~\mbox{$\phi_1 \xle{e} \phi_2$} is a evidence relation on $\Phi \times E \times \Phi$, with some required relationships between the operations on propositions and the operations on evidence.

The definition of evidenced frames is at its core very semantical, and in particular, the universal implication requires to define a proposition that acts like an intersection on any set of propositions, which the syntax of $\hol$ cannot account for: we can not directly define 
propositions of an evidenced frame as mere $\hol$ propositions.
In turn, as is usual, the set of (semantics evidenced frame) propositions $\efprop$ we are looking for should reflect the structure of the set of realizers we obtain through the realizability translation. For a (closed) $\hol$ proposition $\sprop$, the set of its realizers is given by a set of closed programs $\vdash p: \ttrtype{}{\sprop}$ such that $\ttrspec{}{\sprop}{p}$ holds\lcnote{holds? is provable?}. The evidencing relation $\varphi \xle e \psi$ should reflect the triple $\ehtriple{\ttrspec{}{\varphi}{p}}{x}{\termapp{e}{p}}{\ttrspec{}{\psi}{x}}$. 
Nonetheless, as was observed in earlier work, in typed settings such as those coming from modified realizability, one obtains a tripos only if the type system admits a universal type, \ie if the setting is essentially untyped \cite{lietz02}. The present work faces similar issues, and to circumvent this, we consider a type erasure function $\erase{\cdot}$ and define propositions as erasures of set of programs.

Consider a pure instance of $\effhol$, 
with $\Prog$ its set of programs (which we assume to be extended with pairs and projections to handle the conjunction).
We denote by $\erase{\cdot}:\Prog\to\Lambda$ the type erasure function from $\Prog$ to the untyped computational $\lambda$-calculus, \ie the function erasing type annotations on variables, type abstractions and type applications. We write $\Lambda$ for the set of $\lambda$-terms equipped with the same reduction (up to erasure) as~$\Prog$, and we write $\termvalue$ for its set of values. Then, propositions are defined as sets of values obtained as erasures of programs, while evidence are (untyped) $\lambda$-terms:
\[
\efprop \eqdef \{\erase{\progset}\mid \progset\subseteq\Prog ~\land~ 
\erase{\progset} \subseteq\termvalue\}
\qquad\quad
\efevd \eqdef \Lambda
\qquad
\]

As highlighted in~\Cref{sec:examples},  defining the evidence relation requires lifting a proposition $\varphi\in\efprop$ (defined in terms of values) to a set $\lift{\varphi}$ intuitively comprising of computations returning values in $\varphi$.
For this we use the modality, 
which transforms an  expression $e$ of index $\refpredN{\type}$ into an expression $\lift e$ of index $\refpredN{\typecomp{\type}}$ by 
$\lift e \eqdef \ecompbase{x'}{\typecomp{\type}}{\after{x'}{x}{\tmembase{x}{e}}}$.
For a pure instance of $\effhol$, this modality is defined internally as a specification. 
By considering the (meta) set-theoretic counterparts of $\effhol$'s logical constructs (replacing comprehension terms by comprehension, logical membership by membership, and quantification by meta-quantification), the definition of the modality carries over to sets of (erased) programs, thus we define, for any  $A\subseteq \Prog$, the set $\lift A \eqdef \{p\in{\Prog}\mid   \after{p}{x}{x\in A}\}$. We extend this definition along the erasure function by simply taking $\lift{\erase{A}} \eqdef \erase{\lift{A}}$.\lcnote{not sure how to read the rhs of the equation}\emnote{is $\erase{(\lift{A})}$ clearer? Technically the lifting is defined on set of typed values, we thus need to say how it carries over EF propositions aka set of untyped values, hence this dumb (but I think necessary) definition} With this, we can  define the evidence relation to mimic the syntactic realizability translation:
\[\phi_1 \xle e \phi_2\eqdef \forall p_1\in\phi_1.e\,p_1\in\lift{\phi_2}\]

\begin{theorem}\label{thm:ef}
$(\efprop,\efevd,\cdot \xle \cdot \cdot)$ is an evidenced frame.
\end{theorem}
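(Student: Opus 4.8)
The plan is to verify the evidenced frame axioms from~\cite{CohMiqTat21} one by one, exhibiting a concrete evidence $\lambda$-term for each and checking the required evidencing relation using the deductive apparatus of $\hopl$ (\Cref{fig:theory}) transported to sets of (erased) programs. Throughout, I would work with the lifting operation $\lift{(\cdot)}$ and systematically reuse its four characteristic properties, which follow immediately from the rules for the modality: monotonicity ($A\subseteq B \Rightarrow \lift A \subseteq \lift B$, from \monrule), unit ($p\in A \Rightarrow \termret p \in \lift A$, from \modintrorule), multiplication/bind ($p_1\in\lift{\{p\mid p_2\subst{x_1:=p}\in\lift A\}} \Rightarrow \termbind{x_1}{p_1}{p_2}\in\lift A$, from \modelimrule), and closure under anti-reduction (from \antiredtermrule). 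These are exactly the ingredients that make the evidence relation $\phi_1\xle e \phi_2 \eqdef \forall p_1\in\phi_1.\,e\,p_1\in\lift{\phi_2}$ behave like a Kleisli-style composition of specifications.

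First I would check the reflexivity/identity and composition axioms: for $\eid$ take (the erasure of) $\termabs{x}{\type}{\termret x}$, so that $\eid\,p_1 = \termret{p_1}\in\lift{\phi_1}$ by the unit property and anti-reduction; for composition of $\phi_1\xle{e}\phi_2$ and $\phi_2\xle{e'}\phi_3$ take $\ecompos{e}{e'} \eqdef \termabs{x}{}{\termbind{y}{e\,x}{e'\,y}}$ and conclude $\ecompos e {e'}\,p_1 \in \lift{\phi_3}$ using the bind property (with $p_1\in\phi_1 \Rightarrow e\,p_1\in\lift{\phi_2}$ and $y\in\phi_2\Rightarrow e'\,y\in\lift{\phi_3}$) together with anti-reduction. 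Next, the top proposition $\etrue \eqdef \erase{\{\termabs{x}{}{x}\}}$ (or any nonempty singleton of values) with trivial evidence; then conjunction: define $\phi_1\wedge\phi_2$ via pairs — i.e.\ the erasure of $\{\pair{p_1}{p_2}\mid p_1\in\phi_1,\,p_2\in\phi_2\}$ — using the extra pairs and projections we assumed $\Prog$ carries, with $\epair{e_1}{e_2}$, $\efst$, $\esnd$ as the obvious pairing/projection $\lambda$-terms, again discharging each evidencing obligation by anti-reduction and the unit/bind properties. Universal implication $\efa{\cdot}$ is handled by taking $\inter$ (meta-level set intersection) of a family of propositions, with $\eproj$ projecting out a component; the fact that $\efprop$ is closed under arbitrary such intersections is immediate from its definition as a set of subsets of $\termvalue$. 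Exponentials $\phi_1\imp\phi_2$ are defined as the erasure of $\{p\mid \forall p_1\in\phi_1.\, p\,p_1\in\lift{\phi_2}\}$ restricted to values, with $\eeval$ the application combinator and currying realized by $\elambda{\cdot}$, using $\beta$-reduction (hence anti-reduction) to match the reduct. Existentials $\eexists{\cdot}$ are unions, with $\einj$ the identity-like injection.

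The main obstacle I expect is twofold. First, the interplay between type erasure and the modality: $\lift{(\cdot)}$ is defined syntactically inside $\effhol$ on typed expressions/sets of typed programs, and we must argue that it descends well along $\erase{\cdot}$ — i.e.\ that $\erase{\lift A}$ depends only on $\erase A$, and that the four modality properties above survive erasure. This is why the definition $\lift{\erase A}\eqdef\erase{\lift A}$ needs a well-definedness check, relying on the fact that erasure commutes with reduction and that the set-theoretic reading of the modality only mentions running programs and membership, both erasure-stable. Second, each evidenced-frame axiom must be checked with the evidence being an \emph{untyped} $\lambda$-term in $\efevd=\Lambda$ while the propositions live at different types; I would handle this by always picking the evidence as the erasure of a well-typed $\effhol$ program witnessing the corresponding $\hopl$ derivation (exactly as in the soundness proof, \Cref{thm:soundness}), so that the relevant $\hopl$ rule — typically a combination of \modintrorule, \modelimrule, \monrule and \antiredtermrule — certifies the needed triple, and then transporting that derivation across erasure and across the set-theoretic reinterpretation of the logical constructs. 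Once the lifting lemmas are in place, the remaining verifications are routine $\beta$-reduction bookkeeping, so I would state the lifting properties as a preliminary lemma and then dispatch the axioms in a single uniform pass.
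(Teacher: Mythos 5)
Your overall skeleton matches the paper's proof: isolate the four properties of the lifting $\lift{(\cdot)}$ (monotonicity, unit, bind, anti-reduction) as a preliminary lemma derived from \monrule, \modintrorule, \modelimrule, \antiredtermrule, then verify the axioms with erased programs as evidence, exactly as the paper does for reflexivity ($\lambda x.\termret{x}$), transitivity (bind-composition), top, and conjunction via the added pairs and projections. Up to that point your argument is the paper's argument.

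The gap is in how you treat universal implication. An evidenced frame does not have separate universal quantification, binary implication, and existentials; it has a single operator ${\imp}:\Phi\times\power(\Phi)\to\Phi$, and the introduction axiom says: if $\phi_1\wedge\phi_2\xle{e}\phi$ for \emph{every} $\phi\in\vec\phi$, then $\phi_1\xle{\elambda{e}}\phi_2\imp\vec\phi$. You propose to build this from a meta-level intersection $\bigcap\vec\phi$ together with a binary exponential $\phi_2\imp\psi\eqdef\{p\in\termvalue\mid\forall V_2\in\phi_2.\,p\,V_2\in\lift{\psi}\}$. That places the intersection \emph{inside} the lift: membership in $\phi_2\imp\bigcap\vec\phi$ demands $p\,V_2\in\lift{\bigcap\vec\phi}$, whereas the hypothesis of the $\elambda$-axiom only gives $e\,\epair{V_1}{V_2}\in\bigcap_{\phi\in\vec\phi}\lift{\phi}$. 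Monotonicity yields $\lift{\bigcap\vec\phi}\subseteq\bigcap_{\phi}\lift{\phi}$ but not the converse, and none of the four lifting properties makes the modality commute with arbitrary intersections; for genuinely effectful instances it does not. So the introduction half of the universal-implication axiom cannot be discharged with your definition (the $\eeval$ half is fine, by monotonicity). The paper sidesteps this by defining the combined operator directly, $\phi_1\imp\vec\phi\eqdef\{\lambda x.p\mid\forall V_1\in\phi_1.\,\forall\phi\in\vec\phi.\;p[x:=V_1]\in\lift{\phi}\}$, keeping the quantification over $\phi$ \emph{outside} the lift, after which $\elambda{e}\eqdef\lambda x_1.\termret{\lambda x_2.e\,\epair{x_1}{x_2}}$ and $\eeval\eqdef\lambda x.(\pi_1 x)\,(\pi_2 x)$ verify both halves by the unit/bind/anti-reduction properties. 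Your added union-based existential is harmless but not required by the definition. Your concern about well-definedness of $\lift{\erase{A}}\eqdef\erase{\lift{A}}$ is reasonable but orthogonal; the paper simply takes this as the definition for the sets at hand.
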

\vlong{
 \begin{proof}
 The definition of the different components of the evidence frame is along the line of the realizability translation, while the proofs that they satisfy the expected properties use
 properties for the operation $\lift{A}$ coming for the rules
 \modintrorule/\modelimrule/\monrule/\antiredtermrule, in a manner that is
 similar to the use of these rules in the soundness proof of the realizability
 translation (see Appendix \ref{app:ef} for more details).
 \end{proof}
 }
In particular, since any evidenced frame defines a tripos~\cite{CohMiqTat21}, connecting the dots we obtain:
\begin{corollary}
Any  pure instance of $\effhol$ induces a tripos and a realizability topos.
\end{corollary}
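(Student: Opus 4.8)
The plan is to derive the corollary by chaining three results: (i) Theorem~\ref{thm:ef}, which equips any pure instance of $\effhol$ with an evidenced frame $(\efprop,\efevd,\cdot\xle{\cdot}\cdot)$; (ii) the uniform construction of Cohen--Miquey--Tate~\cite{CohMiqTat21} turning an evidenced frame into a realizability tripos over $\mathbf{Set}$; and (iii) the Hyland--Johnstone--Pitts tripos-to-topos construction~\cite{hyland1980tripos}, which produces a topos out of that tripos. Beyond invoking Theorem~\ref{thm:ef}, the only work is to check that the hypotheses of (ii) and (iii) are met and to justify why the topos obtained deserves to be called a \emph{realizability} topos.

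First I would recall the tripos construction of (ii) in just enough detail to see that our data fit. From an evidenced frame one builds the $\mathbf{Set}$-indexed preorder $I\mapsto\efprop^{I}$, with $\Phi\le\Psi$ iff there is a \emph{single} $e\in\efevd$ such that $\Phi(i)\xle{e}\Psi(i)$ for every $i\in I$; reindexing along $u:J\to I$ is precomposition with $u$, finite meets come from the $\top$ and $\wedge$ of the evidenced frame, and both Heyting implication and the quantifiers $\forall,\exists$ along projections are produced from the \emph{uniform} universal implication $\supset$, while the generic predicate is carried by $\efprop$ itself. The verification that this actually yields a tripos is precisely the statement that our triple satisfies the evidenced-frame axioms, i.e.\ Theorem~\ref{thm:ef}. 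The one size condition to observe is that, since we work with a \emph{pure} instance, $\Prog$ and $\Lambda$ are ordinary sets, hence $\efprop\subseteq\power(\termvalue)$ and $\efevd=\Lambda$ are small; so the above indexed preorder lands in small Heyting prealgebras and the construction of~\cite{CohMiqTat21} applies verbatim.

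For (iii), applying the standard tripos-to-topos construction to this tripos (objects: partial-equivalence predicates; morphisms: functional predicates) yields a topos. It is a \emph{realizability} topos because the tripos it is built from is one of the realizability triposes of~\cite{CohMiqTat21}: its predicates over $I$ are $I$-indexed sets of untyped program values, and their ordering is witnessed by programs acting as realizers through $\cdot\xle{\cdot}\cdot$, exactly as for the usual realizability triposes over a partial combinatory algebra. Indeed, the type-erasure step used to define $\efprop$ (unavoidable, just as in modified realizability, since a typed setting yields a tripos only when it admits a universal type~\cite{lietz02}) is precisely what makes the underlying structure behave like an effectful, partial combinatory algebra. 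Note that the resulting topos depends on the pure instance, i.e.\ on the chosen monad and modality.

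I do not expect a genuinely hard obstacle in the corollary itself: the mathematical substance lives in Theorem~\ref{thm:ef}, whose delicate points are the definition of the uniform universal implication $\varphi\supset\Psi$ over an \emph{arbitrary} set $\Psi$ of propositions — which cannot be expressed inside plain $\hol$, forcing the passage to sets of erased program values — and the four closure properties of the lifting operation $\lift{(\cdot)}$ (monotonicity, unit, Kleisli extension, and closure under anti-reduction), coming respectively from \monrule, \modintrorule, \modelimrule, and \antiredtermrule and used exactly as in the soundness proof. For the corollary, the only points needing a line of care are confirming these smallness hypotheses and identifying the output of the tripos-to-topos machine with the expected realizability topos; both are routine.
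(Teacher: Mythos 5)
Your proposal is correct and follows essentially the same route as the paper: the corollary is obtained by composing Theorem~\ref{thm:ef} (any pure instance of $\effhol$ yields an evidenced frame) with the uniform evidenced-frame-to-tripos construction of~\cite{CohMiqTat21} and the standard tripos-to-topos construction. The extra checks you mention (smallness, identifying the result as a realizability topos) are fine but not part of the paper's argument, which simply "connects the dots" as you do.
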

In fact, this construction applies to any (not necessarily pure) instance for which the operation $\lift{A}$ to lift sets of values to sets of programs  using the modality satisfies the meta-theoretic counterpart of the rules \modintrorule, \modelimrule, \monrule, \antiredtermrule~for sets of programs. Interestingly, the resulting evidenced frame and tripos  neither match the usual definitions coming from a fully typed realizability setting such as modified realizability (where types would be accounted for in all the components of the evidenced frame), nor the ones in an untyped setting~\cite{CohGruKirMiq25mca}. In a sense, by taking types into account while erasing them, this definition lies somewhere in between these two situations, and the precise connection between them is left for future work.

\section{Related Work}\label{sec:related}

\paragraph*{Evaluation Logic}
Evaluation logic~\cite{pitts1991evaluation} is a deductive system for reasoning about program evaluation, extending~\cite{moggi1991notions} with modalities. 
It resembles $\hopl$ but with a simplified polymorphic and kind structure and it uses two modalities. Its higher-order extension~\cite{moggi1995semanticsEL} still lacks polymorphism and a separation of logical and computational components, limiting its ability to support \emph{typed} higher-order realizability.

\paragraph*{Syntactic Realizability}

Syntactic realizability, pioneered by Gödel~\cite{godel1958bisher} and
Kreisel~\cite{kreisel1959interpretation}, translates Heyting Arithmetic into to a simply typed purely functional programming language (System T).
Modified realizability was extended  by Paulin-Mohring~\cite{paulin_mohring89} to verify Rocq's extraction mechanism, with further advancements in~\cite{letouzey2002new,forster2024verified}. 
Realizability as a translation between type systems was also studied in~\cite{bernardy+lasson11}.
%
In the context of interactive realizability, Birolo~ \cite{birolo2013interactive} introduces monadic realizability 
%
which focuses on the exception-state monad and is restricted to first-order Heyting arithmetic.
Our work extends these frameworks to higher-order logic using effectful programming languages.


\paragraph*{Effectful Curry-Howard}
%
Many recent works extend formal systems with logical principles and computational effects, e.g.~\cite{jaber2016definitional,Boulier+Pedrot+Tabareau:cpp:2017,Pedrot+Tabareau:esop:2018,Pedrot+Tabareau+Fehrmann+Tanter:icfp:2019,Pedrot+al:lics:2020,Pedrot+Tabareau:popl:2020,baillon_et_al:2022,Escardo:2013,Coquand+Jaber:2010,Coquand+Jaber:2012,FCS2018,Cohen+Rahli:fscd:2022,Bickford+Cohen+Constable+Rahli:csl:2021,continuityRef2023,cohen23inductivecontinuity,bauer2022reals}. 
In particular, P\'edrot and Tabareau~\cite{Pedrot+Tabareau:popl:2020}~prove that any \emph{observably} effectful type theory (with other standard properties) is inconsistent. 
While these extensions are usually guided by specific principles or computational capabilities such as exceptions, our work defines a generic syntactic translation for $\hol$, without dependencies on or direct references to the effectful realizers at play.
Recently,  PCAs were extended to Monadic Combinatory Algebras (MCAs) and used to derive an alternative formulation of effectful realizability~\cite{CohGruKirMiq25mca}. 
However, while $\effhol$ yields a typed syntactic notion of realizability, MCAs lead to an untyped semantic notion of realizability. 

\lcnote{Krivine realizability}
\lcnote{dump tons of effectful realizability works}




\lcnote{TBD: cite and compare against~\cite{VSJ25carte}}

\section{Conclusion and Future Work}\label{sec:conc}
In this paper, we presented a purely syntactic account of effectful realizability within a higher-order setting. 
$\effhol$ is a \emph{highly expressive}, \emph{unified} system for \emph{internally} reasoning about \emph{effectful} program logics in a \emph{natural} manner.
The reasoning is done internally as the language of $\effhol$ includes programs that act as realizers for $\hol$ theorems via the syntactic realizability translation. 
%
$\hopl$'s strength lies in its parameterization by a monad, capturing effectful behavior, and its natural integration of standard programming language features, such as typed realizers and effectful programs, enabling reasoning about realizers in a way akin to reasoning about programs.
%

\lcnote{$\fomega$ and computational $\lambda$-calculus offer two examples of languages that can be internalized in $\effhol$ by trivializing one element of the framework, either the computational aspect or the logical aspect. 
Moreover, the examples we present in this paper all have a trivial kind system. 
But $\effhol$ offers support for more subtle languages, and
one should be able to plug into $\effhol$ any programming language of choice and (using the type-constructor polymorphism) generate a realizability model for $\ihol$.}
\lcnote{Many programming languages do not have types, let alone kinds, but $\effhol$ can be fit onto even untyped languages}

%
%
%
%

Since this paper focuses on syntactic realizability, due to space constraints, we leave the presentation of a categorical semantics for $\hopl$ to future paper. This involves combining the semantics of Higher-Order Logic, System $F_\omega$, and Evaluation Logic, using indexed categories for polymorphism~\cite{seely1987categorical}, a strong monad for computational types, along with  a tripos and a T-modality for the logical components, as in~\cite{moggi1991notions, pitts1991evaluation}. 

Future work includes direct effect handling beyond monads in $\hopl$. This can be done by replacing the dependency on the monad by restricting some language constructs to variables in the spirit of dynamic logic~\cite{Harel2000dynamic}.
Additionally, applying $\hopl$ to a case study in a complex programming language with effects is planned, though this requires addressing auxiliary details, which we reserve for future efforts.
We also note that while $\effhol$ follows traditional PCA-based realizability by invoking a  CbV evaluation strategy~\cite{hofstra2004partial,Lepigre16,Richman1983,seely1987categorical}, we plan to explore alternative strategies like call-by-push-value~\cite{levy1999call}, which may reveal new computational behaviors.

\lcnote{smt hinting at future work looking into separation logics?/program logics? beyond monads}
\lcnote{
Note that in $\effhol$ effects are being handled through monads. While standard and structurally convenient, the handling of effects can be done in a more direct fashion, which can also facilitate the combination of multiple effects.
This can be done by removing the dependency on the monad via computational types and the return program, and instead restricting some of the language constructs to variables, rather than arbitrary terms or expressions, in the spirit of dynamic logic~\cite{Harel2000dynamic}.}

\lcnote{We understand that a case study of a specific programming language with effects could have added to the paper. Unfortunately, we currently do not have one to provide. Obtaining such an example will require substantial amount of work in handling many auxiliary details that are irrelevant to the specific effect we wish to capture.  Accordingly, relating the framework to program development in a realistic programming language is something we left for future work.}

A particularly interesting example is traditional realizability interpretations \textit{à la} Kleene, relying on partial combinatory algebras (that is a partial variant of the untyped $\lambda$-calculus).
However, since the very core of such models is untyped, obtaining them as an instance of $\effhol$ in a natural way requires further structure and is left for future work.
%
\section*{Acknowledgment}
We are deeply grateful to Ross Tate for his valuable ideas and insights, which significantly shaped the direction of this work.
We also thank the anonymous reviewers for their constructive feedback and thoughtful suggestions, which greatly improved the clarity and quality of this paper.
\bibliographystyle{IEEEtran}
\bibliography{ref}

\begin{thebibliography}{10}
\providecommand{\url}[1]{#1}
\csname url@samestyle\endcsname
\providecommand{\newblock}{\relax}
\providecommand{\bibinfo}[2]{#2}
\providecommand{\BIBentrySTDinterwordspacing}{\spaceskip=0pt\relax}
\providecommand{\BIBentryALTinterwordstretchfactor}{4}
\providecommand{\BIBentryALTinterwordspacing}{\spaceskip=\fontdimen2\font plus
\BIBentryALTinterwordstretchfactor\fontdimen3\font minus
  \fontdimen4\font\relax}
\providecommand{\BIBforeignlanguage}[2]{{%
\expandafter\ifx\csname l@#1\endcsname\relax
\typeout{** WARNING: IEEEtran.bst: No hyphenation pattern has been}%
\typeout{** loaded for the language `#1'. Using the pattern for}%
\typeout{** the default language instead.}%
\else
\language=\csname l@#1\endcsname
\fi
#2}}
\providecommand{\BIBdecl}{\relax}
\BIBdecl

\bibitem{kleene1945interpretation}
S.~C. Kleene, ``{On the Interpretation of Intuitionistic Number Theory},''
  \emph{The journal of symbolic logic}, vol.~10, no.~4, pp. 109--124, 1945.

\bibitem{hofstra2004partial}
\BIBentryALTinterwordspacing
P.~J. Hofstra, ``{Partial Combinatory Algebras and Realizability Toposes},''
  \emph{University of Ottowa}, 2004. [Online]. Available:
  \url{https://cspages.ucalgary.ca/~robin/FMCS/FMCS_04/material/hofstra.pdf}
\BIBentrySTDinterwordspacing

\bibitem{Krivine09}
J.-L. Krivine, ``{Realizability in Classical Logic. In Interactive Models of
  Computation and Program Behaviour},'' \emph{Panoramas et synth{\`e}ses},
  vol.~27, 2009.

\bibitem{hyland1980tripos}
J.~M.~E. Hyland, P.~T. Johnstone, and A.~M. Pitts, ``{Tripos Theory},'' in
  \emph{Mathematical Proceedings of the Cambridge philosophical society},
  vol.~88, no.~2.\hskip 1em plus 0.5em minus 0.4em\relax Cambridge University
  Press, 1980, pp. 205--232.

\bibitem{godel1958bisher}
V.~K. G{\"o}del, ``{{\"U}ber eine bisher noch nicht ben{\"u}tzte Erweiterung
  des finiten Standpunktes},'' \emph{dialectica}, vol.~12, no. 3-4, pp.
  280--287, 1958.

\bibitem{kreisel1959interpretation}
G.~Kreisel, ``{Interpretation of Analysis by Means of Constructive Functionals
  of Finite Types},'' in \emph{Constructivity in mathematics}, A.~Heyting,
  Ed.\hskip 1em plus 0.5em minus 0.4em\relax North-Holland Pub. Co., 1959, pp.
  101--128.

\bibitem{van1997modified}
J.~van Oosten, ``{The Modified Realizability Topos},'' \emph{Journal of pure
  and applied algebra}, vol. 116, no. 1-3, pp. 273--289, 1997.

\bibitem{jaber2016definitional}
G.~Jaber, G.~Lewertowski, P.-M. P{\'e}drot, M.~Sozeau, and N.~Tabareau, ``{The
  Definitional Side of the Forcing},'' in \emph{Proceedings of the 31st Annual
  ACM/IEEE Symposium on Logic in Computer Science}, 2016, pp. 367--376.

\bibitem{Pedrot+Tabareau:esop:2018}
\BIBentryALTinterwordspacing
P.~P{\'{e}}drot and N.~Tabareau, ``{Failure is Not an Option - An Exceptional
  Type Theory},'' in \emph{{27th European Symposium on Programming}}, ser.
  LNCS, vol. 10801.\hskip 1em plus 0.5em minus 0.4em\relax Thessaloniki,
  Greece: {Springer}, Apr. 2018, pp. 245--271. [Online]. Available:
  \url{https://hal.inria.fr/hal-01840643}
\BIBentrySTDinterwordspacing

\bibitem{letouzey2002new}
P.~Letouzey, ``{A New Extraction for {C}oq},'' in \emph{International Workshop
  on Types for Proofs and Programs}.\hskip 1em plus 0.5em minus 0.4em\relax
  Springer, 2002, pp. 200--219.

\bibitem{forster2024verified}
Y.~Forster, M.~Sozeau, and N.~Tabareau, ``{Verified Extraction from Coq to
  OCaml},'' \emph{Proceedings of the ACM on Programming Languages}, vol.~8, no.
  PLDI, pp. 52--75, 2024.

\bibitem{bernardy2011realizability}
J.-P. Bernardy and M.~Lasson, ``{Realizability and Parametricity in Pure Type
  Systems},'' in \emph{Foundations of Software Science and Computational
  Structures: 14th International Conference, FOSSACS 2011, Held as Part of the
  Joint European Conferences on Theory and Practice of Software, ETAPS 2011,
  Saarbr{\"u}cken, Germany, March 26--April 3, 2011. Proceedings 14}.\hskip 1em
  plus 0.5em minus 0.4em\relax Springer, 2011, pp. 108--122.

\bibitem{Lepigre16}
R.~Lepigre, ``{A Classical Realizability Model for a Semantical Value
  Restriction},'' in \emph{{ESOP}}, ser. Lecture Notes in Computer Science,
  vol. 9632.\hskip 1em plus 0.5em minus 0.4em\relax Springer, 2016, pp.
  476--502.

\bibitem{miquey18}
\BIBentryALTinterwordspacing
E.~Miquey, ``{A Sequent Calculus with Dependent Types for Classical
  Arithmetic},'' in \emph{Proceedings of the 33rd Annual ACM/IEEE Symposium on
  Logic in Computer Science}, ser. LICS '18.\hskip 1em plus 0.5em minus
  0.4em\relax New York, NY, USA: Association for Computing Machinery, 2018, p.
  720–729. [Online]. Available: \url{https://doi.org/10.1145/3209108.3209199}
\BIBentrySTDinterwordspacing

\bibitem{geoffroy18}
\BIBentryALTinterwordspacing
G.~Geoffroy, ``{Classical Realizability as a Classifier for Nondeterminism},''
  in \emph{{ACM/IEEE Symposium on Logic in Computer Science}}, Oxford, United
  Kingdom, Jul. 2018. [Online]. Available:
  \url{https://hal.science/hal-01802215}
\BIBentrySTDinterwordspacing

\bibitem{ahman2017dijkstra}
D.~Ahman, C.~Hri{\c{t}}cu, K.~Maillard, G.~Mart{\'\i}nez, G.~Plotkin,
  J.~Protzenko, A.~Rastogi, and N.~Swamy, ``{Dijkstra Monads for Free},'' in
  \emph{Proceedings of the 44th ACM SIGPLAN Symposium on Principles of
  Programming Languages}, 2017, pp. 515--529.

\bibitem{maillard2019dijkstra}
\BIBentryALTinterwordspacing
K.~Maillard, D.~Ahman, R.~Atkey, G.~Mart\'{\i}nez, C.~Hri\c{t}cu, E.~Rivas, and
  E.~Tanter, ``{Dijkstra Monads for All},'' \emph{Proc. ACM Program. Lang.},
  vol.~3, no. ICFP, Jul. 2019. [Online]. Available:
  \url{https://doi.org/10.1145/3341708}
\BIBentrySTDinterwordspacing

\bibitem{CohMiqTat21}
L.~Cohen, {\'E}.~Miquey, and R.~Tate, ``{Evidenced Frames: A Unifying Framework
  Broadening Realizability Models},'' in \emph{2021 36th Annual ACM/IEEE
  Symposium on Logic in Computer Science (LICS)}, 2021, pp. 1--13.

\bibitem{cohen2019effects}
L.~Cohen, S.~A. Faro, and R.~Tate, ``{The Effects of Effects on
  Constructivism},'' \emph{Electronic Notes in Theoretical Computer Science},
  vol. 347, pp. 87--120, 2019.

\bibitem{Pedrot+Tabareau:lics:2017}
\BIBentryALTinterwordspacing
P.~P{\'{e}}drot and N.~Tabareau, ``{An Effectful Way to Eliminate Addiction to
  Dependence},'' in \emph{{Logic in Computer Science (LICS), 2017 32nd Annual
  ACM/IEEE Symposium on}}, Reykjavik, Iceland, Jun. 2017, p.~12. [Online].
  Available: \url{https://hal.inria.fr/hal-01441829}
\BIBentrySTDinterwordspacing

\bibitem{Boulier+Pedrot+Tabareau:cpp:2017}
\BIBentryALTinterwordspacing
S.~Boulier, P.-M. P\'{e}drot, and N.~Tabareau, ``{The Next 700 Syntactical
  Models of Type Theory},'' in \emph{Proceedings of the 6th ACM SIGPLAN
  Conference on Certified Programs and Proofs}, ser. CPP 2017.\hskip 1em plus
  0.5em minus 0.4em\relax New York, NY, USA: Association for Computing
  Machinery, 2017, p. 182–194. [Online]. Available:
  \url{https://doi.org/10.1145/3018610.3018620}
\BIBentrySTDinterwordspacing

\bibitem{Pedrot+Tabareau+Fehrmann+Tanter:icfp:2019}
\BIBentryALTinterwordspacing
P.~P{\'{e}}drot, N.~Tabareau, H.~J. Fehrmann, and {\'{E}}.~Tanter, ``{A
  Reasonably Exceptional Type Theory},'' \emph{Proc. {ACM} Program. Lang.},
  vol.~3, no. {ICFP}, pp. 108:1--108:29, 2019. [Online]. Available:
  \url{https://doi.org/10.1145/3341712}
\BIBentrySTDinterwordspacing

\bibitem{Pedrot+al:lics:2020}
\BIBentryALTinterwordspacing
P.~P{\'{e}}drot, ``{Russian Constructivism in a Prefascist Theory},'' in
  \emph{35th Annual {ACM/IEEE} Symposium on Logic in Computer Science,
  Saarbr{\"{u}}cken, Germany}, H.~Hermanns, L.~Zhang, N.~Kobayashi, and
  D.~Miller, Eds.\hskip 1em plus 0.5em minus 0.4em\relax {ACM}, 2020, pp.
  782--794. [Online]. Available: \url{https://doi.org/10.1145/3373718.3394740}
\BIBentrySTDinterwordspacing

\bibitem{Pedrot+Tabareau:popl:2020}
\BIBentryALTinterwordspacing
P.~P{\'{e}}drot and N.~Tabareau, ``{The Fire Triangle: How to Mix Substitution,
  Dependent Elimination, and Effects},'' \emph{Proc. {ACM} Program. Lang.},
  vol.~4, no. {POPL}, pp. 58:1--58:28, 2020. [Online]. Available:
  \url{https://doi.org/10.1145/3371126}
\BIBentrySTDinterwordspacing

\bibitem{griffin90}
T.~Griffin, ``{A Formulae-as-type Notion of Control},'' in \emph{Proceedings of
  the 17th ACM SIGPLAN-SIGACT Symposium on Principles of Programming
  Languages}, ser. POPL '90.\hskip 1em plus 0.5em minus 0.4em\relax New York,
  NY, USA: ACM, 1990, pp. 47--58.

\bibitem{krivine03}
J.-L. Krivine, ``{Dependent Choice, `quote' and the Clock},'' \emph{Th. Comp.
  Sc.}, vol. 308, pp. 259--276, 2003.

\bibitem{Fomega}
J.-Y. Girard, ``{Interpr\'{e}tation Fonctionnelle Et \'{e}limination Des
  Coupures De L’arithm\'{e}tique D’ordre Sup\'{e}rieur},'' Ph.D.
  dissertation, Universit\'{e} Paris Diderot - Paris 7, 1972.

\bibitem{pitts1991evaluation}
A.~M. Pitts, ``{Evaluation Logic},'' in \emph{IV Higher Order Workshop, Banff
  1990: Proceedings of the IV Higher Order Workshop, Banff, Alberta, Canada
  10--14 September 1990}.\hskip 1em plus 0.5em minus 0.4em\relax Springer,
  1991, pp. 162--189.

\bibitem{VSJ25carte}
\BIBentryALTinterwordspacing
M.~Vistrup, M.~Sammler, and R.~Jung, ``{Program Logics \`{a} la Carte},''
  \emph{Proc. ACM Program. Lang.}, vol.~9, no. POPL, Jan. 2025. [Online].
  Available: \url{https://doi.org/10.1145/3704847}
\BIBentrySTDinterwordspacing

\bibitem{moggi1991notions}
E.~Moggi, ``{Notions of Computation and Monads},'' \emph{Information and
  computation}, vol.~93, no.~1, pp. 55--92, 1991.

\bibitem{Jacobs99CLTT}
B.~Jacobs, \emph{{Categorical Logic and Type Theory}}, ser. Studies in Logic
  and the Foundations of Mathematics.\hskip 1em plus 0.5em minus 0.4em\relax
  Amsterdam: North Holland, 1999, no. 141.

\bibitem{bates1985proofs}
J.~L. Bates and R.~L. Constable, ``{Proofs as Programs},'' \emph{ACM
  Transactions on Programming Languages and Systems (TOPLAS)}, vol.~7, no.~1,
  pp. 113--136, 1985.

\bibitem{blot22}
\BIBentryALTinterwordspacing
V.~Blot, ``{A Direct Computational Interpretation of Second-Order Arithmetic
  via Update Recursion},'' in \emph{{LICS 2022 - 37th Annual ACM/IEEE Symposium
  on Logic in Computer Science}}, Ha{\"i}fa, Israel, Aug. 2022. [Online].
  Available: \url{https://inria.hal.science/hal-03698879}
\BIBentrySTDinterwordspacing

\bibitem{paulin_mohring89}
\BIBentryALTinterwordspacing
C.~Paulin-Mohring, ``{Extracting F$_\omega$'s Programs from Proofs in the
  Calculus of Constructions},'' in \emph{Proceedings of the 16th ACM
  SIGPLAN-SIGACT Symposium on Principles of Programming Languages}, ser. POPL
  '89.\hskip 1em plus 0.5em minus 0.4em\relax New York, NY, USA: Association
  for Computing Machinery, 1989, p. 89–104. [Online]. Available:
  \url{https://doi.org/10.1145/75277.75285}
\BIBentrySTDinterwordspacing

\bibitem{ProofsAndTypes}
\BIBentryALTinterwordspacing
J.~Girard, Y.~Lafont, and P.~Taylor, \emph{{Proofs and Types}}, ser. Cambridge
  Tracts in Theoretical Computer Science.\hskip 1em plus 0.5em minus
  0.4em\relax Cambridge University Press, 1989. [Online]. Available:
  \url{https://books.google.fr/books?id=6JOEQgAACAAJ}
\BIBentrySTDinterwordspacing

\bibitem{reynolds1984polymorphism}
J.~C. Reynolds, ``{Polymorphism is Not Set-Theoretic},'' in \emph{International
  Symposium on Semantics of Data Types}.\hskip 1em plus 0.5em minus 0.4em\relax
  Springer, 1984, pp. 145--156.

\bibitem{moggi1989computationallamdba}
\BIBentryALTinterwordspacing
E.~Moggi, ``{Computational Lambda-Calculus and Monads},'' \emph{[1989]
  Proceedings. Fourth Annual Symposium on Logic in Computer Science}, pp.
  14--23, 1989. [Online]. Available:
  \url{https://api.semanticscholar.org/CorpusID:5411355}
\BIBentrySTDinterwordspacing

\bibitem{scott2008pairs}
\BIBentryALTinterwordspacing
D.~Scott and D.~McCarty, ``{Reconsidering Ordered Pairs},'' \emph{The Bulletin
  of Symbolic Logic}, vol.~14, no.~3, pp. 379--397, 2008. [Online]. Available:
  \url{http://www.jstor.org/stable/20059989}
\BIBentrySTDinterwordspacing

\bibitem{Hoare69}
\BIBentryALTinterwordspacing
C.~A.~R. Hoare, ``{An Axiomatic Basis for Computer Programming},''
  \emph{Commun. ACM}, vol.~12, no.~10, p. 576–580, oct 1969. [Online].
  Available: \url{https://doi.org/10.1145/363235.363259}
\BIBentrySTDinterwordspacing

\bibitem{OlivaStreicher08}
P.~Oliva and T.~Streicher, ``{On {K}rivine's Realizability Interpretation of
  Classical Second-Order Arithmetic},'' \emph{Fundam. Inform.}, vol.~84, no.~2,
  pp. 207--220, 2008.

\bibitem{Miquel11}
\BIBentryALTinterwordspacing
A.~Miquel, ``{Existential Witness Extraction in Classical Realizability and via
  a Negative Translation},'' \emph{{Logical Methods in Computer Science}}, vol.
  {Volume 7, Issue 2}, Apr. 2011. [Online]. Available:
  \url{https://lmcs.episciences.org/1068}
\BIBentrySTDinterwordspacing

\bibitem{GardelleMiquey23}
\BIBentryALTinterwordspacing
S.~Gardelle and {\'E}.~Miquey, ``{Do CPS Translations Also Translate
  Realizers?}'' in \emph{{JFLA 2023 - 34{\`e}mes Journ{\'e}es Francophones des
  Langages Applicatifs}}, T.~Bourke and D.~Demange, Eds., Praz-sur-Arly,
  France, Jan. 2023, pp. 103--120. [Online]. Available:
  \url{https://hal.inria.fr/hal-03910311}
\BIBentrySTDinterwordspacing

\bibitem{munch09}
G.~Munch-Maccagnoni, ``{Focalisation and Classical Realisability},'' in
  \emph{Computer Science Logic}, E.~Gr{\"a}del and R.~Kahle, Eds.\hskip 1em
  plus 0.5em minus 0.4em\relax Berlin, Heidelberg: Springer Berlin Heidelberg,
  2009, pp. 409--423.

\bibitem{kreisel1958non}
G.~Kreisel, ``{The Non-Derivability of $\neg x A(x)\rightarrow \exists x \neg
  A(x)$, $A(x)$ Primitive Recursive, in Intuitionistic Formal Systems},''
  \emph{The Journal of Symbolic Logic}, vol.~23, no.~4, pp. 456--457, 1958.

\bibitem{Richman1983}
F.~Richman, ``{Church’s Thesis Without Tears},'' \emph{Journal of Symbolic
  Logic}, vol.~48, no.~3, p. 797–803, 1983.

\bibitem{Berardi+Bezem+Coquand:jsyml:1998}
\BIBentryALTinterwordspacing
S.~Berardi, M.~Bezem, and T.~Coquand,
  ``{\href{https://doi.org/10.2307/2586854}{On the Computational Content of the
  Axiom of Choice}},'' \emph{Journal of Symbolic Logic}, vol.~63, no.~2, pp.
  600--622, 1998. [Online]. Available: \url{http://dx.doi.org/10.2307/2586854}
\BIBentrySTDinterwordspacing

\bibitem{Herbelin:DC}
\BIBentryALTinterwordspacing
H.~Herbelin, ``{A Constructive Proof of Dependent Choice, Compatible with
  Classical Logic},'' in \emph{LICS}, 2012. [Online]. Available:
  \url{https://hal.inria.fr/hal-00697240}
\BIBentrySTDinterwordspacing

\bibitem{Blot:barRecursion2013}
V.~Blot and C.~Riba, ``{On Bar Recursion and Choice in a Classical Setting},''
  in \emph{Programming Languages and Systems}, 2013.

\bibitem{MiqueyDC18}
\BIBentryALTinterwordspacing
E.~Miquey, ``{A Sequent Calculus with Dependent Types for Classical
  Arithmetic},'' in \emph{LICS}, 2018. [Online]. Available:
  \url{https://doi.org/10.1145/3209108.3209199}
\BIBentrySTDinterwordspacing

\bibitem{krivine:2016}
\BIBentryALTinterwordspacing
J.-L. Krivine, ``{Bar Recursion in Classical Realisability: Dependent Choice
  and Continuum Hypothesis},'' in \emph{Computer Science Logic}, vol.~62, 2016.
  [Online]. Available: \url{http://drops.dagstuhl.de/opus/volltexte/2016/6565}
\BIBentrySTDinterwordspacing

\bibitem{bauer2019algebraiceffects}
\BIBentryALTinterwordspacing
A.~Bauer, ``{What is Algebraic about Algebraic Effects and Handlers?}'' 2019.
  [Online]. Available: \url{https://arxiv.org/abs/1807.05923}
\BIBentrySTDinterwordspacing

\bibitem{plotkin2002Notions}
G.~Plotkin and J.~Power, ``{Notions of Computation Determine Monads},'' in
  \emph{Foundations of Software Science and Computation Structures}, M.~Nielsen
  and U.~Engberg, Eds.\hskip 1em plus 0.5em minus 0.4em\relax Berlin,
  Heidelberg: Springer Berlin Heidelberg, 2002, pp. 342--356.

\bibitem{blot22update}
V.~Blot, ``{A Direct Computational Interpretation of Second-Order Arithmetic
  via Update Recursion},'' in \emph{{LICS 2022 - 37th Annual ACM/IEEE Symposium
  on Logic in Computer Science}}, Ha{\"i}fa, Israel, Aug. 2022.

\bibitem{lietz02}
P.~LIETZ and T.~STREICHER, ``Impredicativity entails untypedness,''
  \emph{Mathematical Structures in Computer Science}, vol.~12, no.~3, p.
  335–347, 2002.

\bibitem{CohGruKirMiq25mca}
L.~Cohen, A.~Grunfeld, D.~Kirst, and E.~Miquey, ``{From Partial to Monadic:
  Combinatory Algebra with Effects},'' in \emph{10th International Conference
  on Formal Structures for Computation and Deduction}, ser. Leibniz
  International Proceedings in Informatics (LIPIcs), vol. 337, 2025.

\bibitem{moggi1995semanticsEL}
E.~Moggi, ``{A Semantics for Evaluation Logic},'' \emph{Fundam. Inf.}, vol.~22,
  no. 1,2, p. 117–152, apr 1995.

\bibitem{bernardy+lasson11}
J.-P. Bernardy and M.~Lasson, ``{Realizability and Parametricity in Pure Type
  Systems},'' in \emph{Proceedings of the 14th International Conference on
  Foundations of Software Science and Computational Structures: Part of the
  Joint European Conferences on Theory and Practice of Software}, ser.
  FOSSACS'11/ETAPS'11.\hskip 1em plus 0.5em minus 0.4em\relax Berlin,
  Heidelberg: Springer-Verlag, 2011, p. 108–122.

\bibitem{birolo2013interactive}
G.~Birolo, ``{Interactive Realizability, Monads and Witness Extraction},''
  \emph{arXiv preprint arXiv:1304.4091}, 2013.

\bibitem{baillon_et_al:2022}
\BIBentryALTinterwordspacing
M.~Baillon, A.~Mahboubi, and P.-M. P\'{e}drot, ``{Gardening with the Pythia A
  Model of Continuity in a Dependent Setting},'' in \emph{30th EACSL Annual
  Conference on Computer Science Logic (CSL 2022)}, ser. Leibniz International
  Proceedings in Informatics (LIPIcs), F.~Manea and A.~Simpson, Eds., vol.
  216.\hskip 1em plus 0.5em minus 0.4em\relax Dagstuhl, Germany: Schloss
  Dagstuhl -- Leibniz-Zentrum f{\"u}r Informatik, 2022, pp. 5:1--5:18.
  [Online]. Available:
  \url{https://drops.dagstuhl.de/opus/volltexte/2022/15725}
\BIBentrySTDinterwordspacing

\bibitem{Escardo:2013}
\BIBentryALTinterwordspacing
M.~H. Escard{\'{o}}, ``{Continuity of {G}{\"{o}}del's System {T} Definable
  Functionals via Effectful Forcing},'' \emph{Electr. Notes Theor. Comput.
  Sci.}, vol. 298, pp. 119--141, 2013. [Online]. Available:
  \url{http://dx.doi.org/10.1016/j.entcs.2013.09.010}
\BIBentrySTDinterwordspacing

\bibitem{Coquand+Jaber:2010}
\BIBentryALTinterwordspacing
T.~Coquand and G.~Jaber, ``{A Note on Forcing and Type Theory},'' \emph{Fundam.
  Inform.}, vol. 100, no. 1-4, pp. 43--52, 2010. [Online]. Available:
  \url{http://dx.doi.org/10.3233/FI-2010-262}
\BIBentrySTDinterwordspacing

\bibitem{Coquand+Jaber:2012}
\BIBentryALTinterwordspacing
------, ``{A Computational Interpretation of Forcing in Type Theory},'' in
  \emph{Epistemology versus Ontology: Essays on the Philosophy and Foundations
  of Mathematics in Honour of Per Martin-L{\"o}f}, P.~Dybjer, S.~Lindstr{\"o}m,
  E.~Palmgren, and G.~Sundholm, Eds.\hskip 1em plus 0.5em minus 0.4em\relax
  Dordrecht: Springer Netherlands, 2012, pp. 203--213. [Online]. Available:
  \url{https://doi.org/10.1007/978-94-007-4435-6_10}
\BIBentrySTDinterwordspacing

\bibitem{FCS2018}
\BIBentryALTinterwordspacing
M.~Bickford, L.~Cohen, R.~L. Constable, and V.~Rahli, ``{Computability Beyond
  Church-Turing via Choice Sequences},'' in \emph{Proceedings of the 33rd
  Annual ACM/IEEE Symposium on Logic in Computer Science}, ser. LICS '18.\hskip
  1em plus 0.5em minus 0.4em\relax New York, NY, USA: ACM, 2018, pp. 245--254.
  [Online]. Available: \url{http://doi.acm.org/10.1145/3209108.3209200}
\BIBentrySTDinterwordspacing

\bibitem{Cohen+Rahli:fscd:2022}
\BIBentryALTinterwordspacing
L.~Cohen and V.~Rahli, ``{Constructing Unprejudiced Extensional Type Theories
  with Choices via Modalities},'' in \emph{7th International Conference on
  Formal Structures for Computation and Deduction, {FSCD} 2022, August 2-5,
  2022, Haifa, Israel}, ser. LIPIcs, A.~P. Felty, Ed., vol. 228.\hskip 1em plus
  0.5em minus 0.4em\relax Schloss Dagstuhl - Leibniz-Zentrum f{\"{u}}r
  Informatik, 2022, pp. 10:1--10:23. [Online]. Available:
  \url{https://doi.org/10.4230/LIPIcs.FSCD.2022.10}
\BIBentrySTDinterwordspacing

\bibitem{Bickford+Cohen+Constable+Rahli:csl:2021}
\BIBentryALTinterwordspacing
M.~Bickford, L.~Cohen, R.~L. Constable, and V.~Rahli, ``{Open Bar - a
  Brouwerian Intuitionistic Logic with a Pinch of Excluded Middle},'' in
  \emph{{CSL}}, ser. LIPIcs, C.~Baier and J.~Goubault{-}Larrecq, Eds., vol.
  183.\hskip 1em plus 0.5em minus 0.4em\relax Schloss Dagstuhl -
  Leibniz-Zentrum f{\"{u}}r Informatik, 2021, pp. 11:1--11:23. [Online].
  Available: \url{https://doi.org/10.4230/LIPIcs.CSL.2021.11}
\BIBentrySTDinterwordspacing

\bibitem{continuityRef2023}
\BIBentryALTinterwordspacing
L.~Cohen and V.~Rahli, ``{Realizing Continuity Using Stateful Computations},''
  in \emph{31st EACSL Annual Conference on Computer Science Logic (CSL 2023)},
  ser. Leibniz International Proceedings in Informatics (LIPIcs), B.~Klin and
  E.~Pimentel, Eds., vol. 252.\hskip 1em plus 0.5em minus 0.4em\relax Dagstuhl,
  Germany: Schloss Dagstuhl -- Leibniz-Zentrum f{\"u}r Informatik, 2023, pp.
  15:1--15:18. [Online]. Available:
  \url{https://drops.dagstuhl.de/opus/volltexte/2023/17476}
\BIBentrySTDinterwordspacing

\bibitem{cohen23inductivecontinuity}
\BIBentryALTinterwordspacing
L.~Cohen, B.~da~Rocha~Paiva, V.~Rahli, and A.~Tosun, ``{Inductive Continuity
  via Brouwer Trees},'' in \emph{48th International Symposium on Mathematical
  Foundations of Computer Science (MFCS 2023)}, ser. Leibniz International
  Proceedings in Informatics (LIPIcs), J.~Leroux, S.~Lombardy, and D.~Peleg,
  Eds., vol. 272.\hskip 1em plus 0.5em minus 0.4em\relax Dagstuhl, Germany:
  Schloss Dagstuhl -- Leibniz-Zentrum f{\"u}r Informatik, 2023, pp.
  37:1--37:16. [Online]. Available:
  \url{https://drops.dagstuhl.de/opus/volltexte/2023/18571}
\BIBentrySTDinterwordspacing

\bibitem{bauer2022reals}
\BIBentryALTinterwordspacing
A.~Bauer and J.~E. Hanson, ``{The Countable Reals},'' 2024. [Online].
  Available: \url{https://arxiv.org/abs/2404.01256}
\BIBentrySTDinterwordspacing

\bibitem{seely1987categorical}
R.~A. Seely, ``{Categorical Semantics for Higher Order Polymorphic Lambda
  Calculus},'' \emph{The Journal of Symbolic Logic}, vol.~52, no.~4, pp.
  969--989, 1987.

\bibitem{Harel2000dynamic}
D.~Harel, D.~Kozen, and J.~Tiuryn, \emph{{Dynamic Logic}}.\hskip 1em plus 0.5em
  minus 0.4em\relax MIT Press, 2000.

\bibitem{levy1999call}
P.~B. Levy, ``Call-by-push-value: A subsuming paradigm,'' in
  \emph{International Conference on Typed Lambda Calculi and
  Applications}.\hskip 1em plus 0.5em minus 0.4em\relax Springer, 1999, pp.
  228--243.

\end{thebibliography}

\vlong{
\newpage
\onecolumn
\begin{appendices}

As noted, most of the syntactic results, and in particular, the systems $\ihol$, $\effhol$ and the syntactic realizability translation between them have been formalized and verified in Coq.
This appendix presents some omitted details and proofs. 

\section{\text{$\hol$}}
\Cref{fig:hol-typing} provides the complete set of typing rules for $\hol$ and the theory with explicit typing premises in the inference rules. 
\begin{figure}[h]
    \centering
\begin{tabular}{@{}c@{\hspace{0.15in}}c@{\hspace{0.15in}}c}
    \infer{\eltype{\scontext}{\spred : \sort}}{\left(\spred : \sort\right) \in \scontext}
        &
    \infer{\eltype{\scontext}{ \comprehend{\spred}{\sort}{\sprop} : \predcon{\sort} } }{ \elprop{\scontext , \spred : \sort}{\sprop  }}
&
    \infer{ \elprop{\scontext}{\smem{\sterm_1}{\sterm_{2}}} }{ \eltype{\scontext}{\sterm_1 : \sort} \quad \eltype{\scontext}{\sterm_2 : \predcon{\sort}} }
\end{tabular}
\\[\sskip]
\begin{tabular}{@{}c@{\hspace{0.2in}}c@{\hspace{0.2in}}c@{\hspace{0.2in}}c}
    \infer{ \elprop{\scontext}{\smembase{\sterm}}}{ \eltype{\scontext}{\sterm : \STAR} }
&
   \infer{ \eltype{\scontext}{\compbase{\sprop}}:\STAR}{ \elprop{\scontext}{\sprop} }
&
    \infer{\elprop{\scontext}{\simplies{ \sprop_{1} }{ \sprop_{2} }}}{\elprop{\scontext}{\sprop_{1}} \quad \elprop{\scontext}{\sprop_{2}}} 
&
    \infer{\elprop{\scontext}{\tspecindprod{\spred : \sort}{\sprop}}}{\elprop{\scontext, \spred : \sort }{\sprop}}
    \\[\myskip]
    
\end{tabular}
\par\noindent\rule{0.6\textwidth}{0.7pt}
\\[\myskip]
\begin{tabular}{@{}c@{\hspace{-0.07in}}c}
        $\infer[\!\!\!\impintrorule]{\selsequent{\scontext}{\sprops}{\simplies{ \sprop_{1} }{ \sprop_{2} }}}{\selsequent{\scontext}{\sprops,\sprop_{1}}{\sprop_{2}} }$ &
        $\infer[\!\!\!\impelimrule]{\selsequent{\scontext}{\sprops}{\sprop_{2}}}{\selsequent{\scontext}{\sprops}{\simplies{\sprop_{1}}{\sprop_{2}}} \quad \selsequent{\scontext}{\sprops}{\sprop_{1}}}$\\[\sskip]

        $\infer[\!\!\!\holuniintrorule]
        {\selsequent{\scontext}{\sprops}{\tspecindprod{\spred:\sort} \sprop}}
        {\selsequent{ \scontext, \spred : \sort}{\sprops}{\sprop}}$ 
        &
        $\infer[\!\!\!\holunielimrule]
        {\selsequent{ \scontext}{\sprops}\sprop\subst{\spred:=\sterm}}
        {\selsequent{\scontext}{\sprops}{\tspecindprod{\spred:\sort} \sprop}
        & \eltype{\scontext}{\sterm:\sort}
        }$
        \\[\sskip]

        $\infer[\!\!\!\memintrorule]
        {\selsequent{\scontext}{\sprops}{ \smem{\sterm}{\comprehend{\spred}{\sort}{\sprop}}}}
        {\selsequent{\scontext}{\sprops}{\sprop\subst{\spred := \sterm}}& \eltype{\scontext}{\sterm:\sort}
        }$
        & \quad
        $\infer[\!\!\!\memelimrule]
        {\selsequent{\scontext}{\sprops}{\sprop\subst{\spred := \sterm}}}
        {\selsequent{\scontext}{\sprops}{ \smem{\sterm}{\comprehend{\spred}{\sort}{\sprop}}}}
        $\\[\sskip]
\end{tabular}
\\
    \begin{tabular}{c@{\hspace{0.2in}}c@{\hspace{0.2in}}c}
        $\infer[\!\!\!\memunintrorule]
        {\selsequent{\scontext}{\sprops}{ \smembase{\compbase{\sprop}}}}
        {\selsequent{\scontext}{\sprops}{\sprop}}$
        &
        $\infer[\!\!\!\memunelimrule]
        {\selsequent{\scontext}{\sprops}{\sprop}}
        {\selsequent{\scontext}{\sprops}{ \smembase{\compbase{\sprop}}}}
        $
        &
        $\infer[\!\!\!\idrule]{ \selsequent{\scontext}{\sprops}{\sprop} }{\sprop\in \sprops
         &
         \eltype{\scontext}{\sprops,\sprop}}$
    \end{tabular}
    \caption{Typing Rules and Full Theory for $\hol$ \coqdoc{HOL.html}}
    \label{fig:hol-typing}
\end{figure}

~\newpage
\section{Effectful Higher-Order Logic}\label{app:conv}
\begin{figure}[h]
\input{section/figures/HOPL_beta_full}
\caption{Reduction and conversion in $\effhol$ (full definition) \coqdoc{EffHOL.html}}
\label{fig:full_conv}
\end{figure}

\Cref{fig:full_conv} provides the complete definition of the conversion relations in $\effhol$.
By induction on typing derivations, it is easy to verify, that the typing rules are closed under context extensibility and under (well-formed) substitution.

\begin{lemma}[Context Weakening \coqdoc{EffHOL.html\#has_kind_weak}]
Let $\varmathbb{C}$ be any combination of contexts and let $\varmathbb{C}'$ be an extension with fresh variables of one or more of the contexts in  $\varmathbb{C}$.
If $\eltype{\varmathbb{C}}{\mathcal {J}}$, then $\eltype{\varmathbb{C}'}{\mathcal {J}}$ for $\mathcal{J}\in \{ \lq \type:\kind \rq, \lq\term:\type\rq, \lq\exprs :\indice\rq, \lq\tprop \rq\}$.
\end{lemma}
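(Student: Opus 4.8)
The plan is to prove all four statements at once by a single simultaneous induction on the derivation of the typing judgment, since the judgments for types, programs, expressions and specifications in \Cref{fig:typing-full} are mutually defined (and the elided judgment ``$\indice$ well-formed'' depends only on the $\type:\kind$ judgment, so it comes along for free). Concretely, I would phrase the induction hypothesis with full generality in the target context: for \emph{every} $\mathbb{C}'$ obtained from $\mathbb{C}$ by inserting finitely many fresh declarations into any subset of its three component contexts $\kcontext$, $\icontext$, $\tcontext$, the same judgment is derivable over $\mathbb{C}'$. Then I case-split on the last rule used.

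For the three lookup rules — $(\typevar:\kind)\in\kcontext$, $(\termvar:\type)\in\tcontext$, $(\epred:\indice)\in\icontext$ — the conclusion is immediate, since membership in a context is obviously preserved when fresh declarations are added. For every rule whose premises bind no new variable (the two application rules, function/$\prod$/computation type formation, implication, (base) membership formation, return, both conversion rules, and the anti-reduction rule \antiredtermrule), I simply apply the induction hypothesis to each premise with the same $\mathbb{C}'$ and re-apply the rule; the side conditions that are about syntax only — $\type\conv\type'$, $\indice\conv\indice'$, $\tprop\conv\tprop'$, and $\term_1\reduction\term_2$ — mention no context and are therefore untouched.

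The only case needing genuine care is the rules that extend a context in a premise: type abstraction ($\bar\Lambda$ and $\Lambda$), term abstraction $\lambda$, $\mathsf{bind}$, the three universal quantifiers over kinds/types/indices, the (base) comprehension expressions, and the modality $\after{\term}{\termvar}{\tprop}$. Here the premise is derived over, say, $\kcontext,\typevar:\kind \mid \icontext \mid \tcontext$ (or the analogous extension of $\icontext$ or $\tcontext$). I would first $\alpha$-rename the bound variable so that it is distinct from all of the fresh declarations appearing in $\mathbb{C}'$; then the extended source context is just $\mathbb{C}$-plus-binder, and $\mathbb{C}'$-plus-binder is again one of its fresh extensions, so the induction hypothesis applies to the premise and yields it over $\mathbb{C}'$-plus-binder, after which the rule re-applies. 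Because the contexts are layered in dependency order (kinds before indices and types, and those before expressions and specifications), this freshness bookkeeping is monotone and never circular.

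I expect the main obstacle to be purely administrative rather than mathematical: organizing the simultaneous induction over the six mutually recursive syntactic categories with the right quantifier structure (arbitrary fresh insertions into \emph{any} of the three component contexts, not merely appending at the right end, which is what the statement's ``extension with fresh variables of one or more of the contexts'' demands), and discharging the $\alpha$-renaming obligation uniformly across all the binder cases. Since context membership is monotone and the conversion/reduction side conditions are context-independent, no case poses a real difficulty — which is precisely why this can be asserted as routine and is handled cleanly in the \coqdoc{EffHOL.html\#has_kind_weak} development.
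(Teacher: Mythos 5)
Your proposal is correct and matches the paper's treatment: the paper simply notes that closure under context extension is verified "by induction on typing derivations" (with the details delegated to the Rocq mechanization), which is exactly the simultaneous induction over the mutually defined typing judgments that you spell out. Your additional care about strengthening the induction hypothesis to arbitrary fresh insertions and $\alpha$-renaming in the binder cases is precisely the routine bookkeeping the paper leaves implicit, so there is no gap and no genuinely different route.
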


\begin{lemma}[Judgement Substitution \coqdoc{EffHOL.html\#has_kind_subst}]
The following hold:
\begin{enumerate}[leftmargin=*]
    \item 
If     $\eltype{\kcontext, \typevar:\kind'}{\type:\kind}$ and 
$\eltype{\kcontext}{\type':\kind'}$
then 
$\eltype{\kcontext}{\type\subst{\typevar:=\type'}:\kind}$
    \item 
If     $\eltrm{\kcontext, \typevar:\kind \mid \tcontext, \termvar:\type'}{\term}{\type}$ and 
$\eltype{\kcontext}{\type':\kind}$ and $\eltype{\kcontext,\typevar:\kind \mid \tcontext}{\term':\type'}$
then 
$\eltrm{\kcontext \mid \tcontext}{\term \subst{\typevar:=\type'}\subst{\termvar:=\term'}}{\type\subst{\typevar:=\type'}}$
    \item 
If     $\eltrm{\kcontext, \typevar:\kind \mid \tcontext, \termvar:\type
\mid \icontext, \epred: \indice' }{\exprs}{\indice}$ and 
$\eltype{\kcontext}{\type':\kind}$ 
and $\eltype{\kcontext ,\typevar:\kind\mid \tcontext}{\term':\type}$
and
$\eltrm{\kcontext ,\typevar:\kind\mid \tcontext,  \termvar:\type \mid \icontext}{\exprs'}{\indice'}$
then 
$\eltrm{\context}{\exprs \subst{\typevar:=\type'}\subst{\termvar:=\term'}\subst{\epred:=\exprs'}}{\indice\subst{\typevar:=\type'}}$
\item 
If $\eltype{\kcontext, \typevar:\kind \mid \tcontext, \termvar:\type
\mid \icontext, \epred: \indice }{\tprop}$  
and 
$\eltype{\kcontext}{\type':\kind}$ 
and $\eltype{\kcontext ,\typevar:\kind\mid \tcontext}{\term':\type}$
and
$\eltrm{\kcontext ,\typevar:\kind\mid \tcontext,  \termvar:\type \mid \icontext}{\exprs'}{\indice}$
then 
$\eltype{\context}{\tprop \subst{\typevar:=\type'}\subst{\termvar:=\term'}\subst{\epred:=\exprs'}}$
\end{enumerate}
\end{lemma}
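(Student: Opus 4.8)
The plan is to prove all four statements by a single simultaneous induction on the typing derivation of the object being substituted into, using only the Context Weakening lemma above together with a handful of routine syntactic identities. Because the conversion side-conditions in the typing rules mention $\conv$, the induction must be run jointly with an auxiliary lemma stating that $\conv$ is stable under each of the three substitutions; concretely, the bundle proved by mutual induction is: statements (1)--(4) above, plus ``$A \conv A'$ implies $A\subst{\typevar := \type'} \conv A'\subst{\typevar := \type'}$'' and the analogues for $\subst{\termvar := \term'}$ and $\subst{\epred := \exprs'}$. The stability claims are easy once one recalls that $\conv$ is the contextual closure of $\conv_\type \cup \conv_\exprs$: each basic step is preserved by substitution because substitution commutes with the relevant $\beta$-like redex, and contextual closure is then automatic.

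First I would collect the needed commutation identities. For types these are the Barendregt-style equations $\type_2\subst{\typevar_1 := \type_1}\subst{\typevar := \type'} = \type_2\subst{\typevar := \type'}\subst{\typevar_1 := \type_1\subst{\typevar := \type'}}$ (for $\typevar \neq \typevar_1$ with $\typevar_1$ fresh for $\type'$), together with the corresponding identities mixing the type-, program-, and expression-substitutions on programs, expressions, and specifications. I would also observe that one-step reduction $\reduction$ never enters, since this lemma is about typing rather than provability and the typing rules contain no anti-reduction rule, only conversion rules, which are subsumed by the stability claim.

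With these in hand the induction is mechanical: one case per typing rule, each of the shape ``push the substitution inward, apply the induction hypothesis, reassemble with the same rule.'' The variable rules for types, programs, and expressions split on whether the variable being introduced is $\typevar$, $\termvar$, or $\epred$ (then it is replaced, and we invoke the well-typedness hypothesis on $\type'$, $\term'$, or $\exprs'$) or some other variable (then it survives and still belongs to the shortened context). Each binder rule --- type and term abstraction, universal types, monadic bind, comprehension, the modality, and the three specification quantifiers --- is handled by applying the induction hypothesis in the extended context after first weakening $\type'$, $\term'$, and $\exprs'$ over the newly bound variable. Each elimination rule --- term application, type application (whose result type is $\type_2\subst{\typevar_1 := \type_1}$), expression type-application (result index $\indice\subst{\typevar_1 := \type}$), and membership --- uses the induction hypothesis on the premises and then the commutation identities to bring the substituted conclusion into exactly the form the rule produces. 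The conversion rules use the $\conv$-stability claim together with the induction hypothesis.

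The main obstacle is bookkeeping rather than mathematics: stating the simultaneous commutation identities with precisely the freshness side-conditions matching the bound-variable choices in the binder cases, and, in the type- and expression-application cases, correctly reassociating the substitution acting on the rule's ``result'' ($\type_2\subst{\typevar_1 := \type_1}$ and $\indice\subst{\typevar_1 := \type}$) --- essentially the only point where the three substitutions genuinely interact rather than passing through transparently. A secondary, easily overlooked subtlety is that the conversion side-conditions force the argument to be one mutual induction over kinding, typing, expression-formation, specification-formation, and $\conv$ simultaneously; this is precisely why the development can summarize the whole result as ``by induction on typing derivations''.
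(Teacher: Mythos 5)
Your proposal is correct and matches the paper's approach: the paper proves this lemma exactly by (mutual) induction on the typing derivations, relegating the details to the Coq mechanization, and your sketch fills in the standard bookkeeping (substitution commutation identities, weakening in binder cases, conversion-stability for the conversion rules) in the expected way. The only inessential quibble is that stability of $\conv$ under substitution need not be bundled into the mutual induction, since conversion is a purely syntactic relation independent of typing and can be established beforehand.
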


\begin{lemma}[Subject Reduction \coqdoc{EffHOL.html\#red_has_type}]
        If $\eltrm{\kcontext \mid \tcontext}{\term_1}{\type}$  and $\term_1 \betared \term_2$, then $\eltrm{\kcontext \mid \tcontext}{\term_2}{\type}$.
\end{lemma}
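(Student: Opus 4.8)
The plan is to prove type preservation by induction on the derivation of $\term_1\betared\term_2$. Since $\betared$ on programs is generated by the three root rules $\termbind{\termvar}{\termret{\term'_1}}{\term'_2}\betared\term'_2\subst{\termvar:=\term'_1}$, $\termtypeapp{(\termtypeabs{\typevar}{\kind}{\term'})}{\type'}\betared\term'\subst{\typevar:=\type'}$ and $\termapp{(\termabs{\termvar}{\type'}{\term'})}{\termvalue}\betared\term'\subst{\termvar:=\termvalue}$ (and the congruence closure, if one is assumed, contributes only routine cases handled by the induction hypothesis), it suffices to handle these three reductions at the root. In each case the recipe is identical: invert the typing derivation of the redex to recover typing judgments for its immediate subterms, then conclude via the Judgement Substitution lemma, re-applying the conversion rule once at the end to land on exactly $\type$.

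In detail: for the bind rule, inverting $\eltrm{\kcontext \mid \tcontext}{\termbind{\termvar}{\termret{\term'_1}}{\term'_2}}{\type}$ yields $\type\conv\typecomp{\type_2}$ together with $\eltrm{\kcontext \mid \tcontext}{\termret{\term'_1}}{\typecomp{\type_1}}$ and $\eltrm{\kcontext \mid \tcontext, \termvar:\type_1}{\term'_2}{\typecomp{\type_2}}$; a second inversion on the return gives $\eltrm{\kcontext \mid \tcontext}{\term'_1}{\type_1}$, and the term-substitution instance of Judgement Substitution then produces $\eltrm{\kcontext \mid \tcontext}{\term'_2\subst{\termvar:=\term'_1}}{\typecomp{\type_2}}$, hence the claim by conversion. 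The type-application rule is entirely analogous, using the type-substitution instance of Judgement Substitution (and the fact that, in a well-formed judgment, $\tcontext$ does not mention the fresh $\typevar$, so substituting into it is the identity). The term-application rule combines inversion on the application with inversion on the abstraction — the latter only up to conversion on the domain type — followed by substitution; here the argument is a value $\termvalue$, so one never even needs substitution of a non-value.

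The only real content is the inversion step, which is delicate precisely because the typing rules include the type-conversion rule: a derivation of $\eltrm{\kcontext \mid \tcontext}{\termabs{\termvar}{\type'}{\term'}}{\type}$ need not end with the abstraction rule, so the inversions above must be stated as generation lemmas — e.g.\ if $\eltrm{\kcontext \mid \tcontext}{\termabs{\termvar}{\type'}{\term'}}{\type}$ then $\type\conv\typefun{\type_1}{\type_2}$ with $\type'\conv\type_1$ and $\eltrm{\kcontext \mid \tcontext, \termvar:\type_1}{\term'}{\type_2}$, and analogously for $\termret{-}$, $\termtypeabs{\typevar}{\kind}{-}$ and for the two elimination forms. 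Proving these generation lemmas reduces to confluence of $\conv$ restricted to types — which is just $\beta$-reduction at the $\fomega$ kind level, hence confluent and strongly normalizing — together with its corollary that the type constructors are rigid under conversion: $\typefun{A}{B}\conv\typefun{A'}{B'}$ forces $A\conv A'$ and $B\conv B'$, and $\typefun{-}{-}$, $\ttypprod{\typevar:\kind}{-}$, $\typecomp{-}$ are pairwise non-convertible. I expect this stability-of-conversion bookkeeping to be the main (and essentially only) obstacle; with it in place the three cases close as sketched, matching the machine-checked development.
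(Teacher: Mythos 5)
Your proof is correct and follows essentially the same route as the paper's (which is left implicit in the text and deferred to the Rocq development): induction on the reduction, generation/inversion lemmas modulo the conversion rule, and the Judgement Substitution lemma, with confluence of type-level $\conv$ supplying the injectivity and rigidity of the type constructors. One cosmetic remark: the value restriction you invoke in the application case matters only for the \emph{logical} substitution rule, not for typing --- your bind case already substitutes a possibly non-value $\term_1'$, which the paper's typing substitution lemma permits without restriction.
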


\newpage
We give here the trivial forgetful translation function $\trform{-}$ from $\effhol$ to $\ihol$ used in \Cref{lem:con}: 
    \begin{center}
    $\def\arraystretch{1}
\begin{array}{l@{\hspace{0.5in}}l}
        \hspace{-0.13in}
\begin{array}{l@{\hspace{0.1in}}l@{\hspace{0.1in}}l}
\trform{\refpredN{\type}} &:=& \STAR\\
\trform{\refpred{\type}{\indice}} &:=& \predcon{\trform \sigma}\\
\trform{\textstyle\tindprod{\typevar:\kind}{\indice}}&:=&\trform{\indice}\\
\trform{\epred} &:=& \spred_\epred\\
\trform{\ecomp{\termvar}{\type}{\exprsvar}{\indice}{\tprop}}
    &:=&\comprehend{\spred_\epred}{\trform \indice}{\trform\tprop}\\
\trform{\ecompbase{\termvar}{\type}{\tprop}}
    &:=&\compbase{\trform\tprop}\\
\trform{\eforall{\typevar:\kind}{\exprs}} &:=& \trform \exprs\\
\trform{\eapp{\exprs}{\type}} &:=& \trform \exprs\\
\end{array}
&
\begin{array}{l@{\hspace{0.05in}}l@{\hspace{0.05in}}l}
\trform{\tmem{\term}{\exprs_1}{ \exprs_2}}
    &:=& \smem{\trform{\exprs_1}}{\trform{\exprs_2}}\\
    \trform{\tmembase{\term}{\exprs}}
    &:=& \smembase{\trform{\exprs}}\\
    \trform{\timplies{ \tprop_1 }{ \tprop_2 }}
    &:=& \simplies {\trform{\tprop_1}} {\trform{\tprop_2}}\\
    \trform{\after{\term}{\termvar}{\tprop}} &:=& \trform{\tprop}\\
    \trform{\tspeckinprod{\typevar:\kind}{\tprop}} &:=&\trform{\tprop}\\
    \trform{\tspectypprod{\termvar:\type}{\tprop}} &:=&\trform{\tprop}\\
   \trform{\tspecindprod{\epred:\indice}{\tprop}}
    &:=& \sforall{\spred_\epred:\trform \indice}{\trform \tprop}
\end{array}
\\~
        \end{array}$
    \end{center}

\begin{lemma}[Judgement Preservation \coqdoc{EffHOL_to_HOL.html\#trans_form_is_prop}] The following hold.
\begin{itemize}[leftmargin=*]
    \item If \ $\eltrm{\context}{\exprs}{\indice}$ then $~\eltype{\trform\icontext}{\trform\exprs : \trform\indice}$
    \vspace{0.1cm}
\item If \ $\eltype{\context}{\tprop}$ then $~\eltype{\trform\icontext}{\trform\tprop}$.
    \end{itemize}
\end{lemma}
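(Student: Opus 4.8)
The plan is to prove the two statements simultaneously, by mutual induction on the derivations of $\eltrm{\context}{\exprs}{\indice}$ and $\eltype{\context}{\tprop}$, going rule by rule through the expression- and specification-typing rules of \Cref{fig:typing-full}. The organising remark is that $\trform{-}$ discards all type-, kind-, and program-level data, so the translated judgment depends only on the translated index context $\trform{\icontext}$: enlarging $\kcontext$ by a declaration $\typevar:\kind$ or $\tcontext$ by $\termvar:\type$ changes none of $\trform{\exprs}$, $\trform{\tprop}$, $\trform{\indice}$, $\trform{\icontext}$, whereas enlarging $\icontext$ by $\epred:\indice$ turns $\trform{\icontext}$ into $\trform{\icontext},\spred_\epred:\trform{\indice}$, which is exactly the sort-context extension used by the matching $\ihol$ rule. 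Moreover, by its recursive clauses $\trform{\indice}$ is always built solely from $\STAR$ and $\predcon{-}$, hence is a legal $\ihol$ sort; so $\trform{\icontext}$ is always a legal sort context, which also disposes of the (elided) well-formedness judgment for indices.

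The only preparatory facts needed are two invariance properties of $\trform{-}$, both immediate by induction on syntax since no clause of $\trform{-}$ ever inspects an embedded type: (i) type-substitution invariance, $\trform{\indice\subst{\typevar:=\type}}=\trform{\indice}$, and likewise $\trform{\exprs\subst{\typevar:=\type}}=\trform{\exprs}$ and $\trform{\tprop\subst{\typevar:=\type}}=\trform{\tprop}$; and (ii) conversion invariance, if $\indice\conv\indice'$ then $\trform{\indice}=\trform{\indice'}$, and similarly for expressions and specifications. For (ii), the base redexes $\typeapp{(\typeabs{\typevar}{\kind}{\type_1})}{\type_2}\conv_\type\type_1\subst{\typevar:=\type_2}$ and $\typeapp{(\eforall{\typevar:\kind}{\exprs})}{\type}\conv_\exprs\exprs\subst{\typevar:=\type}$ are handled by (i), and contextual closure is routine.

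With these, the induction is mechanical. The cases that actually move sort contexts are: an expression variable $\epred:\indice\in\icontext$, which maps to $\spred_\epred:\trform{\indice}\in\trform{\icontext}$, so the $\ihol$ variable rule applies; the comprehension $\ecomp{\termvar}{\type}{\epred}{\indice}{\tprop}$ of index $\refpred{\type}{\indice}$, whose premise is $\elprop{\kcontext\mid\icontext,\epred:\indice\mid\tcontext,\termvar:\type}{\tprop}$, so by the IH $\eltype{\trform{\icontext},\spred_\epred:\trform{\indice}}{\trform{\tprop}}$, whence the $\ihol$ comprehension rule gives $\eltype{\trform{\icontext}}{\comprehend{\spred_\epred}{\trform{\indice}}{\trform{\tprop}}:\predcon{\trform{\indice}}}$ and $\predcon{\trform{\indice}}=\trform{\refpred{\type}{\indice}}$; the base comprehension $\ecompbase{\termvar}{\type}{\tprop}$, handled the same way, producing $\compbase{\trform{\tprop}}$ of sort $\STAR=\trform{\refpredN{\type}}$; and the expression quantifier $\tspecindprod{\epred:\indice}{\tprop}$, which maps to $\sforall{\spred_\epred:\trform{\indice}}{\trform{\tprop}}$, well-formed by the $\ihol$ quantifier rule applied to the IH. All other cases collapse directly to the IH: type abstraction $\eforall{\typevar:\kind}{\exprs}$ and type application $\eapp{\exprs}{\type}$ because $\trform{-}$ erases both, $\trform{\tindprod{\typevar:\kind}{\indice}}=\trform{\indice}$, and $\trform{\indice\subst{\typevar:=\type}}=\trform{\indice}$ by (i); the expression conversion rule by (ii); the two membership forms of $\effhol$ to the $\ihol$ membership and base-membership propositions, via the IH and the matching $\ihol$ formation rules (with $\trform{\indice}$ as the common sort); implication to implication; and the modality $\after{\term}{\termvar}{\tprop}$, the program quantifier $\tspectypprod{\termvar:\type}{\tprop}$, and the kind quantifier $\tspeckinprod{\typevar:\kind}{\tprop}$ all to $\trform{\tprop}$, the extra context declarations being invisible to $\trform{-}$.

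I expect no genuine obstacle: the content is entirely the bookkeeping of the mutual induction together with the two invariance lemmas. The one subtlety worth flagging is that the $\effhol$ expression calculus carries type application and type-level conversion that have no $\ihol$ counterpart, which is exactly what (i) and (ii) neutralise, by exploiting that $\trform{-}$ never looks at a type.
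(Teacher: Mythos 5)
Your proposal is correct and follows the same route as the paper's (mechanized, unwritten in prose) argument: a mutual induction on the expression- and specification-typing derivations, where erasure makes the kind and type contexts irrelevant and only the translated index context $\trform{\icontext}$ tracks the $\ihol$ sort context, with the auxiliary facts that $\trform{-}$ is invariant under type substitution and under $\conv$ handling the type-application and conversion rules. The paper defers exactly this bookkeeping to the Rocq development, so there is nothing to add beyond noting that your two invariance lemmas are precisely the ingredients that make the induction go through.
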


\begin{lemma}[Derivability Preservation \coqdoc{EffHOL_to_HOL.html\#HOPL_HOL}]\label{soundesstohol}
    If\ $\elsequent{\context}{\tprops }{\tprop},\text{ then } \trform\icontext \vdash \trform\tprops \Rrightarrow \trform\tprop $.
\end{lemma}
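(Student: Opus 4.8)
The plan is to proceed by induction on the derivation of $\elsequent{\context}{\tprops}{\tprop}$, showing that each inference rule of the $\hopl$ theory (\Cref{fig:theory}) is carried by the forgetful translation $\trform{-}$ to a rule, or a short derivation, of the $\ihol$ theory (\Cref{fig:hol-typing}). The guiding observation is that $\trform{-}$ erases all purely computational content: the modality $\after{\term}{\termvar}{\tprop}$, the program universal $\tspectypprod{\termvar:\type}{\tprop}$ and the kind universal $\tspeckinprod{\typevar:\kind}{\tprop}$ all translate to $\trform{\tprop}$; the expression universal $\tspecindprod{\epred:\indice}{\tprop}$ translates to the genuine $\ihol$ quantifier $\sforall{\spred_\epred:\trform\indice}{\trform\tprop}$; and membership translates to $\ihol$ membership over the translated comprehension expression. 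Correspondingly, on contexts the three components $\kcontext\mid\icontext\mid\tcontext$ collapse to the single $\ihol$ sort context $\trform\icontext$, obtained by discarding $\kcontext$ and $\tcontext$ and renaming each declaration $\epred:\indice$ of $\icontext$ to $\spred_\epred:\trform\indice$.

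Before the induction I would record a handful of routine invariance facts, each by structural induction on the relevant syntactic category: (i) program substitution is invisible, i.e. $\trform{\tprop\subst{\termvar:=\term}}=\trform{\tprop}$ and $\trform{\exprs\subst{\termvar:=\term}}=\trform{\exprs}$; (ii) type substitution is invisible, i.e. $\trform{\tprop\subst{\typevar:=\type}}=\trform{\tprop}$, $\trform{\exprs\subst{\typevar:=\type}}=\trform{\exprs}$ and $\trform{\indice\subst{\typevar:=\type}}=\trform{\indice}$; and (iii) expression substitution survives correctly, i.e. $\trform{\tprop\subst{\epred:=\exprs}}=\trform{\tprop}\subst{\spred_\epred:=\trform\exprs}$ (and likewise for substitution into expressions). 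Combining (ii) with the generating redexes of $\conv$ in \Cref{fig:full_conv} yields the further fact that $\tprop\conv\tprop'$ (and $\exprs\conv\exprs'$) implies $\trform\tprop=\trform\tprop'$: the type redex $\typeapp{(\typeabs{\typevar}{\kind}{\type_1})}{\type_2}\conv\type_1\subst{\typevar:=\type_2}$ lives entirely inside the erased type layer, and the two sides of the expression redex $\typeapp{(\eforall{\typevar:\kind}{\exprs})}{\type}\conv\exprs\subst{\typevar:=\type}$ both erase to $\trform{\exprs}$ by (ii). I would also invoke the preceding Judgement Preservation lemma to discharge, in each case, the well-formedness side-conditions carried by the $\ihol$ rules.

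The inductive step is then a case split on the last rule. The rules \idrule, \impintrorule, \impelimrule go directly to their $\ihol$ homonyms. The rules \proguniintrorule, \progunielimrule, \typeuniintrorule, \typeunielimrule become trivial, since premise and conclusion have the same $\trform$-image (the elimination rules use (i) and (ii) to see that $\trform$ of the instantiated body equals $\trform$ of the quantified specification). The rules \expuniintrorule and \expunielimrule become \holuniintrorule and \holunielimrule, using (iii) for the instantiation and transporting the freshness side-condition of \expuniintrorule through the renaming $\epred\mapsto\spred_\epred$. The modality rules \modintrorule and \modelimrule collapse to identities, because in both the premise and the conclusion erase to the same $\ihol$ sequent (using (i) and the shape of $\trform{\after{\cdot}{\cdot}{\cdot}}$); \monrule becomes an instance of Cut, which is admissible in $\ihol$: from $\trform\tprops,\trform\tprop_1\Rrightarrow\trform\tprop_2$ and $\trform\tprops\Rrightarrow\trform\tprop_1$ one derives $\trform\tprops\Rrightarrow\trform\tprop_2$. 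The four membership rules \memintrorule, \memelimrule, \memunintrorule, \memunelimrule go to their $\ihol$ counterparts via (iii) together with the explicit description of $\trform$ on (base) comprehension expressions. Finally \convrule is an identity by the conversion-erasure fact, and \antiredtermrule is an identity by (i) (both sides of the reduction substitute a program into $\tprop$, which $\trform$ ignores).

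The argument contains no genuinely new logical content --- $\ihol$ is a fragment of $\hopl$, and every surviving $\hopl$ rule has an evident $\ihol$ analogue --- so the main obstacle is purely one of bookkeeping: stating the substitution-invariance facts (i)--(iii) and the conversion-erasure fact uniformly across all five syntactic categories, handling the simultaneous substitution $\subst{\termvar:=\term}\subst{\epred:=\exprs}$ that appears in the membership rules, and checking that the renaming convention $\epred\mapsto\spred_\epred$ interacts correctly with freshness and with the erasure of the $\kcontext$ and $\tcontext$ components. All of these are discharged in the accompanying Rocq mechanization.
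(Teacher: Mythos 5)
Your proof is correct and follows essentially the same route as the paper's (mechanized) argument: an induction on the $\hopl$ derivation, using the facts that program/type substitution and conversion are invisible under $\trform{-}$, that expression substitution commutes with erasure, that the quantifier, modality, conversion and anti-reduction rules collapse to identities or to \monrule-as-cut, and that the remaining rules map to their $\ihol$ homonyms with side conditions discharged by Judgement Preservation. No substantive difference from the paper's approach.
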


\begin{figure*}[h]
\hspace{-0.5em}
\input{section/figures/HOPL_theory}
\caption{The Theory of $\hopl$ \coqdoc{EffHOL.html\#HOPL_prv}
}
\label{fig:theory-full}
\end{figure*}

~\newpage
\section{Realizability Translation}


\begin{lemma}[Substitution Preservation \coqdoc{HOL_to_EffHOL.html\#trans_I_subst}]\label{lem:sub}
The following syntactic equalities hold.
\begin{itemize}[leftmargin=*]
    \item $\ttrtype{\scontext}{\sprop\subst{\spred :=\sterm}} = 
    \ttrtype{\scontext}{\sprop}\subst{\typevar_{\spred}:=\ttretype{\scontext}{\sterm}{\sort}}$ 
    \vspace{0.1cm}
        \item $\ttrspec{\scontext}{\sprop\subst{\spred :=\sterm}}{x} = 
     \ttrspec{\scontext}{\sprop}{x}\subst{\typevar_{\spred}:=\ttretype{\scontext}{\sterm}{\sort}}\subst{\tpred_{\spred}:=\ttrtrm{\scontext}{\sterm}{\sort}}$
    \vspace{0.1cm}
        \item $\ttrtrm{\scontext}{\sterm '\subst{\spred :=\sterm}} == 
    \ttrtrm{\scontext}{\sterm'}{\sort}\subst{\tpred_{\spred}:=\ttrtrm{\scontext}{\sterm}{\sort}}$
    \vspace{0.1cm}
        \item $\ttretype{\scontext}{\sterm '\subst{\spred :=\sterm}}{\sort} = 
    \ttretype{\scontext}{\sterm'}{\sort}
    \subst{\typevar_{\spred}:=\ttretype{\scontext}{\sterm}{\sort}}
    \subst{\tpred_{\spred}:=\ttrtrm{\scontext}{\sterm}{\sort}}$
\end{itemize}
    
\end{lemma}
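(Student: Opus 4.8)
The plan is to establish all four equations by a single simultaneous structural induction, on $\sprop$ for the first two and on $\sterm'$ for the last two, mirroring the mutual recursion by which the translations of \Cref{fig:trans2} are defined. Two routine ingredients are needed first. On the source side, $\ihol$-substitution commutes with the binders, e.g.\ $(\comprehend{\spred'}{\sort'}{\sprop})\subst{\spred:=\sterm}=\comprehend{\spred'}{\sort'}{\sprop\subst{\spred:=\sterm}}$ and likewise for $\sforall{\spred':\sort'}{\sprop}$, where by the usual renaming convention $\spred'$ is chosen distinct from $\spred$ and not free in $\sterm$; consequently the bound type variable $\typevar_{\spred'}$ and expression variable $\tpred_{\spred'}$ introduced by the translation are fresh for $\typevar_\spred$, $\tpred_\spred$ and for the free variables of $\ttretype{\scontext}{\sterm}{\sort}$ and $\ttrtrm{\scontext}{\sterm}{\sort}$. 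On the target side, $\hopl$-substitution is capture-avoiding and homomorphic on every constructor, and — crucially — types and indices contain no free expression variables, so $\subst{\tpred_\spred:=\ttrtrm{\scontext}{\sterm}{\sort}}$ acts only on specification-valued subterms while $\subst{\typevar_\spred:=\ttretype{\scontext}{\sterm}{\sort}}$ may act on all of them.

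The base case is the source variable. For $\sterm'=\spred$ the right-hand sides of the term equations collapse under $\typevar_\spred\subst{\typevar_\spred:=\ttretype{\scontext}{\sterm}{\sort}}=\ttretype{\scontext}{\sterm}{\sort}$ and $\tpred_\spred\subst{\tpred_\spred:=\ttrtrm{\scontext}{\sterm}{\sort}}=\ttrtrm{\scontext}{\sterm}{\sort}$ to exactly $\ttretype{\scontext}{\sterm}{\sort}$ and $\ttrtrm{\scontext}{\sterm}{\sort}$, matching the left-hand sides since $\spred\subst{\spred:=\sterm}=\sterm$; for a variable distinct from $\spred$ both sides are the untouched $\typevar_{\spred'}$, resp.\ $\tpred_{\spred'}$. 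The term constructors $\comprehend{\spred'}{\sort'}{\sprop}$ and $\compbase{\sprop}$ unfold to $\typeabs{\typevar_{\spred'}}{\ttrkind{\sort'}}{\ttrtype{\scontext,\spred':\sort'}{\sprop}}$ on the type side and to the $\eforall{\typevar_{\spred'}:\ttrkind{\sort'}}{(\cdots)}$ comprehension-expressions of \Cref{fig:trans2} on the expression side; one pushes $\subst{\spred:=\sterm}$ into the body $\sprop$, applies the induction hypotheses for equations (1) and (2) to $\sprop$ in the extended context $\scontext,\spred':\sort'$, and then pulls the target substitutions back out through the $\hopl$ binders using the freshness facts above, recovering the stated right-hand sides.

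For the propositional equations, implication and universal quantification are dispatched by distributing the target substitution through $\typefun{-}{-}$, $\typecomp{-}$, $\ttypprod{-}{-}$, $\tspeckinprod{-}{-}$, $\sforall{-}{-}$ and $\after{-}{-}{-}$ and invoking the induction hypotheses on the immediate subpropositions; membership $\smem{\sterm_1}{\sterm_2}$ and base membership $\smembase{\sterm}$ unfold to $\typeapp{-}{-}$/$\eapp{-}{-}$ of the term translations, so — using that application is homomorphic for substitution — they follow directly from the term equations (3) and (4) applied to $\sterm_1,\sterm_2$ (resp.\ $\sterm$). The work here is not conceptually deep; the real obstacle is disciplined bookkeeping: keeping the four statements in exact lockstep, tracking at each node which of $\subst{\typevar_\spred:=\cdots}$ and $\subst{\tpred_\spred:=\cdots}$ is live (and which is vacuous because the subterm is a type or an index), and discharging the freshness side-conditions at every binder. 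This same pattern recurs throughout the soundness proof \coqdoc{HOL_to_EffHOL.html\#soundness}, for which this lemma is a prerequisite.
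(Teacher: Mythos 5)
Your proof is correct and takes essentially the same route as the paper's: the paper offers no prose argument for this lemma (it defers to the Rocq development), and the intended proof is exactly your simultaneous structural induction mirroring the mutually recursive translations of \Cref{fig:trans2}, with the variable base case, freshness of the bound $\typevar_{\spred'},\tpred_{\spred'}$ at binders, homomorphy of capture-avoiding substitution, and the observation that types and indices contain no free expression variables. One caveat on the third equation: since the expression translation of a comprehension term carries type annotations (e.g.\ $\ttrtype{\scontext,\spred':\sort'}{\sprop}$ and $\ttrind{\typevar_{\spred'}}{\sort'}$) in which $\typevar_{\spred}$ may occur free, your induction actually produces, and needs, the extra substitution $\subst{\typevar_{\spred}:=\ttretype{\scontext}{\sterm}{\sort}}$ on the right-hand side as well, so what you establish is the intended two-substitution form (as in the fourth equation) rather than the literally printed right-hand side, which elides it.
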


\begin{lemma}[Judgement Preservation \coqdoc{HOL_to_EffHOL.html\#trans_is_index}]\label{lem:preswf}
 The following hold:
    \begin{itemize}[leftmargin=*]
        \item 
    If \ $\eltype{\scontext}{\sterm:\sort}$, then  
    $\eltype{\ttrkind{\scontext}}{\ttretype{\scontex}{\sterm}{\sort}:\trkind{\sort}} \, \text{and} \,
    \eltrm{\trkind{\scontext} \mid
    \trind{}{\scontext}\mid \emptyset }{\ttrtrm{\scontext}{\sterm}{\sort}}{\trind{\ttretype{}{\sterm}{}}{\sort}}$
        \item If \ $\eltype{\scontext}{\sprop}$, then~
    $\eltrm{\trkind{\scontext}}{\trtype{}{\sprop}}{\star} \, \text{~and~} \,
    \eltype{\trkind{\scontext}\mid\trind{}{\scontext}
    \mid \termvar : \trtype{\scontext}{\sprop}}{\trspec{\termvar}{\sprop}}$
     \end{itemize}
\end{lemma}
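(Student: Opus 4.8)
The plan is to prove the lemma by \emph{mutual structural induction} on the two well-formedness derivations of $\ihol$ from \Cref{fig:hol-typing}, namely $\eltype{\scontext}{\sterm:\sort}$ and $\eltype{\scontext}{\sprop}$, establishing all four claimed $\effhol$ judgments at once; the four sub-claims are genuinely interdependent (e.g.\ the membership case of the specification claim invokes both the kind sub-claim and the index sub-claim for the subterms). Before the case analysis I would record the elementary observation --- immediate from the definition of the lifted sort translations --- that the context translations commute with extension, that is $\ttrkind{\scontext,\spred:\sort}=\ttrkind{\scontext},\typevar_{\spred}:\ttrkind{\sort}$ and $\ttrind{}{\scontext,\spred:\sort}=\ttrind{}{\scontext},\tpred_{\spred}:\ttrind{\typevar_{\spred}}{\sort}$. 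I will also freely use the Context Weakening lemma to reconcile the (possibly differing) contexts of the various induction hypotheses, for instance to add a fresh program variable to a type context.

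For each $\ihol$ formation rule the recipe is uniform: unfold the relevant clauses of the translation in \Cref{fig:trans2}, apply the induction hypotheses to the immediate subterms and subpropositions, and conclude with the matching $\effhol$ formation rule of \Cref{fig:typing-full}. Concretely, a variable $\spred:\sort\in\scontext$ is sent to the type variable $\typevar_{\spred}$, respectively the expression variable $\tpred_{\spred}$, which by the context-extension identity sit in $\ttrkind{\scontext}$, respectively $\ttrind{}{\scontext}$, with exactly the required kind and index; a membership $\smem{\sterm_{1}}{\sterm_{2}}$ uses the type-application rule on the type side (noting $\ttrkind{\predcon{\sort}}=\typecon{\ttrkind{\sort}}$) and, on the specification side, the membership-specification rule together with the expression type-application rule, instantiating the universally quantified refinement index of $\predcon{\sort}$ at $\ttretype{\scontext}{\sterm_{1}}{\sort}$ to obtain precisely the needed index $\refpred{\typeapp{\ttretype{\scontext}{\sterm_{2}}{\predcon{\sort}}}{\ttretype{\scontext}{\sterm_{1}}{\sort}}}{\ttrind{\ttretype{\scontext}{\sterm_{1}}{\sort}}{\sort}}$ as the substitution is performed by the rule itself; an implication $\simplies{\sprop_{1}}{\sprop_{2}}$ uses the function- and computation-type rules for the type and the program-universal, implication and modality rules for the specification; and a quantifier $\tspecindprod{\spred:\sort}{\sprop}$ uses the $\prod$- and computation-type rules for the type and the kind-universal, expression-universal and modality rules for the specification --- in the latter two cases the remaining premise being discharged by the induction hypothesis for $\sprop$ over the extended context $\scontext,\spred:\sort$, rewritten via the context-extension identities. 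The base comprehension term $\compbase{\sprop}$ and the base membership $\smembase{\sterm}$ are handled the same way, using the base-comprehension expression rule and the definitional equality $\ttrind{\type}{\STAR}=\refpredN{\type}$, which requires no conversion.

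The one step where a literal syntactic match fails, and hence the main technical point, is the \emph{comprehension term} $\comprehend{\spred}{\sort}{\sprop}$. There, the induction hypothesis for $\sprop$ over $\scontext,\spred:\sort$, fed through the type-abstraction rule and the comprehension- and type-abstraction rules for expressions, gives that $\ttrtrm{\scontext}{\comprehend{\spred}{\sort}{\sprop}}{\predcon{\sort}}$ has index $\tindprod{\typevar_{\spred}:\ttrkind{\sort}}{\refpred{\ttrtype{\scontext,\spred:\sort}{\sprop}}{\ttrind{\typevar_{\spred}}{\sort}}}$, whereas the lemma demands the index $\ttrind{\ttretype{\scontext}{\comprehend{\spred}{\sort}{\sprop}}{\predcon{\sort}}}{\predcon{\sort}}$, which after unfolding equals $\tindprod{\typevar_{0}:\ttrkind{\sort}}{\refpred{\typeapp{(\typeabs{\typevar_{\spred}}{\ttrkind{\sort}}{\ttrtype{\scontext,\spred:\sort}{\sprop}})}{\typevar_{0}}}{\ttrind{\typevar_{0}}{\sort}}}$. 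These two indices differ only by the type-level reduction $\typeapp{(\typeabs{\typevar}{\kind}{\cdot})}{\cdot}\conv_{\type}\cdot$, so after $\alpha$-renaming $\typevar_{\spred}$ to a fresh $\typevar_{0}$ one closes the gap via the conversion rule for expression indices, deriving $\eltrm{\context}{\exprs}{\indice'}$ from $\eltrm{\context}{\exprs}{\indice}$ and $\indice\conv\indice'$. This is essentially the only non-routine point; everything else is a direct rule-by-rule transcription. Since the well-formedness of exactly these translated types, expressions and specifications is presupposed by the statements of \Cref{thm:soundness}, the present lemma is a prerequisite for the soundness theorem.
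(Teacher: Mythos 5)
Your proposal is correct and follows essentially the same route as the paper's own proof: mutual induction over the $\ihol$ well-formedness derivations, unfolding the translation of \Cref{fig:trans2} and replaying the matching $\effhol$ formation rules of \Cref{fig:typing-full}, with weakening/renaming steps to align the translated contexts. The paper's appendix only works out the universal-quantifier case in detail, so your explicit treatment of the comprehension-term case --- where the derived index and the required index $\trind{\ttretype{}{\sterm}{}}{\sort}$ differ by a type-level $\beta$-redex and the index-conversion rule closes the gap --- is a correct elaboration of the same argument rather than a deviation.
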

\begin{proof}
The proof is by induction using the translation definition given in~\Cref{fig:trans2}. Here we focus on the case where $\elprop{\scontext}{\forall_{\spred : \sort} \sprop}$
, and show that 
$$
    {\elprop{\trkind{\scontext} \mid
    \termvar : \prod_{\typevar : \trkind{\sort}} \typecomp{\trtype{\scontext , \spred : \sort}{\sprop}\subst{\typevar_{\spred} := \typevar}} \mid \trind{\trkind{\scontext}}{\scontext}}{\trspec{\scontext}{\forall_{\spred : \sort} \sprop}\subst{\termvar}}}
$$

The WF rule for universal quantification is:
$$
\infer{\elprop{\scontext}{\tspecindprod{\spred : \sort}{\sprop}}}{\elprop{\scontext, \spred : \sort }{\sprop}}
$$

So by the IH we get that if $\elprop{\scontext, \spred : \sort }{\sprop}$, then 
$
\elprop{\trkind{\scontext} , \typevar_{u} : \trkind{\sort} \mid
    \tcontext, \termvar_{h} : \trtype{\scontext , \spred : \sort}{\sprop} \mid \trind{\trkind{\scontext}}{\scontext} , \tpred_{\spred} : \trind{\trkind{\sort}}{\sort} } {\trspec{\scontext , \spred : \sort}{\sprop}\subst{\termvar_{h}} }
$.
Now we get:

\begin{center}
$$
\infer
{\elprop{\trkind{\scontext} \mid
    \termvar : \tau \mid \trind{\trkind{\scontext}}{\scontext}}{\tspeckinprod{\typevar : \trkind{\sort}} \forall_{\tpred : \trind{\typevar}{\sort} } . \after{\termtypeapp{\termvar}{\typevar}  }{\termvar_{0}}{(\trspec{\scontext , \spred : \sort}{\sprop}\subst{\typevar_{\spred} := \typevar}\subst{\termvar_{0}})\subst{ \tpred_{\spred} := \tpred }}}}
{\infer
{\elprop{\trkind{\scontext} , \typevar : \trkind{\sort} \mid
    \termvar : \tau \mid \trind{\trkind{\scontext}}{\scontext}}{ \forall_{\tpred : \trind{\typevar}{\sort} } . \after{\termtypeapp{\termvar}{\typevar}  }{\termvar_{0}}{(\trspec{\scontext , \spred : \sort}{\sprop}\subst{\typevar_{\spred} := \typevar}\subst{\termvar_{0}})\subst{ \tpred_{\spred} := \tpred }}}}
{
\infer
{\elprop{\trkind{\scontext} , \typevar : \trkind{\sort} \mid
    \termvar : \tau \mid \trind{\trkind{\scontext}}{\scontext} , \tpred : \trind{\typevar}{\sort} }{ \after{\termtypeapp{\termvar}{\typevar}  }{\termvar_{0}}{(\trspec{\scontext , \spred : \sort}{\sprop}\subst{\typevar_{\spred} := \typevar}\subst{\termvar_{0}})\subst{ \tpred_{\spred} := \tpred }}}}
{
\infer
{\elprop{ \Theta_{1} }{ \termtypeapp{\termvar}{\typevar} : \typecomp{\trtype{\scontext , \spred : \sort}{\sprop}\subst{\typevar_{\spred} := \typevar}} }}
{
\mathbb{A}
}
&\qquad 
\infer
{\elprop{\Theta_{2}}{(\trspec{\scontext , \spred : \sort}{\sprop}\subst{\typevar_{\spred} := \typevar}\subst{\termvar_{0}})\subst{ \tpred_{\spred} := \tpred }}}
{\mathbb{B}}
}}}
$$

where $\mathbb{A}$ is:
$$
\infer
{\elprop{ \Theta_{1} }{ \termtypeapp{\termvar}{\typevar} : \typecomp{\trtype{\scontext , \spred : \sort}{\sprop}} }}
{\infer
{\elprop{ \trkind{\scontext} , \typevar : \trkind{\sort} \mid
    \termvar : \type }{ \termvar : \prod_{\typevar_{1}} \typecomp{\trtype{\scontext , \spred : \sort}{\sprop}\subst{\typevar_{\spred} := \typevar}} \subst{\typevar := \typevar_{1}} }}
{
\infer[\text{id} + \alpha]
{\elprop{ \trkind{\scontext} , \typevar : \trkind{\sort} \mid
    \termvar : \type }{ \termvar : \prod_{\typevar_{1}} \typecomp{\trtype{\scontext , \spred : \sort}{\sprop}\subst{\typevar_{\spred} := \typevar_{1}}} }}
{}
}
&
\infer[\text{id}]
{\eltype{\trkind{\scontext} , \typevar : \trkind{\sort}}{\typevar : \trkind{\sort}}}
{}
}  
$$

where $\mathbb{B}$ is:
$$
\infer[\text{wk}\left(\termvar\right)]
{\elprop{\trkind{\scontext} , \typevar : \trkind{\sort} \mid
    \termvar : \type , \termvar_{0} : \trtype{\scontext , \spred : \sort}{\sprop}\subst{\typevar_{\spred} := \typevar} \mid \trind{\trkind{\scontext}}{\scontext} , \tpred : \trind{\typevar}{\sort}}{(\trspec{\scontext , \spred : \sort}{\sprop}\subst{\typevar_{\spred} := \typevar}\subst{\termvar_{0}})\subst{ \tpred_{\spred} := \tpred }}}
{
\infer[\text{rename}]
{\elprop{\trkind{\scontext} , \typevar : \trkind{\sort} \mid \termvar_{0} : \trtype{\scontext , \spred : \sort}{\sprop}\subst{\typevar_{\spred} := \typevar} \mid \trind{\trkind{\scontext}}{\scontext} , \tpred : \trind{\typevar}{\sort}}{(\trspec{\scontext , \spred : \sort}{\sprop}\subst{\typevar_{\spred} := \typevar}\subst{\termvar_{0}})\subst{ \tpred_{\spred} := \tpred }}}
{
\infer[\text{hypothesis}]
{
{\elprop{\trkind{\scontext} , \typevar_{\spred} : \trkind{\sort} \mid \termvar_{h} : \trtype{\scontext , \spred : \sort}{\sprop} \mid \trind{\trkind{\scontext}}{\scontext} , \tpred_{\spred} : \trind{\typevar}{\sort}}{\trspec{\scontext , \spred : \sort}{\sprop}\subst{\termvar_{h}} }}
}
{}
}
}
$$
\end{center}

\begin{align*}
    \type &:= \prod_{\typevar : \trkind{\sort}} \typecomp{\trtype{\scontext , \spred : \sort}{\sprop}\subst{\typevar_{\spred} := \typevar}}\\
    \Theta_{1} &:= \trkind{\scontext} , \typevar : \trkind{\sort} \mid
    \termvar : \type\\
    \Theta_{2} &:= \trkind{\scontext} , \typevar : \trkind{\sort} \mid
    \termvar : \type , \termvar_{0} : \trtype{\scontext , \spred : \sort}{\sprop}\subst{\typevar_{\spred} := \typevar} \mid \trind{\trkind{\scontext}}{\scontext} , \tpred : \trind{\typevar}{\sort}
\end{align*}
\end{proof}

\subsection{Proof of Soundness (\Cref{thm:soundness} \coqdoc{HOL_to_EffHOL.html\#soundness})}
\label{app:soundness}
    \begin{proof}[Proof (outline)]

\lcnote{comment about the additional assumptions}
The proof is carried out by induction on the derivation rules of $\ihol$. 
For readability, we abuse the notation and write $\holhyp$ for the $\ihol$ context $\sprop_1,...,\sprop_n$, $\trtypecontext$ for the typing context obtained by translating these propositions in $\sprops$ as types and as in the theorem $\trspeccontext$ for their translations as specifications for the corresponding variables. 
We also focus on the triples,  denoting by $\trcontext$ the context $\ttrkind{\scontext}\mid \ttrind{}{\scontext} \mid  \trtypecontext $ when irrelevant, abusing notations 
to write $\trcontext,x:\tau$ for $\ttrkind{\scontext}\mid \ttrind{}{\scontext} \mid  \trtypecontext,x:\tau $.

\begin{description}
    \item[Case (Imp Elim)] 
Assume that the last rule of derivation in HOL was
\[\infer{\selsequent{\scontext}{\holhyp}{\sprop_{2}}}{\selsequent{\scontext}{\holhyp}{\simplies{\sprop_{1}}{\sprop_{2}}} \quad \selsequent{\scontext}{\holhyp}{\sprop_{1}}}\]
By induction, we get that:
\begin{itemize}
    \item there is a term $\term_0$  such that (IH0):
\[ \eltrm{\ttrkind{\scontext}\mid \trtypecontext\!}{\!\term_0}{ M(\ttrtype{\scontext}{\sprop_1}\!\to\! M\ttrtype{\scontext}{\sprop_2})} 
\qquad \htriple{\trcontext }{ {\sprops} }
{\termvar_0} {\term_0 }
{\ttrspec{\scontext}{\simplies{\sprop_1}{\sprop_2}}{\termvar_0}} \]
    \item  there is a term $\term_1$  such that (IH1):
\[ \eltrm{\ttrkind{\scontext}\mid \trtypecontext}{\term_1}{ M\ttrtype{\scontext}{\sprop_1} }
\qquad\qquad
\htriple{\trcontext }{ {\sprops} }
{\termvar_{1}} {\term_1 }
{\ttrspec{\scontext}{\sprop_1}{\termvar_{1}}} \]
\end{itemize}  
Recall that by definition, $\ttrspec{\scontext}{\simplies{\sprop_1}{\sprop_2}}{\termvar_0} = 
\tspectypprod{\termvar_{1} : \trtype{\scontext}{\sprop_{1}} } . \ttrspec{\scontext}{\sprop_{1}}{\termvar_{1}} \supset \after{\termapp{\termvar_0}{\termvar_{1}}}{\termvar_{2}}{\ttrspec{\scontext}{\sprop_{2}}{\termvar_{2}}}$.
To define a realizer for the conclusion, take $\term_2\eqdef \termbind{\termvar_0}{\term_0}{\termbind{\termvar_1}{\term_1}{\termapp{\termvar_0}{\termvar_1}}}$, for which it is easy to check that 
\[\eltrm{\ttrkind{\scontext}\mid \trtypecontext}{\term_2}{ M(\ttrtype{\scontext}{\sprop_2})}\]
We conclude this case by proving that $\term_2$ is indeed a valid realizer.
\newcommand{\Hzero}{\ttrspec{\scontext}{\simplies{\sprop_1}{\sprop_2}}{\termvar_0}}
\newcommand{\Hone}{\ttrspec{\scontext}{\sprop_1}{\termvar_1}}
\[
\infer[\modelimrule]{
\elsequent{\trcontext }{ \sprops}{\after{\term_2}{\termvar_2}{\ttrspec{\scontext}{\sprop{_2}}{\termvar_{2}}}}
}{
    \infer[\modelimrule]{
    \elsequent{\trcontext }{ \sprops}{\after{\term_0}{\termvar_0}{\after{\termbind{\termvar_1}{\term_1}{\termapp{\termvar_0}{\termvar_1}}}{\termvar_2}{\ttrspec{\scontext}{\sprop_2}{\termvar_{2}}}}}
    }{
        \infer[\monshortrule]{
        \elsequent{\trcontext }{ \sprops}{\after{\term_0}{\termvar_0}{\after{\term_1}{\termvar_1}{\after{{\termapp{\termvar_0}{\termvar_1}}}{\termvar_2}{\ttrspec{\scontext}{\sprop_2}{\termvar_{2}}}}}}
        }{
           \infer[\monshortrule]{
                \elsequent{\trcontext }{ \sprops,\Hzero}{\after{\term_1}{\termvar_1}{\after{{\termapp{\termvar_0}{\termvar_1}}}{\termvar_2}{\ttrspec{\scontext}{\sprop_2}{\termvar_{2}}}}}
            }{
              \Pi\qquad
            &
            \infer={\elsequent{\trcontext }{ \sprops,\Hzero}{\after{\term_1}{\termvar_1}{\Hone}}
            }{            (IH_1)}
            }
        & 
            \infer={\elsequent{\trcontext }{ \sprops}{\after{\term_0}{\termvar_0}{\Hzero}}}{
            (IH_0)
            }
        }
    }
}
\]
where $\Pi$ is the following subderivation
\[
  \infer[\impelimrule]{
                    \elsequent{\trcontext }{ \sprops,\Hzero,\Hone}{{\after{{\termapp{\termvar_0}{\termvar_1}}}{\termvar_2}{\ttrspec{\scontext}{\sprop_2}{\termvar_{2}}}}}
                }{
                    \infer[\progunielimrule]{
                        \elsequent{\trcontext }{ \sprops,\Hzero}{\ttrspec{\scontext}{\sprop_{1}}{\termvar_{1}} \supset{\after{{\termapp{\termvar_0}{\termvar_1}}}{\termvar_2}{\ttrspec{\scontext}{\sprop_2}{\termvar_{2}}}}}
                    }{
                        \infer[\idrule]{
                            \elsequent{\trcontext }{ \sprops,\Hzero}{\Hzero}
                        }{}
                    }
                    &
                    \infer[\idrule]{\elsequent{\trcontext }{ \sprops,\Hone}{\Hone}}{}
                }
\]

  \item[Case (Imp Intro)] 
Assume that the last rule of derivation in HOL was
\[\infer{\selsequent{\scontext}{\holhyp}{\simplies{\sprop_{1}}{\sprop_{2}}}}
        {\selsequent{\scontext}{\holhyp,\sprop_1}{\sprop_{2}}}\]
By induction, we obtain a term $\term_0$  such that (IH0):
\[ \eltrm{\ttrkind{\scontext}\mid \trtypecontext,\termvar_1:\trtype{}{\sprop_1}\!}{\!\term_0}{  M\ttrtype{\scontext}{\sprop_2}} 
\qquad 
\elsequent{\trcontext,\termvar_1:\trtype{}{\sprop_1} }{ {\sprops,\ttrspec{}{\sprop_1}{\termvar_0}} }
{\after{\term_0} {\termvar_2 }
{\ttrspec{\scontext}{{\sprop_2}}{\termvar_2}}} \]    
    
 Recall that by definition, $\ttrspec{\scontext}{\simplies{\sprop_1}{\sprop_2}}{\term} = 
 \tspectypprod{\termvar_{1} : \trtype{\scontext}{\sprop_{1}} }{ \ttrspec{\scontext}{\sprop_{1}}{\termvar_{1}} \supset \after{\termapp{\term}{\termvar_{1}}}{\termvar_{2}}{\ttrspec{\scontext}{\sprop_{2}}{\termvar_{2}}}}$.
We prove that $p\eqdef \termret{\termabs{x_1}{\ttrtype{\scontext}{\sprop_1}}{\term_0}}$ is the realizer we are looking for :
\[
\infer[\modelimrule]{
\elsequent{\trcontext}{ \sprops}{\after{\termret{\termabs{x_1}{\ttrtype{\scontext}{\sprop_1}}{\term_0}}}{\termvar}{\ttrspec{\scontext}{\simplies{\sprop_1}{\sprop_2}}{\termvar}}}
}{
    \infer[\proguniintrorule]{
    \elsequent{\trcontext }{ \sprops}    {\tspectypprod{\termvar_{1} : \trtype{\scontext}{\sprop_{1}} }
        { \simplies{\ttrspec{\scontext}{\sprop_{1}}{\termvar_{1}}}
            {\after{\termapp{(\termabs{x_1}{\ttrtype{\scontext}{\sprop_1}}{\term_0})}                   {\termvar_{1}}}
                    {\termvar_{2}}
                    {\ttrspec{\scontext}{\sprop_{2}}{\termvar_{2}}}}}}
    }{\infer[\impintrorule]{
        \elsequent{\trcontext,\termvar_{1} : \trtype{\scontext}{\sprop_{1}} }{ \sprops}
            { \simplies{\ttrspec{\scontext}{\sprop_{1}}{\termvar_{1}}}
            {\after{\termapp{(\termabs{x_1}{\ttrtype{\scontext}{\sprop_1}}{\term_0})}                   {\termvar_{1}}}
                    {\termvar_{2}}
                    {\ttrspec{\scontext}{\sprop_{2}}{\termvar_{2}}}}}
    }{
        \infer[\antiredtermrule]{
            \elsequent{\trcontext,\termvar_{1} : \trtype{\scontext}{\sprop_{1}} }
                    { \sprops,\ttrspec{\scontext}{\sprop_{1}}{\termvar_{1}}}
                    {\after{\termapp{(\termabs{x_1}{\ttrtype{\scontext}{\sprop_1}}{\term_0})}                   {\termvar_{1}}}
                    {\termvar_{2}}
                    {\ttrspec{\scontext}{\sprop_{2}}{\termvar_{2}}}}
        }{
        \infer={
                \elsequent{\trcontext,\termvar_{1} : \trtype{\scontext}{\sprop_{1}} }
                    { \sprops,\ttrspec{\scontext}{\sprop_{1}}{\termvar_{1}}}
                    {\after{\term_0}
                    {\termvar_{2}}
                    {\ttrspec{\scontext}{\sprop_{2}}{\termvar_{2}}}}
                }{(IH_0)}
        }
    }
    }
}
\]
 

\item[Case (Uni Elim)] 
Assume that the last rule of derivation in $\ihol$ was
        $$\infer
        {\selsequent{ \scontext}{\sprops}\sprop\subst{\spred:=\sterm}}
        {\selsequent{\scontext}{\sprops}{\tspecindprod{\spred:\sort} \sprop}
        & \eltype{\scontext}{\sterm:\sort}}$$
By induction and the preservation of wf judgment (\Cref{lem:preswf}), we get that:
\begin{itemize}
    \item there is a term $\term_0$  such that:
\[ \eltrm{\ttrkind{\scontext}\mid  \trtypecontext\!}{\term_0}{ M\ttrtype{\scontext}{\tspecindprod{\spred:\sort} \sprop} }
\qquad \qquad 
\htriple{\trcontext }{ \sprops }
{\termvar_0} {\term_0 }
{\ttrspec{\scontext}{\tspecindprod{\spred:\sort} \sprop} {\termvar_0}}.\]
    \item  the following typing sequents hold 
\[  \eltype{\trkind{\scontext}}{\ttretype{\scontext}{\sterm}{\sort}:\trkind{\sort}} 
    \qquad 
    \eltrm{\trkind{\scontext} \mid\ttrind{}{\scontext}\mid\emptyset}{\ttrtrm{\scontext}{\sterm}{\sort}}{\ttrind{\ttretype{\scontext}{\sterm}{\sort}}{\sort}}
    \]
\end{itemize} 

Recall that by definition, we have:
\begin{itemize}
\item $\ttrtype{\scontext}{\tspecindprod{\spred:\sort} \sprop}  = {\ttypprod{\typevar_\spred :\ttrkind{\sort}} {\typecomp{\ttrtype{\scontext , \spred : \sort}{\sprop}}}}$
\item $\ttrspec{\scontext}{\tspecindprod{\spred:\sort} \sprop} {x_0} = \tspeckinprod{\typevar_\spred : \trkind{\sort}} {
    \tspecindprod{\tpred_\spred}{\trind{\typevar_u}{\sort} }
        {\after{\termtypeapp{\termvar_0}{\typevar_\spred}  }
            {\termvar}
            {\ttrspec{\scontext , \spred : \sort}{\sprop}{\termvar}}}}$
\end{itemize}

%
To define a realizer for the conclusion, take $\term\eqdef \termbind{\termvar_0}{\term_0}{\termtypeapp{\termvar_0}{\ttretype{\scontext}{\sterm}{\sort}}}$, for which 
it is easy to check that 
\[\eltrm{\ttrkind{\scontext}\mid \trtypecontext}{\term_1}{ M(\ttrtype{\scontext}{\sprop\subst{\spred:=\sterm}})}\]
since by~\Cref{lem:sub}, 
$\ttrtype{\scontext}{\sprop\subst{\spred:=\sterm}}:=\ttrtype{\scontext}{\sprop}\subst{\typevar_{\spred}:=\ttretype{\scontext}{\sterm}{\sort}}$.

\renewcommand{\Hzero}{\ttrspec{\scontext}{\tspecindprod{\spred:\sort} \sprop} {\termvar_0} }

Recall that, by~\Cref{lem:sub},  $\ttrspec{\scontext}{\sprop\subst{\spred :=\sterm}}{x} = \ttrspec{\scontext}{\sprop}{x}\subst{\typevar_{\spred}:=\ttretype{\scontext}{\sterm}{\sort}}\subst{\tpred_{\spred}:=\ttrtrm{\scontext}{\sterm}{\sort}}$. 
We conclude this case by proving that $\term$ is indeed a valid realizer.
\[
\infer[\modelimrule]{
\elsequent{\trcontext }{ \sprops}
{\after
    {\termbind{x_0}{\term_0}{\termtypeapp{x_0}{\ttretype{\scontext}{\sterm}{\sort}}}}
    {\termvar}
    {\ttrspec{\scontext}{\sprop}{\termvar}
        \subst{\typevar_{\spred}:=\ttretype{\scontext}{\sterm}{\sort}}
        \subst{\tpred_{\spred}:=\ttrtrm{\scontext}{\sterm}{\sort}}}}
}{
    \infer[\monshortrule]{
    \elsequent{\trcontext }{\sprops}{
        \after{\term}{\termvar_0}
        {\after{\termtypeapp{\termvar_0}{\ttretype{\scontext}{\sterm}{\sort}} }
            {\termvar}
            {\ttrspec{\scontext}{\sprop}{\termvar}
                \subst{\typevar_{\spred}:=\ttretype{\scontext}{\sterm}{\sort}}
                \subst{\tpred_{\spred}:=\ttrtrm{\scontext}{\sterm}{\sort}}}
        }
    }
    }{
       \infer[\expshortunielimrule]{
          \elsequent{\trcontext }{\sprops,\Hzero}{
            {\after{\termtypeapp{\termvar_0}{\ttretype{\scontext}{\sterm}{\sort}} }
            {\termvar}
            {\ttrspec{\scontext}{\sprop}{\termvar}
                \subst{\typevar_{\spred}:=\ttretype{\scontext}{\sterm}{\sort}}
                \subst{\tpred_{\spred}:=\ttrtrm{\scontext}{\sterm}{\sort}}}
            }
            }
        }{
            \infer[\typeunielimrule]{
              \elsequent{\trcontext }{\sprops,\Hzero}{
                {\tspecindprod{\tpred_\spred}{\trind{\typevar_u}{\sort} }{\after{\termtypeapp{\termvar_0}{\ttretype{\scontext}{\sterm}{\sort}} }
                    {\termvar}
                    {\ttrspec{\scontext}{\sprop}{\termvar}
                        \subst{\typevar_{\spred}:=\ttretype{\scontext}{\sterm}{\sort}}}
                }
                }
                }
            }{
                \infer[\idrule]{
                    \elsequent{\trcontext }{\sprops,\Hzero}{
                        \tspeckinprod{\typevar_\spred : \trkind{\sort}} {
                            \tspecindprod{\tpred_\spred}{\trind{\typevar_u}{\sort} }
                                {\after{\termtypeapp{\termvar_0}{\typevar_\spred}  }
                                    {\termvar}
                                    {\ttrspec{\scontext , \spred : \sort}{\sprop}{\termvar}}}}
                    }
                }{}
        }
        }
        & 
        \infer={\elsequent{\trcontext }{ \sprops}{\after{\term_0}{\termvar_0}{\ttrspec{\scontext}{\tspecindprod{\spred:\sort} \sprop} {\termvar_0} }}}{
            (IH_0)
            }
        }
    }
\]



\item[Case (Uni Intro)]
Assume that the last rule of derivation in HOL was
        $$\infer[(\text{Uni Intro})]
        {\selsequent{\scontext}{\sprops}{\tspecindprod{\spred:\sort} \sprop}}
        {\selsequent{ \scontext, \spred : \sort}{\sprops}{\sprop}}$$
By induction, there is a term $\term_0$  such that:
$$ \eltrm{\ttrkind{\scontext}, \typevar_\spred:\ttrkind{\sort} \mid \ttrtype{}{\sprops}}{\term_0}{ M\ttrtype{\scontext}{ \sprop} }$$
and 
$$
\htriple{\ttrkind{\scontext}, \typevar_\spred:\ttrkind{\sort} \mid \ttrind{}{\scontext}, \tpred_\spred:\ttrind{\typevar_\spred }{\sort}\mid  \trtypecontext }{ \sprops }
{\termvar} {\term_0 }
{\ttrspec{\scontext}{ \sprop} {x}}.$$

Recall that by definition, $\ttrspec{\scontext}{\tspecindprod{\spred:\sort} \sprop} {x} := \tspeckinprod{\typevar_\spred : \trkind{\sort}} \forall_{\tpred_\spred : \trind{\typevar}{\sort} } . \after{\termtypeapp{\termvar}{\typevar_\spred}  }{\termvar_{0}}{\ttrspec{\scontext , \spred : \sort}{\sprop}{\termvar_{0}}}$

We show that 
\[ \htriple{\Gamma}{ \sprops }
{\termvar} {\term }
{\ttrspec{\scontext}{\tspecindprod{\spred:\sort} \sprop} {x}} \]
for $p\eqdef\termret{ \termtypeabs{X_u}{\ttrkind{\sort}}{p_0}}$.

\[
\infer[\modintrorule]{
\elsequent{\trcontext}{\trspeccontext}{
\after{\termret{ \termtypeabs{X_u}{\ttrkind{\sort}}{p_0}}}{\termvar}{(\tspeckinprod{\typevar_\spred : \trkind{\sort}} \forall_{\tpred_\spred : \trind{\typevar}{\sort} } . \after{\termtypeapp{\termvar}{\typevar_\spred}  }{\termvar_{0}}{\ttrspec{\scontext , \spred : \sort}{\sprop}{\termvar_{0}}})}
}
}{
    \infer[\typeuniintrorule]{
        \elsequent{\trcontext}{\trspeccontext}{
        {\tspeckinprod{\typevar_\spred : \trkind{\sort}} \forall_{\tpred_\spred : \trind{\typevar}{\sort} } . \after{\termtypeapp{(\termtypeabs{X_u}{\ttrkind{\sort}}{p_0})}{\typevar_\spred}  }{\termvar_{0}}{\ttrspec{\scontext , \spred : \sort}{\sprop}{\termvar_{0}}}}
        }
    }{
        \infer[\expuniintrorule]{
            \elsequent{\ttrkind{\scontext}, \typevar_\spred:\ttrkind{\sort} \mid \ttrind{}{\scontext}\mid  \trtypecontext}{\trspeccontext}{
            { \forall_{\tpred_\spred : \trind{\typevar}{\sort} } . \after{\termtypeapp{(\termtypeabs{X_u}{\ttrkind{\sort}}{p_0})}{\typevar_\spred}  }{\termvar_{0}}{\ttrspec{\scontext , \spred : \sort}{\sprop}{\termvar_{0}}}}
            }
        }{
            \infer[\antiredtermrule]{
                    \elsequent{\ttrkind{\scontext}, \typevar_\spred:\ttrkind{\sort} \mid \ttrind{}{\scontext}, \tpred_\spred:\ttrind{\typevar_\spred }{\sort}\mid  \trtypecontext}{\trspeccontext}{
                    { \after{\termtypeapp{(\termtypeabs{X_u}{\ttrkind{\sort}}{p_0})}{\typevar_\spred}  }{\termvar_{0}}{\ttrspec{\scontext , \spred : \sort}{\sprop}{\termvar_{0}}}}
                    }
                }{
                   \infer={
                        \elsequent{\ttrkind{\scontext}, \typevar_\spred:\ttrkind{\sort} \mid \ttrind{}{\scontext}, \tpred_\spred:\ttrind{\typevar_\spred }{\sort}\mid  \trtypecontext}{\trspeccontext}{
                        { \after{p_0}{\termvar_{0}}{\ttrspec{\scontext , \spred : \sort}{\sprop}{\termvar_{0}}}}
                        }
                    }{
                        (IH_0)
                    }
                }
        }
    }     
}
\]

\item[Case (Mem Intro)] 
Assume that the last rule of derivation in HOL was
        $$\infer[(\text{Mem Intro})]
        {\selsequent{\scontext}{\holhyp}{ \smem{\sterm}{\comprehend{\spred}{\sort}{\sprop}}}}
        {\selsequent{\scontext}{\holhyp}{\sprop\subst{\spred := \sterm}}}$$
By induction, there is a term $\term_0$  such that:
\[
\eltrm{\trtypecontext}{\term_0}{ M\ttrtype{\scontext}{ \sprop\subst{\spred := \sterm}} }
\qquad\text{and}\qquad
\htriple{\trcontext}{ \trspeccontext }{\termvar_0} {\term_0}{\ttrspec{\scontext}{ \sprop\subst{\spred := \sterm}} {x}}.\eqno(IH)\]

Observe first that:
\begin{align*}
\ttrtype{\scontext}{\smem{t}{\comprehend{\spred}{\sort}{\sprop}}}
&~~=~~ \typeapp{\left(\typeabs{\typevar_\spred}{\ttrkind{\sort}}{\ttrtype{\scontext , \spred : \sort}{\sprop}}\right)}{\ttrtype{}{t}}\\
&~~=_\beta~~ \ttrtype{\scontext , \spred : \sort}{\sprop}\subst{\typevar_\spred:=\ttrtype{}{t}}
\qquad~~\overset{\text{Lm. }\ref{lem:sub}}{=}~~\ttrtype{\scontext}{ \sprop\subst{\spred := \sterm}}
\end{align*}
Besides, we have for any $\termvar$ that:
\begin{align*}
\ttrspec{}{\smem{\sterm}{\comprehend{\spred}{\sort}{\sprop}}}{\termvar_0} ~~
&=~~ \tmem{ \termvar_0 }{ \eapp{\ttrtrm{\scontext}{\comprehend{\spred}{\sort}{\sprop}}{\predcon{\sort}}}{\ttretype{\scontext}{\sterm_{1}}{\sort}} }{ \ttrtrm{\scontext}{\sterm_{1}}{\sort} }\\
&=~~ \tmem{ \termvar_0 }
          {\eapp{\eforall{\typevar_{\spred} : \ttrkind{\sort}}{\tcomp{\termvar}{\ttrtype{\scontext , \spred : \sort}{\sprop}}{\tpred_\spred}{\ttrind{\typevar_\spred}{\sort}}{ \ttrspec{\scontext , \spred : \sort}{\sprop}\termvar}}}{\ttretype{\scontext}{\sterm_{1}}{\sort}} }
          { \ttrtrm{\scontext}{\sterm_{1}}{\sort} }\\
&=_\beta~~\tmem{ \termvar_0 }
          {\tcomp{\termvar}{\ttrtype{\scontext , \spred : \sort}{\sprop}}{\tpred_\spred}{\ttrind{\typevar_\spred}{\sort}}{ \ttrspec{\scontext , \spred : \sort}{\sprop}{\termvar}}\subst{\typevar_{\spred}:={\ttretype{\scontext}{\sterm}{\sort}} }}
          { \ttrtrm{\scontext}{\sterm}{\sort} }\\
\end{align*}
By~\Cref{lem:sub}, $ \ttrspec{\scontext}{\sprop\subst{\spred := \sterm}} {x} :=  \ttrspec{\scontext}{\sprop}{x}\subst{\tpred_{\spred}:=\ttrtrm{\scontext}{\sterm}{\sort}}$.
We prove that the same program $\term$ (which has then the adequate type) also serves as a realizer for the conclusion.
\[
\infer[\monshortrule]{
\elsequent{\trcontext}{\sprops }{
    {\after{\term_0} {\termvar_0}{\tmem{ \termvar }
          {\tcomp{\termvar}{\ttrtype{\scontext , \spred : \sort}{\sprop}}{\tpred_\spred}{\ttrind{\typevar_\spred}{\sort}}{ \ttrspec{\scontext , \spred : \sort}{\sprop}\termvar}\subst{\typevar_{\spred}:={\ttretype{\scontext}{\sterm_{1}}{\sort}} }}
          { \ttrtrm{\scontext}{\sterm_{1}}{\sort} }}}}
}{
{\mathbf{\Pi}}\qquad\qquad
&
\infer={
\elsequent{\trcontext}{\sprops }{
    {\after{\term_0} {\termvar_0}{\ttrspec{\scontext}{ \sprop\subst{\spred := \sterm}} {\termvar_0}}}
    }
    }{IH}
}
\]
with $\mathbf{\Pi}$ is the following subderivation
\[
\infer[\memintrorule]{
    \elsequent{\trcontext,\termvar_0:\ttrtype{\scontext}{ \sprop\subst{\spred := \sterm}}}
        {\sprops,\ttrspec{\scontext}{ \sprop\subst{\spred := \sterm}} {\termvar_0} }
        {\tmem{ \termvar_0 }
          {\tcomp{\termvar}{\ttrtype{\scontext , \spred : \sort}{\sprop}}{\tpred_\spred}{\ttrind{\typevar_\spred}{\sort}}{ \ttrspec{\scontext , \spred : \sort}{\sprop}\termvar}\subst{\typevar_{\spred}:={\ttretype{\scontext}{\sterm}{\sort}} }}
          { \ttrtrm{\scontext}{\sterm}{\sort} } 
        }
    }{
    \infer[\idrule]{
    \elsequent{\trcontext,\termvar_0:\ttrtype{\scontext}{ \sprop\subst{\spred := \sterm}}}
        {\sprops,\ttrspec{\scontext}{ \sprop\subst{\spred := \sterm}} {\termvar_0} }
        {\ttrspec{\scontext , \spred : \sort}{\sprop}{\termvar} 
        \subst{\termvar:=\termvar_0}
        \subst{\tpred_{\spred}:=\ttrtrm{\scontext}{\sterm}{\sort}}
        \subst{\typevar_{\spred}:={\ttretype{\scontext}{\sterm}{\sort}} }}
    }{}
}
\]
using \Cref{lem:sub} to conclude that 
\[\ttrspec{\scontext , \spred : \sort}{\sprop}{\termvar} 
        \subst{\termvar:=\termvar_0}
        \subst{\tpred_{\spred}:=\ttrtrm{\scontext}{\sterm}{\sort}}
        \subst{\typevar_{\spred}:={\ttretype{\scontext}{\sterm}{\sort}} } = \ttrspec{\scontext}{ \sprop\subst{\spred := \sterm}} {\termvar_0} \]



\item[Case (Mem-Elim)] 
This case, which corresponds to this rule in HOL
       $$\infer[(\text{Mem Elim})]
        {\selsequent{\scontext}{\holhyp}{\sprop\subst{\spred := \sterm}}}
        {\selsequent{\scontext}{\holhyp}{ \smem{\sterm}{\comprehend{\spred}{\sort}{\sprop}}}}
        $$
is analogous to previous one (since, again, the realizer is the same for the premise and the conclusion), by simply using the \memelimrule~ rule instead in $\effhol$ in the derivation.

\item[Cases (Mem$_0$-Intro) and (Mem$_0$-Elim)] Those are analogous  to the cases for (Mem-Intro) and (Mem-Elim).
\qedhere
\end{description}
\end{proof}



~\newpage
 \section{Krivine realizability in the continuation monad}

 We give here more details and missing proof from \Cref{sec:examples}.
 First, we pick a call-by-name evaluation strategy for our programs,
 which is defined by means of adequate evaluation contexts $\progcontext$
 to obtain the following reduction rules:\\
 \begin{mathpar}
\termtypeapp{(\termtypeabs{\typevar}{\kind}{\term})}{\type}\betared^0 \term\subst{\typevar:=\type}

\termapp{(\termabs{\termvar}{\type}{\term})}{\term'}\betared^0 \term\subst{\termvar:=\term'}

\infer{\progcontext[\term_1]\betared^0 \progcontext[\term_2]}{\term_1 \betared^0 \term_2}
\end{mathpar}
where $\progcontext ::= \hole
\mid \termtypeapp{\progcontext}{\type}
\mid \termapp \progcontext  {\term}
\mid \termabs \termvar \type \progcontext$, and we let $\betared$ be the reflexive transitive closure of $\betared^0$.
This indeed defines an evaluation strategy\emnote{note on how this should be prioritized
to actually get an ev. strat.} that obviously contains the axioms for $\betared$ (see~\Cref{sec:operational}).
Recall that constructs of the continuation monads are defined by:
\[
\termret{p}       \eqdef  \termabs{k}{\neg \type}{\termapp{k}{p}}
\qquad\qquad\qquad
\termbind{\termvar}{\term_1}{\term_2} \eqdef  \termabs{k}{\neg \type_2}{\termapp{\term_1}{(\termabs{x}{\type_1}{\termapp{\term_2}{k}})}}
 \]
In particular, the reduction rule for $\termbind{\termvar}{\termret{\term'}}\term$ is,
as expected, a consequence of the definition of these programs
in the continuation monad, and these definitions satisfy the expected typing rules:
 \begin{equation*}
 \scalebox{0.9}{
    \infer{\eltrm{\kcontext \mid \tcontext}{\termret{\term}}{\typecomp{\type}}}{\eltrm{\kcontext \mid \tcontext}{\term}{\type}}
   \qquad\qquad\qquad
    \infer{\eltrm{\kcontext \mid \tcontext}{\termbind{\termvar}{\term_{1}}{\term_{2}}}{\typecomp{\type_{2}}}}{
    \eltrm{\kcontext \mid \tcontext}{\term_{1}}\typecomp{\type_{1}}\quad
    \eltrm{\kcontext \mid \tcontext, \termvar : \type_{1}}{\term_{2}}{\typecomp{\type_{2}}}
    }
    }
\end{equation*}

\begin{remark}
Observe that even though we picked a call-by-name evaluation strategy, what we define here is essentially a call-by-value presentation of Krivine realizability, following~\cite{munch09,Lepigre16}. Indeed, to further draw the comparison between the pure instance $\hoplk$ and a CPS translation, the call-by-name evaluation that we pick here only corresponds to the reduction in the target of the CPS translation. In turn, since we use Moggi's monadic encoding of continuations (in what would be the source), the underlying CPS itself rather amounts to a call-by-value source system, and as such we will obtain a call-by-value variant of Krivine realizability. \end{remark}
\let\oldtypecomp\typecomp
\newcommand{\dneg}{{\neg\neg}}
\renewcommand{\typecomp}[1]{{\dneg #1}}

\subsection{Proof of \Cref{lm:hopl_classical}}
\label{app:hopl_classical}
Recall that we define the modality as follows:

\[e^\bot \eqdef \ecompbase{x}{\neg \type}{
    \tspectypprod{x':\type}{
        \timplies
            {(\tmembase{x'}{e})}
            {(\tmembase{\termapp x {x'}}{\pole})}
    }
}
\]
\[
\after{\term}{\termvar}{\tprop}\eqdef \tmembase{\term}{\ecompbase{\termvar}{\type}{\tprop}^{\bot\bot}}
\]

We prove that the following rules are derivable:
\begin{mathpar}
    \infer[\modintrorule]
        {\elsequent{\context}{\tprops}{ \after{\termret{\term}}{\termvar}{\tprop} } }
        {\elsequent{\context}{\tprops}{\tprop\subst{\termvar := \term}}}

    \infer[\modelimrule]
        {\elsequent{\context}{\tprops}{ \after{\left(\termbind{\termvar_{1}}{\term_{1}}{\term_{2}}\right)}{\termvar_{2}}{\tprop} }}
        {\elsequent{\context}{\tprops}{ \after{\term_{1}}{\termvar_{1}}{ \after{\term_{2}}{\termvar_{2}}{\tprop} } }}
        
    \infer[\monrule]{\elsequent{\context}{\tprops }{ \after{\term}{\termvar}{\tprop_{2}} } }{\elsequent{\kcontext \mid \icontext \mid \tcontext, \termvar:\type }{\tprops , \tprop_{1}}{\tprop_{2}} 
         \qquad 
         \elsequent{\context}{\tprops}{ \after{\term}{\termvar}{\tprop_{1}}}}
\end{mathpar}

\begin{proof}~
\begin{itemize}
\item The proof for the \modintrorule~rule is analogous to the usual proof of $A\subset A^{\bot\bot}$
in the context of Krivine realizability. 
Omitting the contexts and unfolding the definition of $\after{\termret{\term}}{\termvar}{\tprop}$, assuming $\tprop\subst{\termvar := \term}$ holds, we want to prove that 
 \[\tmembase{\termabs{k}{\neg\type}{\termapp k p}}{
 \ecompbase{z}{\typecomp\type}{
        \tspectypprod{k:\neg\type}{
        \timplies
            {\tmembase{k}{\ecompbase{\termvar}{\type}{\tprop}^{\bot}}}
            {\tmembase{\termapp z k}{\pole}}
        }
    }
 }
 \]
 Using \memintrorule, it suffices to show
 \[{
 {
        \tspectypprod{k:\neg\type}{
        \timplies
            {\tmembase{k}{\ecompbase{\termvar}{\type}{\tprop}^{\bot}}}
            {\tmembase{\termapp {(\termabs{k}{\neg\type}{\termapp k p})} k}{\pole}}
        }
    }
 }
 \]
 Using \antiredtermrule, we can use anti-reduction and rather prove that
 \[{
 {
        \tspectypprod{k:\neg\type}{
        \timplies
            {\tmembase{k}{\ecompbase{\termvar}{\type}{\tprop}^{\bot}}}
            {\tmembase{{\termapp k p}}{\pole}}
        }
    }
 }
 \]
 Now observe that for any $k:\neg\type  $, $\tmembase{k}{\ecompbase{\termvar}{\type}{\tprop}^{\bot}}$ entails by definition that for any $ \tmembase{q}{\ecompbase{\termvar}{\type}{\tprop}}$, $\tmembase{\termapp k q}{\pole}$.
 So in particular, it holds for $p$ since $\tprop\subst{\termvar := \term}$, which concludes this case.

 \item For the \modelimrule~rule, it follows the same structure relying on anti-reduction. 
Indeed, unfolding the definitions, assuming $\after{\term_1}{\termvar_1}{\after{\term_2}{\termvar_2}{\tprop}}$ holds, we want to prove that 
 \[\tmembase{\termabs{k_2}{\neg\type_2}{\termapp {\term_1} {(\termabs{x_1}{\type_1}{\termapp{p_2}{k_2}})}}}{
 \ecompbase{z}{\typecomp{\type_2}}{
        \tspectypprod{k_2:\neg\type_2}{
        \timplies
            {\tmembase{k_2}{\ecompbase{\termvar_2}{\type_2}{\tprop}^{\bot}}}
            {\tmembase{\termapp z {k_2}}{\pole}}
        }
    }
 }
 \]
 Again using \memintrorule~and \antiredtermrule, it suffices to show that:
 \[{
 {
        \tspectypprod{k_2:\neg\type_2}{
        \timplies
            {\tmembase{k_2}{\ecompbase{\termvar_2}{\type_2}{\tprop}^{\bot}}}
            {\tmembase{{\termapp {\term_1} {(\termabs{x_1}{\type_1}{\termapp{p_2}{k_2}})}}}{\pole}}
        }
    }
 }
 \]
 Now, unfolding the definition of $\after{\term_1}{\termvar_1}{\after{\term_2}{\termvar_2}{\tprop}}$ and using~\memelimrule, we obtain
 \[
 \tspectypprod{k_1:\neg\type_1}{
        \timplies
            {\tmembase{k_1}{\ecompbase{\termvar_1}{\type_1}{\after{\term_2}{\termvar_2}{\tprop}}^{\bot}}}
            {\tmembase{{\termapp {\term_1} {k_1}}}{\pole}}
        }
 \]
 Therefore, to conclude it is enough to show that 
 if $k_2:\neg\type_2$ such that $\tmembase{k_2}{\ecompbase{\termvar_2}{\type_2}{\tprop}^{\bot}}$
 then $\tmembase{\termabs{x_1}{\type_1}{\termapp{p_2}{k_2}}}{
 {\ecompbase{\termvar_1}{\type_1}{\after{\term_2}{\termvar_2}{\tprop}}^{\bot}}
 }$. 
 Unfolding the definitions again, this amounts to proving that for any $q_1:\type_1$
 is such that $\tmembase{q_1}{\ecompbase{\termvar_1}{\type_1}{\after{\term_2}{\termvar_2}{\tprop}}}$ (\textit{i.e.} such that
 $\after{\term_2\subst{\termvar_1:=q_1}}{\termvar_2}{\tprop}$ holds\footnote{Observe here that $x_1$ was bound in $p_2$ and cannot occur in $\tprop$.}), then 
 \[\tmembase{\termapp{(\termabs{x_1}{\type_1}{\termapp{p_2}{k_2}})}{q_1}}{\pole}\]
 By anti-reduction, it suffices to show that $\tmembase{\termapp{(p_2\subst{\termvar_1:=q_1})}{k_2}}{\pole}$. This follows from directly from the hypotheses on $q_1$ and $k_2$.
\item The derivability of the \monrule\ rule, which in terms of sets of realizers, expresses  that $A\subseteq B$ entails $A^{\bot\bot}\subseteq B^{\bot\bot}$, is a direct consequence of the following rule that rather expresses that $A\subseteq B$ entails $B^{\bot}\subseteq A^{\bot}$:
\[
 \infer[]{\elsequent{\context}{\tprops }{\tmembase{p}{\ecompbase{x}{\type}{\tprop_1}^\bot}}}{\elsequent{\kcontext \mid \icontext \mid \tcontext, \termvar:\type }{\tprops , \tprop_{1}}{\tprop_{2}} 
 &
 \elsequent{\context}{\tprops }{\tmembase{p}{\ecompbase{x}{\type}{\tprop_2}^\bot}}
}
\]
To prove that the latter is derivable, unfolding the definition of the orthogonality relation, it suffices to prove from the hypotheses that for any $k:\tau$ such that $\tprop_1\subst{x:=k}$ holds, 
$\tmembase{\termapp p k}{\pole}$. But if $\tprop_1\subst{x:=k}$ holds, then so does $\tprop_2\subst{x:=k}$ using the first hypothesis, while using the assumption on $p$, we indeed get that $\tmembase{\termapp p k}{\pole}$.\qedhere
\end{itemize}
\end{proof}

\subsection{Peirce's law}
 Through the translations, we get:
\[
\begin{array}{rcl}
\ttrtype{}{\mathrm{Peirce}} & = &\typeabs{X}{\star}{\typecomp{\typeabs{Y}{\star}{\typecomp{
    \typefun{(\typefun{(\typefun{X}{\typecomp{Y}})}{\typecomp X})}{\typecomp{X}}
}}}} \\
\ttrspec{}{\mathrm{Peirce}}{p} & = & 
\tspeckinprod{\typevar_a :\star} \forall_{\tpred_a :\refpredN{\typevar_a}} .
    \after{\termtypeapp{\term}{\typevar_a}}{\termvar_{a}}{
        \Big(\big(\tspeckinprod{X_b :\star} \forall_{\tpred_b :\refpredN{\typevar_b}}
            \after{\termtypeapp{\termvar_a}{\typevar_b}}{\termvar_{b}}{
            }    
        }\\
        &&\qquad \tspectypprod{\termvar : \typefun{(\typefun{X_a}{\typecomp{X_b}})}{\typecomp {X_a}} }{ \ttrspec{\scontext}{\simplies{(\simplies{a}{b})}{a}}{\termvar} \supset \after{\termapp{\termvar_b}{\termvar}}{\termvar'}{
            \ttrspec{\scontext}{a}{\termvar'}    
        }}\big)\Big)\\ 
\ttrspec{\scontext}{\simplies{(\simplies{a}{b})}{a}}{\termvar}&=&
\tspectypprod{z : (\typefun{X_a}{\typecomp{X_b}})}{ \ttrspec{\scontext}{(\simplies{a}{b})}{\termvar} \supset \after{\termapp{x}{z}}{z'}{
           \ttrspec{\scontext}{a}{z'}}}  
           \\
\ttrspec{\scontext}{\simplies{a}{b}}{\termvar}&=&
\tspectypprod{z : X_a}{\ttrspec{\scontext}{a}{\termvar} \supset \after{\termapp{x}{z}}{z'}{\ttrspec{\scontext}{b}{z'}}}  \\
\ttrspec{\scontext}{a}{\termvar} &=& \tmembase{x}{y_a}{}{}\\
\end{array}
\]
Recall that we define:
    \[ \begin{array}{rcl}
      \callcc &\eqdef &
        \termtypeabs{X}{\star}{\termret{
            \termtypeabs{Y}{\star}{\termret{
                \callcc^{X,Y}
            }}
        }}\\
    \callcc^{\tau,\tau'} &\eqdef &

               \termabs{z}{\typefun{(\typefun{\type}{\typecomp{\type'}})}{\typecomp \type}}
                    {\termabs{k}{\neg \type}{\termapp{\termapp{z}{\throw{\type,\type'}{k}}}k}}\\
    \throw{\type,\type'}{k}& \eqdef& \termabs{x}{\type}{\termabs{k'}{\neg \type'}{\termapp{k}{x}}}
    \end{array}
  \]

We give here the proof that $\callcc$ defines a valid realizer for Peirce's law.\\
\noindent\textbf{Theorem \ref{thm:callcc}}. We have
\begin{enumerate}
    \item $\eltrm{}{\callcc}{\ttrtype{}{\mathrm{Peirce}}}$
    \item $\elsequent{}{\top}{\after{\termret{\callcc}}x{\ttrspec{}{\mathrm{Peirce}}x}}$
\end{enumerate}

\begin{proof}
\begin{enumerate}
\renewcommand{\typecomp}[1]{\neg\neg{#1}}
    \item
    For any kind context $\kcontext$ and any well-kinded types $\type,\type'$ in this context, we can derive (using implicit weakenings to ease the derivation): 
\begin{small}    
\[
\infer{\eltrm{\kcontext\mid \,}{\lambda z.\lambda k. z\, (\texttt{throw}_k^{\type,\type'})\,k }{\typefun{(\typefun{(\typefun{\type}{\typecomp{{\type'}}})}{\typecomp \type})}{\typecomp{\type}}}}{
    \infer{\eltrm{\kcontext\mid z:\typefun{(\typefun{\type}{\typecomp{{\type'}}})}{\typecomp \type},k:\neg \type }{\termapp{\termapp{z}{\texttt{throw}_k^{\type,\type'}}}{k} }{\bot}}{
        \infer{\eltrm{\kcontext\mid z:\typefun{(\typefun{\type}{\typecomp{{\type'}}})}{\typecomp \type},k:\neg \type }{\termapp{\termapp{z}{\texttt{throw}_k^{\type,\type'}}}}{\typecomp \type}}{
            \infer{\eltrm{\kcontext\mid z:\typefun{(\typefun{\type}{\typecomp{{\type'}}})}{\typecomp \type}}{z}{{(\typefun{\type}{\typecomp{{\type'}}}){\typecomp \type} }}}{}
            &\hspace{-1.5cm}
            \infer{\eltrm{\kcontext\mid k:\neg \type }{\texttt{throw}_k^{\type,\type'}}{\type\to\typecomp {\type'}}}{{
        \infer{\eltrm{\kcontext\mid k:\neg {\type},x:\type,k':\type' }{\termapp k x}{\bot}}{    
            \infer{\eltrm{\kcontext\mid k:\neg {\type},x:\type,k':\type' }{k}{\neg \type}}{}
            &
            \infer{\eltrm{\kcontext\mid k:\neg {\type},x:\type,k':\type' }{x}{\type}}{}
        } 
    }}
        }
        &\hspace{-1.5cm}
        \infer{\eltrm{\kcontext\mid k:\neg \type }{k}{\neg \type}}{}
    }
}   
\]
\end{small}    
It is then straightforward to deduce from this that $\vdash \callcc:\ttrtype{}{\mathrm{Peirce}}$.
\item The proof follows the line of the usual proof of adequacy for \callcc~\cite{Krivine09}. For any types $\type_a,\type_b$ and expressions $\exprs_a:\refpredN{\type_a}$, $\exprs_b:\refpredN{\type_b}$ to inhabit the corresponding quantifications, we first prove that if $k:\neg\type_a$ is such that $\tmembase{k}{\exprs_a^\bot}$, 
then $\tmembase{\throw{\type_a,\type_b}{k}}{\ecompbase{x}{\typefun{\type_a}{\typecomp{\type_b}}}{\ttrspec{\scontext}{\simplies{a}{b}}{\termvar}}}$.
This essentially is a consequence of the closure under reduction. Indeed,  for any $p_a$ such that $\tmembase{p_a}{\exprs_a}$ and any $k'$ such that $\tmembase{k}{\exprs_a^\bot}$, we have \[ \throw{\type_a,\type_b}{k}\,p_a\,k' \betared k\,p_a\]
and the latter is in  $\pole$. 
Then the proof for {\callcc} is straightforward, by again using anti-reduction and the previous fact for $\throw{\tau_a,\tau_b}{k}$ to conclude.
\emnote{TODO : detail a bit more, being precise on how expressions substitute $y_a$ etc.}\qedhere

\end{enumerate}
\end{proof}
\let\typecomp\oldtypecomp

~\newpage
\section{The induced Evidenced Frame}
\label{app:ef}
\label{app:ef}

\begin{definition}[Evidenced Frame]\label{evidenced-frame}
An \emph{evidenced frame} is a triple $( \Phi, E,  \mbox{$\cdot \xle{\cdot} \cdot$} )$, where $\Phi$ is a set of propositions,~$E$ is a collection of evidence, and~\mbox{$\phi_1 \xle{e} \phi_2$} is a evidence relation on $\Phi \times E \times \Phi$,  with:
\begin{description}[leftmargin=!,labelindent=2mm,font={\normalfont\itshape}]
\item[Reflexivity.] There is evidence~$\eid \in E$ such that 
$\forall \phi.\; \phi \xle{\eid} \phi$.
\item[Transitivity.] There is~${\ecompos{}{}} \colon E \times E \to E$ s.t. for all
$\phi_1, \phi_2, \phi_3, e, e'$, $\mbox{$\phi_1 \xle{e} \phi_2 \mathrel{\wedge} \phi_2 \xle{e'} \phi_3 \implies \phi_1 \xle{\ecompos{e}{e'}} \phi_3$}$.
\item[Top.] A proposition~$\!\top \!\!\in\!\! \Phi$ such that there exists evidence~$\etrue \!\!\in\! E$: $\forall \phi.\; \phi \xle{\etrue} \top$.
\item[Conjunction.] An operator~\mbox{${\wedge}: \Phi \!\times\! \Phi \!\to\! \Phi$} such that there exists an operator~\mbox{$\epair{\cdot}{\!\cdot} \!\in\! E \!\times\! E \!\to\! E$} together with evidence~$\efst, \esnd \in E$ satisfying:
 $\forall \phi, \phi_1, \phi_2, e_1, e_2.\; \mbox{$\phi \xle{e_1} \phi_1 \mathrel{\wedge} \phi \xle{e_2} \phi_2 \!\!\implies\!\! \phi \xle{\epair{e_1}{e_2}} \phi_1 \wedge \phi_2$}$,  $\forall \phi_1, \phi_2.\; \phi_1 \wedge \phi_2 \xle{\efst} \phi_1$, and $\forall \phi_1, \phi_2.\; \phi_1 \wedge \phi_2 \xle{\esnd} \phi_2$.

\item[Universal Implication.] An operator~${\imp} : \Phi \times \power(\Phi) \to \Phi$ such that there exists an operator~\hbox{$\elambda{} \!\in\! E \!\to\! E$} and evidence~$\eeval \!\in\! E$ satisfying:
$\forall \phi_1, \phi_2, \vec{\phi}, e.\; (\forall \phi \in \vec{\phi}.\; \phi_1 \wedge \phi_2 \xle{e} \phi) \implies \phi_1 \xle{\elambda{e}} \phi_2 \imp \vec{\phi}$
and $\forall \phi_1, \vec{\phi}, \phi \in \vec{\phi}.\; (\phi_1 \imp \vec{\phi}) \wedge \phi_1 \xle{\eeval} \phi$, where we write $\vec{\phi}$ for an element of $\power(\Phi)$, \ie, a subset of $\Phi$. 
\end{description}
\end{definition}

Note that, while closely related, evidenced frames are not cartesian closed categories (ccc) since they neither impose nor require an equational theory.

Following the standard construction (in an untyped setting) of an evidenced frame / a realizability tripos, one could expect to define a proposition $\phi$ in $\efprop$ as the set of its realizers (obtained through the realizability translation), \ie as a set of closed programs $\vdash p: \ttrtype{}{\sprop}$ such that $\ttrspec{}{\sprop}{p}$. Accordingly, the evidencing relation $\phi \xle e \psi$ should then reflect the triple $\ehtriple{\ttrspec{}{\phi}{p}}{x}{\termapp{e}{p}}{\ttrspec{}{\psi}{x}}$. 

Nonetheless, as mentioned in \Cref{sec:EF}, it turns out that in a typed setting, these definitions do not induce an evidenced frame, let us briefly sketch why. 
Observe first that since the evidence $\eid$ has to be compatible with any proposition, in the presence of typed programs, $\eid$ should be compatible with any type to be well-behaved. One natural way to ensure this internally would be to consider polymorphic evidences. In particular, programs should then come with their types in propositions and applying an evidence should define a computation that also produces a program with its types.  This, in turn, would require computational features on types to compute, say, a product type provided with two types (due to the requirement for the conjunction of propositions) or or the first projection of a product type (for $\efst$), which $\hopl$ cannot handle. 
On the other hand, if the polymorphic status of evidences is handled externally (for instance with evidences defined as families of terms indexed by their input types), similar issues arise. 

In fact, this issues are not really surprising, since it was observed in previous work by Lietz and Streicher that in typed settings such as those coming from modified realizability, one obtains a tripos only if the type system admits a universal type, \ie if the setting is essentially untyped \cite{lietz02}. 
This is why, to overcome this, we will consider a type erasure function $\erase{\cdot}$ and define propositions as erasures of set of values.

Let us consider a pure instance of $\effhol$, we write $\Prog$ for its underlying set of programs which, for simplicity reasons, we assume to include pairs $\tpair{p_1}{p_2}$ of programs together with the corresponding projections $\pi_1$ and $\pi_2$, with the following extra axioms for the reduction $\betared$:
\[
\termvalue  ::= \ldots \mid \tpair{V_1}{V_2} 
\qquad\qquad\qquad
\pi_i\,\tpair{V_1}{V_2} \betared  V_i 
\]
We write $\Lambda$ for the set of untyped terms in the computational $\lambda$-calculus with pairs and we formally define the erasure map $\erase{\cdot}:\Prog\to\Lambda$ by:
\[
\begin{array}{rcl}
\erase{\termvar                             }&\eqdef & \termvar               \\
\erase{\termabs{\termvar}{\type}{\term}     }&\eqdef & \lambda \termvar.\erase{\term} \\
\erase{\termapp{\term_1}{\term_2}           }&\eqdef & \termapp{\erase{\term_1}}{\erase{\term_2}}\\
\erase{\termtypeabs{\typevar}{\kind}{\term} }&\eqdef & \erase{\term}                  \\
\end{array}
\qquad\vrule\qquad
\begin{array}{rcl}
\erase{\termtypeapp{\term}{\type}           }&\eqdef & \erase{\term}            \\
\erase{\termret{\term}                      }&\eqdef & \termret{\erase{\term}}\\
\erase{\termbind{\termvar}{\term_1}{\term_2}}&\eqdef & \termbind{\termvar}{\erase{\term_1}}{\erase{\term_2}}\\
\end{array}
\]
By considering the (meta) set-theoretic counterparts of $\effhol$'s logical constructs (replacing comprehension terms by comprehension, logical membership by membership, and quantification by meta-quantification), the definition of the modality on specifications carries over to sets of (erased) programs. In particular, for any set $A\subseteq \Prog$, we define ${\lift A \eqdef \{p\in{\Prog}\mid  \after{p}{x}{{x}\in A}\}}$, which is carried to set of erased programs by simply letting $\lift{\erase{A}} \eqdef \erase{\lift{A}}$.

As was illustrated in \Cref{sec:examples}, once carried over sets of programs,
the lifting operation $\lift{(\cdot)} $  satisfies properties coming from the theory of $\hopl$, namely the rules \monrule~/ \modintrorule~ / \modelimrule~ / \antiredtermrule:
\begin{property}\label{ppt}
     For any set of values $A,B\subseteq \termvalue$, we have:
\begin{enumerate}
\item \label{ppt:mon} If $A\subset B$, then $\lift A \subset \lift B$.
\item \label{ppt:ret} If $p\in A$, then $\termret{p}\in \lift A$.
\item \label{ppt:bind} If $p_1\in \lift{\scompbase{x}{\Prog}{p_2\subst{x_1:=x}\in\lift{A}}}$, then $\termbind{x_1}{p_1}{p_2}\in \lift A$,
\item \label{ppt:sat} $\lift{A}$ is always closed under anti-reduction.
\end{enumerate}
\end{property}
\newcommand{\reffact}[1]{Fact \ref{#1}}

Finally, recall that the components of the evidenced frame are defined as follows.

\[
\efprop \eqdef \{\erase{\progset}\mid P\subseteq\Prog ~\land~ \forall p\in\progset.\erase{p}\in\termvalue\}
\qquad\qquad\qquad\qquad
\efevd \eqdef \Lambda
\qquad\qquad\qquad\qquad
\phi_1 \xle e \phi_2\eqdef \forall p_1\in\phi_1.e\,p_1\in\lift{\phi_2}\]



These components define an evidenced frame for any pure instance of $\effhol$, but more generally, it is valid for any instance that satisfies the previous property.\\

\noindent\textbf{Theorem \ref{thm:ef}}.
For any instance such that Property \ref{ppt} holds,
\emph{$(\efprop,\efevd,\cdot \xle \cdot \cdot)$ is an evidenced frame.}

\begin{proof}
We prove that we have all the necessary constructs.
\begin{description}[leftmargin=*]
\item[Reflexivity] We let ~$\eid \eqdef \lambda x.\termret{x}$
and for any $\phi \in\efprop$, and any $\erase{p}=V\in\phi$, we get that
$\termapp{(\lambda x.\termret{x})}{V} \betared \termret{V}$.
Using \reffact{ppt:ret} of the Property~\ref{ppt}, we get that $\termret{V}\in\lift{\phi}$,
and we can thus conclude by anti-reduction that $\termapp{\eid}{V}\in \lift{\phi}$

\item[Transitivity]
We define $\ecompos{e_1}{e_2}\eqdef \lambda x_1.\termbind{x_2}{\termapp{e_1}{x}}{\termapp{e_2}{x_2}}$.
Let then $\phi_1, \phi_2, \phi_3\in\efprop$ and $e_1, e_2\in \efevd$ be such that {$\phi_1 \xle{e_1} \phi_2$} and $\phi_2 \xle{e_2} \phi_3$.
We show that $\phi_1 \xle{\ecompos{e_1}{e_2}} \phi_3$.
Let $V_1\in\phi_1$, and let us prove that
$(\ecompos{e_1}{e_2})\,V_1 \in\lift{\phi_3}$.
By anti-reduction, it suffices to prove that
$\termbind{x_2}{\termapp{e_1}{V_1}}{\termapp{e_2}{x_2}}\in\lift{\phi_3}$.
Using \reffact{ppt:bind}, it suffices to prove that
$e_1\,V_1\in \lift{\scompbase{x_2}{\Prog}{e_2\,x_2\in\lift{\phi_3}}}$.
Now, since by assumption on $e_2$, we have $e_2\,x_2\in \lift{\phi_3}$  for any $x_2\in\phi_2$, using \reffact{ppt:mon} we get that
$\lift{\phi_2}\subset \lift{\scompbase{x_2}{\Prog}{e_2\,x_2\in\lift{\phi_3}}}$.
We conclude by using the hypothesis on $e_1$ to get that $e_1\,V_1 \in\lift{\phi_2}$.~\\

\item[Top] Any non-empty set $\psi\in\efprop$ with a value $V\in\psi$
(for instance, with $V\eqdef\lambda x.\termret{x}$)
allows us to define $\!\top \eqdef \psi$ and $\etrue \eqdef \lambda x.\termret{V}$,
it is then direct to see that for any $\phi\in\efprop$,
$\phi \xle \etrue \top$.\\

\item[Conjunction]
For any $\phi_1,\phi_2 \in \efprop$,
we define $\phi_1\wedge \phi_2 \eqdef \{\erase{\tpair{p_1}{p_2}} \mid \erase{p_i}\in\phi_i\}$.
As for evidences, for any $e_1,e_2\in E$, we define $\efst\eqdef\lambda x.\termret{\pi_1\,x}$,
 $\esnd\eqdef\lambda x.\termret{\pi_2\,x}$
 and $\epair{e_1}{e_2}\eqdef \lambda x.\termbind{x_1}{e_1\,x}{\termbind{x_2}{e_2\,x}{\termret{\tpair{x_1}{x_2}}}}$.

Then for any $\tpair{V_1}{V_2}\in\phi_1\wedge\phi_2$, to prove that
$\termapp{\efst}{\tpair{V_1}{V_2}}\in\lift{\phi_1}$, by anti-reduction it suffices to show that
$\termret{\pi_1\,\tpair{V_1}{V_2}}\in\lift{\phi_1}$.
By \reffact{ppt:ret}, this follows from the fact that
${\pi_1\,\tpair{V_1}{V_2}}\in{\phi_1}$, which holds by anti-reduction since
$\pi_1\,\tpair{V_1}{V_2}\betared V_1\in\phi_1$.
The proof for $\esnd$ is analogous.

Last, if $\phi \xle{e_1} \phi_1$ and  $\phi \xle{e_2} \phi_2$, let us prove
that $\phi \xle{\epair{e_1}{e_2}} \phi_1 \wedge \phi_2$.
We have for any $V\in\phi$:
\[
\epair{e_1}{e_2}\,V\betared \termbind{x_1}{e_1\,V}{\termbind{x_2}{e_2\,V}{\termret{\tpair{x_1}{x_2}}}}
\]
Using anti-reduction and \reffact{ppt:bind}, it suffices to show that
\[e_1\,V\in
\lift{
  \scompbase{x_1}{\Prog}
  {\termbind{x_2}{e_2\,V}{\termret{\tpair{x_1}{x_2}}}
      \in\lift{\phi_1\wedge\phi_2}
  }
}
\]
By hypothesis, we know that $e_1\,V\in\lift{\phi_1}$, hence using \reffact{ppt:mon},
it is enough to prove that $\phi_1\subset \scompbase{x_1}{\Prog}
  {\termbind{x_2}{e_2\,V}{\termret{\tpair{x_1}{x_2}}}
      \in\lift{\phi_1\wedge\phi_2}
  }$.
This amounts to showing that $\termbind{x_2}{e_2\,V}{\termret{\tpair{x_1}{x_2}}}
      \in\lift{\phi_1\wedge\phi_2}$ for any $x_1\in\phi_1$, which, using \reffact{ppt:bind} again, follows from
\[
   e_2\,V\in \lift{  \scompbase{x_2}{\Prog}
  {\termret{\tpair{x_1}{x_2}} \in\lift{\phi_1\wedge\phi_2}
  }
  }
\]
Using that $e_2\,V\in\lift{\phi_2}$ and \reffact{ppt:mon} again, this follows from the fact that
for any $x_2\in\phi_2$, $\termret{\tpair{x_1}{x_2}} \in\lift{\phi_1\wedge\phi_2}$,
which holds from \reffact{ppt:ret} and the definition of $\phi_1\wedge\phi_2$.
\\

\item[Universal Implication]
We define the universal implication by
$\phi_1\imp \vec{\phi} \eqdef\{\lambda x.p \mid \forall V_1\in\phi_1.\forall \phi\in\vec{\phi}.p[x:=V_1] \in\lift{\phi}\}$\emnote{explain how we can be sure this is the erasure of some set of typed terms...}. As for the corresponding evidences, we define
$\elambda{e}\eqdef \lambda x_1.\termret{\lambda x_2.e\,\tpair{x_1}{x_2}}$ and
$\eeval\eqdef \lambda x.(\pi_1 x)\,(\pi_2\,x)$.
Proofs of the expected properties are then standard, and follow the line of the corresponding cases for the soundness proof.
To prove that $\eeval$ satisfies the expected properties, let 
$V_1\in\phi_1$, $\lambda x.p \in (\phi_1 \imp \vec{\phi})$ and $\phi\in\vec{\phi}$. 
We have 
\[\eeval\,\tpair{\lambda x.p}{V_1} = (\lambda x.(\pi_1 x)\,(\pi_2\,x))\,\tpair{\lambda x.p}{V_1}\betared \termapp{(\lambda x.p)}{V_1} \betared p[x:=V_1]\] 
By definition of $(\phi_1 \imp \vec{\phi})$, we know that $p[x:=V_1]\in\lift{\phi}$ and we can thus conclude that $(\phi_1 \imp \vec{\phi}) \wedge \phi_1 \xle{\eeval} \phi$ by anti-reduction~(\reffact{ppt:sat}).

As for $\elambda{-}$, let $\phi_1,\phi_2,\vec{\phi}$ and $e$ be such that 
$(\forall \phi \in \vec{\phi}.\; \phi_1 \wedge \phi_2 \xle{e} \phi)$. 
Let then $V_1\in\phi_1$. We need to show that 
$\elambda{e}\,V_1\in \lift{\phi_2 \imp \vec{\phi}} $.
We have:
\[\elambda{e}\,V_1 = (\lambda x_1.\termret{\lambda x_2.e\,\tpair{x_1}{x_2}})\,V_1 \betared
\termret{\lambda x_2.e\,\tpair{V_1}{x_2}}\]
hence using Facts \ref{ppt:ret} and \ref{ppt:sat} it suffices to show that 
$ \lambda x_2.e\,\tpair{V_1}{x_2}\in\phi_2 \imp \vec{\phi}$. 
This follows directly from the hypothesis on~$e$, since for any $V_2 \in \phi_2$ and any $\phi\in\vec{\phi}$,
we have $(\lambda x_2.e\,\tpair{V_1}{x_2})\,V_2 \betared e\,\tpair{V_1}{V_2}$. Since $\tpair{V_1}{V_2}\in\phi_1\land \phi_2$, the latter computation is indeed in $\lift{\phi}$.

\end{description}
\end{proof}





\end{appendices}
}

\end{document}